\newcommand{\arr}{\text{\textsl{\textcursive{r}}}}
\newcommand{\nord}[1]{\mathbf{:} #1 \mathbf{:}}
\DeclareSymbolFont{symbols2}{LS1}{stixfrak}{m}{n}
\DeclareMathSymbol{\typecolon}{\mathbin}{symbols2}{"25}
\newcommand{\altnord}[1]{\typecolon #1 \typecolon}
\title{Lorentzian 2d CFT from the pAQFT perspective}
\author{
    \small Sam Crawford \\
    \small\textit{University of York,}  \\
	\small\textit{Department of Mathematics,}\\
    \small\href{mailto:sam.crawford@york.ac.uk}{\texttt{sam.crawford@york.ac.uk}}
        \and
    \small Kasia Rejzner  \\
    \small\textit{University of York,}  \\
	\small\textit{Department of Mathematics,}\\
    \small\href{mailto:kasia.rejzner@york.ac.uk}{\texttt{kasia.rejzner@york.ac.uk}}
        \and
    \small Beno\^{\i}t Vicedo \\
    \small \textit{University of York,}  \\
	\small \textit{Department of Mathematics,}\\
    \small \href{mailto:benoit.vicedo@york.ac.uk}{\texttt{benoit.vicedo@york.ac.uk}}
}
\date{\today}
\begin{document}
    
\maketitle


\definecolor{todoKasia}{RGB}{207, 58, 174}
\definecolor{todoBenoit}{RGB}{87, 159, 222}



\ifdraft{
\noindent
{\color{themesection}\Large\bfseries Key:}
\begin{itemize}[itemsep=-1em]
    \item[] \hspace{-2.7em} \tikz[baseline=0.5ex]{\path[draw=black,fill=todogreen ] (0, 0) rectangle (0.4cm, 0.4cm);} :
                Boring things for Sam
    \item[] \hspace{-2.7em} \tikz[baseline=0.5ex]{\path[draw=black,fill=todoKasia ] (0, 0) rectangle (0.4cm, 0.4cm);} :
                Comments by Kasia
    \item[] \hspace{-2.7em} \tikz[baseline=0.5ex]{\path[draw=black,fill=todoBenoit ] (0, 0) rectangle (0.4cm, 0.4cm);} :
                Comments by Benoit
    \item[] \hspace{-2.7em} \tikz[baseline=0.5ex]{\path[draw=black,fill=todoorange] (0, 0) rectangle (0.4cm, 0.4cm);} :
                Format/layout queries
    \item[] \hspace{-2.7em} \tikz[baseline=0.5ex]{\path[draw=black,fill=todored   ] (0, 0) rectangle (0.4cm, 0.4cm);} :
                Mathematical queries/concerns
\end{itemize}
}

\listoftodos

\begin{abstract}
    We provide a detailed construction of the quantum theory of
    the massless scalar field on
    2-dimensional, globally-hyperbolic (in particular, Lorentzian) manifolds
    using the framework of perturbative algebraic quantum field theory.
    From this we obtain subalgebras of observables isomorphic to the Heisenberg
    and Virasoro algebras on the Einstein cylinder.
    We also show how the conformal version of general covariance,
    as first introduced by Pinamonti as an extension of the construction
    due to Brunetti, Fredenhagen and Verch,
    may be applied to the concept of natural Lagrangians in order to
    obtain a simple condition for the conformal covariance
    of classical dynamics, which is then shown to quantise
    in the case of a quadratic Lagrangian.
    We then compare the covariance condition for the stress-energy tensor
    in the classical and quantum theory in Minkowksi space,
    obtaining a transformation law dependent on
    the Schwarzian derivative of the transformed coordinate,
    in accordance with a well-known result in the Euclidean literature.
\end{abstract}

\tableofcontents

\section{Introduction}
One of the most important problems faced by mathematical physicists nowadays
is the search for mathematically rigorous formulation of quantum field theory (QFT).
Over the span of six decades, several axiomatic frameworks have been developed
(including algebraic quantum field theory
\cite{haagLocalQuantumPhysics1996,haagAlgebraicApproachQuantum1964}),
but none of them can yet claim to
include an interacting QFT model in 4 spacetime dimensions.
On the other hand, a lot is known about lower-dimensional cases
(prominently 2-dimensional) and in the presence of symmetries,
e.g. the conformal symmetry.
The huge success of conformal field theory (CFT) and
its ubiquity in theoretical physics is evidenced by
a vast trove of literature and impressive number of
results obtained throughout the history of the subject
\cite{
    belavinInfiniteConformalSymmetry1984,
    ginspargAppliedConformalField1988,francescoConformalFieldTheory1997,
    schottenloherMathematicalIntroductionConformal2008}.
In two dimensions, CFT plays a central role in the world-sheet description of string theory. More generally, they describe continuous phase transitions in condensed matter systems, critical points of renormalisation group flows in quantum field theories and provide duals to gravitational theories in anti-de Sitter spacetimes via the AdS/CFT correspondence. From a mathematical point of view, the rigorous formulation of two-dimensional Euclidean chiral CFT has led to the important development of vertex operator algebras (VOA), see e.g.
\cite{
    kacVertexAlgebrasBeginners1998,
    lepowskyIntroductionVertexOperator2004,
    frenkelVertexAlgebrasAlgebraic2004,
    beilinsonChiralAlgebras2004},
which has been instrumental in various areas of pure mathematics,
including the proof of the monstrous moonshine conjecture
\cite{frenkelVertexOperatorAlgebras1989,
    borcherdsVertexAlgebrasKacMoody1986,
    borcherdsMonstrousMoonshineMonstrous1992}
and in the study of the geometric Langlands correspondence
\cite{feiginAffineKacmoodyAlgebras1992,
    beilinsonQuantizationHitchinFibration1996,
    frenkelLecturesLanglandsProgram2005,
    frenkelLanglandsCorrespondenceLoop2007}.

CFT has also provided a rich class of models that
satisfy algebraic quantum field theory (AQFT) axioms,
as demonstrated for example in
\cite{
    gabbianiOperatorAlgebrasConformal1993,
    kawahigashiLocalConformalNets2006,
    kawahigashiClassificationLocalConformal2002,
    kawahigashiClassificationTwoDimensionalLocal2004,
    bischoffGeneralizedOrbifoldConstruction2017,
    bischoffCharacterization2DRational2015,
    longoLocalFieldsBoundary2004,
    longoAlgebraicConstructionBoundary2011}.
The  main principles of AQFT can also be applied to describe perturbative QFT.
This led to development of
\emph{perturbative algebraic quantum field theory} (pAQFT),
which started in the 90s
\cite{brunettiPerturbativeAlgebraicQuantum2009,
    brunettiMicrolocalAnalysisInteracting2000,
    brunettiInteractingQuantumFields1997,
    duetschPerturbativeAlgebraicField2001,
    duetschMasterWardIdentity2003,
    duetschCausalPerturbationTheory2004}
(see also \cite{rejznerPerturbativeAlgebraicQuantum2016} and
\cite{dutschClassicalFieldTheory2019} for a review).
The advantage of pAQFT is that it combines the ideas of AQFT with
the powerful methods of perturbation theory and renormalization and allows one to
construct physically interesting models in 4-dimensions, also on curved spacetimes.
However, the ultimate goal of pAQFT is
to understand how non-perturbative results could be obtained.
To this end,
it is useful to construct some known non-perturbative models using pAQFT methods and see how convergence and non-perturbative effects arise.
An example of such a model has been investigated in
\cite{bahnsQuantumSineGordonModel2018}. 
This paper is the first step in the research programme aimed at understanding how CFT fits into the framework of pAQFT. The advantages of such a combination are twofold: 
\begin{itemize}
    \item Many of the CFT results are proven only in the Euclidean signature. With the aid of pAQFT, we want to show how to obtain them in Lorentizian signature as well.
    \item Some powerful techniques used in CFT can be applied in pAQFT to obtain non-perturbative results.
\end{itemize}
In the present paper we concentrate on setting up the general framework, with particular focus on \emph{local conformal covariance}. We improve on existing results of \cite{pinamontiConformalGenerallyCovariant2009} and apply our methods to define normally-ordered covariant quantities, with Virasoro generators on a cylinder among them. We show that covariant normal-ordering allows one to reproduce the correct Virasoro algebra relations on the cylinder and we demonstrate how the usual ``Zeta regularisation'' trick can be rigorously understood as the change in the choice of normal ordering.

\section{Mathematical Preliminaries}

In this section, we provide an account of the constructions
of pAQFT relevant to our discussion.
For a more thorough exposition, the reader is directed towards
\cite{rejznerPerturbativeAlgebraicQuantum2016}.
In particular, whilst we may, from time to time,
discuss the possibility of interactions in the classical theory,
all of our quantum constructions shall be specific to the free scalar field.

We begin with the kinematics (i.e. states and observables)
of our classical theory.
Due to our use of deformation quantisation,
this will also establish the observables of the quantum theory.
Next, we address in \Cref{sec:classical_dynamics}
the matter of imposing  suitable dynamics on the system,
using the generalised Lagrangian formalism.
For an appropriately chosen Lagrangian,
we are then able to endow our space of observables with a Poisson structure.
At this point, the algebra is decidedly ``off-shell'',
as the field configurations we consider include
those which do not satisfy the equations of motion.
Therefore, in \Cref{sec:classical_BV}, we make a detour to examine how,
in the case of the free scalar field,
our construction does indeed recover the canonical
(i.e. ``equal-time'') Poisson bracket on-shell.
Here we also briefly explore the `dg' perspective of QFT
at the heart of the Costello-Gwilliam  formalism 
\cite{costelloFactorizationAlgebrasQuantum2016} and descriptions of
`higher' QFT as outlined in, for example,
\cite{beniniCategorificationAlgebraicQuantum2021}.

Satisfied with our choice of Poisson structure,
we then use it in \Cref{sec:deformation_quantisation} to deform
the pointwise product of functionals into an associative product $\star$,
which is analogous to the operator composition of canonical quantisation.
Once the quantum algebra has been established,
we discuss the comparison between classical and quantum observables.
The difficulty in `quantising' classical observables
is traditionally known as the \textit{ordering problem}.
In an attempt to find the most natural solution to this problem,
we then introduce in \Cref{sec:normal_ordering}
the concept of \textit{local covariance},
where we require our theory to be defined
in a coherent manner across multiple spacetimes.
This is so that we may be sure our ordering prescription is not dependent
on the global geometry of any particular spacetime
(which local algebras should in principle be unaware of).

\subsection{Classical Kinematics}
\label{sec:kin}

Let $M$ be a smooth manifold
(we shall specify dimension and topological constraints later).
For the theory of a real scalar field,
we take our configuration space, $\mathfrak{E}(M)$,
to be the space of smooth real-valued functions on $M$.
More generally, we might consider the space of smooth sections of
some vector bundle $E \overset{\pi}{\to} M$,
to which the following constructions can be readily generalised.
Note that this space is ``off-shell'' in the sense that it
includes field configurations which
may not satisfy any equations of motions later imposed by the dynamics.

Classically, observables are maps
$\mathcal{F}: \mathfrak{E}(M) \to \mathbb{C}$.
Typically, we also assume them to be smooth,
with respect to an appropriate notion of smoothness which we shall introduce shortly.
The derivative of a functional at a point
$\phi \in \mathfrak{E}(M)$ and in a direction
$h    \in \mathfrak{E}(M)$
is defined in the obvious way as
\begin{equation}
    \label{eq:directional_derivative}
    \big\langle \mathcal{F}^{(1)}[\phi], h \big\rangle
        :=
    \lim_{\epsilon \to 0}
        \frac{
            \mathcal{F}[\phi + \epsilon h] - \mathcal{F}[\phi]
        }{
            \epsilon
        },
\end{equation}
whenever this limit exists.
If it exists for all $\phi, h \in \mathfrak{E}(M)$,
and the map
\begin{equation*}
    \mathcal{F}^{(1)}:
    (\phi, h)
        \mapsto
    \big\langle \mathcal{F}^{(1)}[\phi], h \big\rangle
\end{equation*}
is continuous with respect to the product topology on
$\mathfrak{E}(M)^2$ then we say $\mathcal{F}$ is $\mathcal{C}^1$.

Higher derivatives of $\mathcal{F}$ are defined similarly by
\begin{equation}
    \left\langle 
        \mathcal{F}^{(n)}[\phi], 
        h_1 \otimes \cdots \otimes h_n
    \right\rangle
        :=
    \frac{
        \partial^n \mathcal{F}[\phi + \epsilon_1 h_1 + \cdots + \epsilon_n h_n]
    }{
        \partial \epsilon_1 \cdots \partial \epsilon_n
    }
    \big|_{\epsilon_1 = \cdots = \epsilon_n = 0},
\end{equation}
wherever these limits exist.
If, $\forall n \in \mathbb{N}$ and $\phi \in \mathfrak{E}(M)$,
$\mathcal{F}^{(n)}[\phi] \in \mathfrak{E}'(M^n)$ exists,
and the maps
\begin{align*}
    \mathcal{F}^{(n)}: 
        \mathfrak{E}(M) \times \mathfrak{E}(M^n)
            &\to
        \mathbb{C} \\
        \left( \phi, h_1 \otimes \cdots \otimes h_n \right) 
            &\mapsto
        \left\langle
            \mathcal{F}^{(n)}[\phi], h_1 \otimes \cdots \otimes h_n 
        \right\rangle
\end{align*}
are all continuous then we say $\mathcal{F}$ is \textit{Bastiani} smooth
as discussed in, for example \cite[\S II]{brouderPropertiesFieldFunctionals2018}.
We shall denote by $\mathfrak{F}(M)$ the space of Bastiani smooth functionals
of the real scalar field over $M$.

Various pieces of notation are commonly used when discussing functional derivatives.
For clarity, we collect some of them here.
Firstly, note that for a $\mathcal{C}^1$ functional $\mathcal{F}$,
the above definition implies that
$\mathcal{F}^{(1)}[\phi] \in \mathfrak{E}'(M)$,
using Schwartz's notation for compactly supported distributions.
Hence the bracket $\left\langle \cdot, \cdot \right\rangle$
in \eqref{eq:directional_derivative} can be seen as denoting the canonical
pairing $V' \times V \to \mathbb{C}$,
where $V$ is a topological vector space over $\mathbb{C}$
and $V'$ is its continuous dual space.
If $M$ is equipped with a preferred volume form%
\footnote{
    As we are only interested in Lorentzian manifolds,
    we always have the metric volume form.
    Our definitions of various classes of functionals assume a preferred volume form,
    other authors opt instead to define $\delta F / \delta \phi$
    as a \textit{distribution density} \cite[p.145]{hormanderAnalysisLinearPartial2015}.
}
$\mathcal{F}^{(1)}[\phi]$ may be given an integral kernel,
typically written as
\begin{equation}
    \label{eq:derivative_integral_kernel}
    \big\langle \mathcal{F}^{(1)}[\phi], h \big\rangle
        =
    \int_{M} 
        \frac{\delta \mathcal{F}[\phi]}{\delta \phi(x)}
        h(x)
        \, \mathrm{dVol}_M.
\end{equation}
Finally, we introduce the map,
for a $\mathcal{C}^1$ funcitonal $\mathcal{F}$,
$\tfrac{\delta}{\delta \phi}: \mathcal{F} \mapsto \mathcal{F}^{(1)}$.

A condition we will frequently impose is that,
for every $\phi \in \mathfrak{E}(M)$,
$\mathcal{F}^{(1)}[\phi]$ is a \textit{smooth} compactly-supported distribution.
This means that the integral kernel
$\delta \mathcal{F}/\delta \phi$
in \eqref{eq:derivative_integral_kernel} may be replaced by
some \textit{test function} $f_\phi \in \mathfrak{D}(M)$.
Here we again use the notation due to Schwartz, where $\mathfrak{D}(M)$
denotes the subspace of $\mathfrak{E}(M)$ containing
smooth functions with \textit{compact support}.
If one can similarly associate an element $f_{n, \phi} \in \mathfrak{D}(M^n)$
to $\mathcal{F}^{(n)}[\phi]$ for all $n, \phi$,
then we say that $\mathcal{F}$ is a \textit{regular} functional,
and we collect all such functionals into the space
$\mathfrak{F}_\mathrm{reg}(M)$.

As mentioned, for a Bastiani smooth functional $\mathcal{F} \in \mathfrak{F}(M)$,
$\mathcal{F}^{(n)}[\phi]$ will in general be
a compactly-supported distribution of $n$ variables.
We say this distribution is \textit{regular} if there exists
$f \in \mathfrak{D}(M^n)$ such that
$\forall h \in \mathfrak{E}(M^n)$
\begin{equation*}
    \left\langle 
        \mathcal{F}^{(n)}[\phi],
        h
    \right\rangle
        =
    \int_{M^n} 
        f(x_1, \ldots, x_n)
        h(x_1, \ldots, x_n)
        \mathrm{dVol}_M^n.
\end{equation*}
If $\mathcal{F}^{(n)}[\phi]$ is a regular distribution
for every $n \in \mathbb{N}$ and $\phi \in \mathfrak{E}(\mathcal{M})$,
then we say that $\mathcal{F}$ is a \textit{regular functional}.
We denote the space of regular functionals $\mathfrak{F}_\mathrm{reg}(M)$.

Regular functionals are particularly convenient to work with,
as we shall see when defining the Poisson bracket and $\star$ product later.
However, they exclude many functionals of physical interest,
such as components of the stress-energy tensor in the case of the scalar field.
Thus, we next consider the subspace of
$\mathfrak{F}(M)$ consisting of \textit{local} functionals.

Following \cite{rejznerPerturbativeAlgebraicQuantum2016}, we define a functional
$\mathcal{F} \in \mathfrak{F}(M)$ to be \textit{local} if there exists an open cover
$\bigcup_{\alpha \in \mathcal{A}} U_\alpha = \mathfrak{E}(M)$
such that, for $\phi \in U_\alpha$
\begin{equation}
    \mathcal{F}[\phi]
        =
    \int_M f_\alpha \left( j^k_x \phi \right) \mathrm{dVol}_M,
\end{equation}
where $j^k_x \phi$ is the $k^\text{th}$ jet prolongation of $\phi$ at $x$
(loosely $j^k_x \phi = (\phi(x), \nabla \phi(x), \ldots, \nabla^k \phi(x))$),
and $f_\alpha$ is some smooth, compactly-supported function on
the $k^\text{th}$ jet bundle of $M$.
We denote by $\mathfrak{F}_\mathrm{loc}(M)$ the space of local functionals on $M$,
and by $\mathfrak{F}_{\mathrm{mloc}}(M)$ the space of \textit{multilocal} functionals:
the algebraic completion of $\mathfrak{F}_\mathrm{loc}(M)$
under the pointwise product of functionals.

An important property of local functionals
\cite[Remark 3.2]{rejznerPerturbativeAlgebraicQuantum2016}
is that, for every $n \in \mathbb{N}$, $\phi \in \mathfrak{E}(M)$,
the support
\footnote{
    In the sense of distributions.
    See e.g. \cite[Definition 2.2.2]{hormanderAnalysisLinearPartial2015}.
}
of $\mathcal{F}^{(n)}[\phi]$ is contained within the thin diagonal
\begin{equation*}
    \Delta_n = \left\{ (x, \ldots, x) \in M^n \right\}_{x \in M}.
\end{equation*}
Immediately this implies that, for $n \geq 2$,
$\mathcal{F}^{(n)}[\phi]$ must either vanish or fail to be regular.
In other words, the intersection
$\mathfrak{F}_\mathrm{reg}(M) \cap \mathfrak{F}_\mathrm{loc}(M)$
comprises only the linear functionals
\begin{equation*}
    \Phi(f): \phi \mapsto \int_M f(x) \phi(x) \, \mathrm{dVol}_M,
\end{equation*}
for $f \in \mathfrak{D}(M)$.

Whilst it is possible to perform our classical and quantum operations
on local functionals, the result is typically not itself local.
As such, we need a space of functionals which is algebraically convenient,
like $\mathfrak{F}_\mathrm{reg}(M)$, but which also contains the physically important
subspace $\mathfrak{F}_\mathrm{loc}(M)$.
The space of \textit{mircocausal} functionals accomplishes this.
Unlike the previous classes of functionals however,
its definition requires more than a smooth structure on $M$.
Instead we require the structure of a \textit{spacetime},
which we define in accordance with 
\cite[\S2.1]{fewsterDynamicalLocalityCovariance2012} as follows:

\begin{definition}[Spacetime]
    \label{def:spacetime}
    A \textit{spacetime} is a tuple
    $\mathcal{M} = (M, g, \mathfrak{o}, \mathfrak{t})$
    such that $(M, g)$ is an orientable Lorentzian manifold
    of some fixed dimension $d$,
    $\mathfrak{o} \subset \Omega^d(M)$ is an equivalence class
    of nowhere-vanishing volume forms, defining an orientation,
    and $\mathfrak{t} \subset \mathfrak{X}(M)$ is an equivalence class of
    timelike vector fields, where
    $t \sim t' \Leftrightarrow g_x(t_x, t'_x) > 0 \forall x \in M$.
\end{definition}

We will typically write $\mathfrak{F}(\mathcal{M})$,
$\mathfrak{F}_{\mathrm{reg}}(\mathcal{M})$, and
$\mathfrak{F}_\mathrm{loc}(\mathcal{M})$
to refer to the respective spaces of functionals associated
to the underlying manifold of $\mathcal{M}$.

For any point $x$ in a spacetime $\mathcal{M}$,
we can define the closed past/future lightcone of the cotangent space
$\overline{V}_\pm(x) \subset T^*_x M$ as comprising covectors $k$
for which $\hat{g}_x(k, k) \geq 0$ and $\pm k(t_x) \geq 0$,
for any $t \in \mathfrak{t}$,
where $\hat{g}_x$ is the metric induced on $T^*_x M$ by $g$.
We can then define the fibre bundles
$\overline{V}_\pm \subset T^*M$ such that their fibres at $x$ are
$\overline{V}_\pm(x)$ respectively.

A functional $\mathcal{F} \in \mathfrak{F}(\mathcal{M})$
is \textit{microcausal} if it satisfies the
\textit{wavefront set spectral condition}
\begin{equation}
    \label{eq:spectral_condition}
    \mathrm{WF}(\mathcal{F}^{(n)}[\phi])
        \cap
    \left( 
        \overline{V}_+^n \cup \overline{V}_-^n
    \right)
        =
    \emptyset,
\end{equation}
For detailed definitions and properties of the wavefront set of a distribution,
see for example \cite[\S 8]{hormanderAnalysisLinearPartial2015},
as well as \cite{brouderSmoothIntroductionWavefront2014}.
Briefly put, the wavefront set is a way of characterising
the singularity structure of a distribution $T \in \mathfrak{D}'(M)$,
i.e. the precise manner in which $T$ fails to be a smooth function.
It consists of the set of \textit{non-zero} covectors
$(x, k) \in T^*M$ such that there exists no neighbourhood of
$x$ to which the restriction of $T$ is smooth,
and the Fourier transform
--
defined in an arbitrary chart, which turns out to be irrelevant
--
of $T$ fails to decay rapidly in the direction $k$.
The space of microcausal functionals is denoted
$\mathfrak{F}_{\mu c}(\mathcal{M})$,
and contains both the local and regular spaces of functionals.

The characteristic features of these spaces,
as well as the relations between them,
are summarised in the following diagram.

\begin{figure}[h]
    \centering
    \begin{tikzpicture}[scale=1.1]
        \draw (2.5, 2) rectangle (7.5, 1);
        \node at (3, 1.5) {$\mathfrak{F}$};
        \draw (3.5, 2) -- (3.5, 1);
        \node at (5.5, 1.5) {
            $\mathcal{F}^{(n)}[\phi]$ exists $\forall n, \phi$
            };        

        \draw (0, -2) rectangle (5, -3);
        \node at (.5, -2.5) {$\mathfrak{F}_\mathrm{loc}$};
        \draw (1, -2) -- (1, -3);
        \node at (3, -2.5) {
            $\mathrm{supp} \left( \mathcal{F}^{(n)}[\phi] \right) \subseteq \Delta_n$
            };

        \draw (6, -2) rectangle (11, -3);
        \node at (6.5, -2.5) {$\mathfrak{F}_\mathrm{reg}$};
        \draw (7, -2) -- (7, -3);
        \node at (9, -2.5) {
            $\mathrm{WF}\left( \mathcal{F}^{(n)} \right) = \emptyset$
            };

        \draw [right hook-latex] (3, -1.8) -- (3, -1.2);
        \draw [right hook-latex] (7, -1.8) -- (7, -1.2);
        \draw [right hook-latex] (5.5, 0.2) -- (5.5, 0.8);
        
        \draw (1.5, 0) rectangle (8.5, -1);
        \node at (2, -.5) {$\mathfrak{F}_\mathrm{\mu c}$};
        \draw (2.5, 0) -- (2.5, -1);
        \node at (5.5, -.5) {
            $
                \mathrm{WF}\left( \mathcal{F}^{(n)}[\phi] \right)
                    \cap
                \left( 
                    \overline{V}_+^n \cup \overline{V}_-^n
                \right)
                    =
                \emptyset
            $
        };
    \end{tikzpicture}
\end{figure}

\subsection{Classical Dynamics}
\label{sec:classical_dynamics}

There are many ways to specify the dynamics of a classical field theory.
In the present formalism it is achieved through a rigorous implementation of
the principle of critical action.
The foundational idea of this approach,
due to Peierls \cite{peierlsCommutationLawsRelativistic1952},
is the formulation of a Poisson structure in terms of
the advanced and retarded responses of a field to perturbation.
A construction of the classical algebra of observables using the Peierls bracket
was set forth in \cite{duetschMasterWardIdentity2003},
and developed in detail in
\cite{brunettiAlgebraicStructureClassical2019}.
More recent overviews may be found in,
e.g. \cite[\S 4]{rejznerPerturbativeAlgebraicQuantum2016}
or  \cite[\S 5.1]{fredenhagenPerturbativeConstructionModels2015}.

This approach has the advantage of being manifestly Poincar\'e covariant,
as will be explored further in \cref{sec:normal_ordering},
whilst still endowing our space of observables with a Poisson structure,
contrary to a common notion that a choice of Poisson structure requires one to
split a spacetime into `space' and `time'.

The issue with na\"ively written actions for common field theories,
such as the Klein-Gordon or Yang-Mills functionals,
is that their region of integration must be restricted to
a compact subset of spacetime in order to guarantee a finite value is returned.
A convenient way to achieve this is to define a map
$\mathcal{L}: \mathfrak{D}(\mathcal{M}) \to \mathfrak{F}_\mathrm{loc}(\mathcal{M})$,
where the functional $\mathcal{L}(f)$ is interpreted as the action functional with
an introduced cutoff function $f$.
Not every such map is suitable however, the necessary criteria are
outlined in the following definition
(after \cite[\S 4.1]{rejznerPerturbativeAlgebraicQuantum2016}).

\begin{definition}
    \label{def:generalised_Lagrangian}
    A map
    $
        \mathcal{L}:
        \mathfrak{D}(\mathcal{M})
            \to
        \mathfrak{F}_\mathrm{loc}(\mathcal{M})
    $
    is called a \textit{generalised Lagrangian}
    if it satisfies the following conditions:
    \begin{enumerate}
        \item   If $f, g, h \in \mathfrak{D}(\mathcal{M})$ such that
                $\mathrm{supp} \, f \cap \mathrm{supp} \, h = \emptyset$,
                then
                \begin{equation*}
                    \mathcal{L}(f + g + h) =
                    \mathcal{L}(f + g) - \mathcal{L}(g) + \mathcal{L}(g + h).
                    \hfill (\textit{Additivity})
                \end{equation*}
        \item   $\mathrm{supp} \, \mathcal{L}(f) \subseteq \mathrm{supp} \, f.$
                \hfill (\textit{Support})
        \item   If $\beta$ is an isometry of $(M, g)$
                which preserves orientation and time-orientation, then for
                $f \in \mathfrak{D}(\mathcal{M})$
                and
                $\phi \in \mathfrak{E}(\mathcal{M}),$
                \begin{equation*}
                    \mathcal{L}(f)[\beta^* \phi] = \mathcal{L}(\beta_* f)[\phi].
                    \hfill (\textit{Covariance})
                \end{equation*}
    \end{enumerate}
\end{definition}
This definition refers to the \textit{spacetime support},
which we denote $\mathrm{supp} \, \mathcal{F}$ for a functional $\mathcal{F}$.
This is the \textit{closure} of the set of points $x \in \mathcal{M}$ such that,
for all $\phi \in \mathfrak{E}(\mathcal{M})$,
there exists some perturbation localised to a neighbourhood of $x$,
say $\psi \in \mathfrak{D}(U)$ for some $U \ni x$,
which changes the output of $\mathcal{F}$,
i.e. $\mathcal{F}[\phi + \psi] \neq \mathcal{F}[\phi]$.
For example, if $x_0 \in \mathcal{M}$,
the spacetime support of the evaluation functional $(\phi \mapsto \phi(x_0))$
is just $\left\{ x_0 \right\}$.

A primary example is the generalised Lagrangian for
the Klein-Gordon field on a spacetime $\mathcal{M}$,
which is given by
\begin{equation}
    \label{eq:KG_Lagrangian}
    \mathcal{L}_\mathcal{M}(f)[\phi]
        :=
    \frac{1}{2} \int_\mathcal{M}
        f \left[ g(\nabla \phi, \nabla \phi) - m \phi^2 \right] \,
        \mathrm{dVol}_g,
\end{equation}
where $\nabla$ is the gradient operator associated to the
metric $g$ of $\mathcal{M}$ and $\mathrm{dVol}_g$ is its associated volume form.

Heuristically, one may think of the limit of $\mathcal{L}(f)$
as $f$ tends to a Dirac delta $\delta_x$ as describing the Lagrangian density at $x$
and, if $f$ instead tends to the constant function $\mathbf{1}$,
then $\mathcal{L}(f)$ becomes the action functional $S$.
However one must bear in mind that, in general,
these limits may not (and typically \textit{will} not) yield
well-defined local functionals.

Given a generalised Lagrangian $\mathcal{L}$,
we define the \textit{Euler-Lagrange derivative}
at a point $\phi \in \mathfrak{E}(\mathcal{M})$
as the distribution $S'[\phi]$ such that
\begin{equation}
    \label{eq:Euler_Lagrange}
    \left\langle \mathcal{L}(f)^{(1)}[\phi], h \right\rangle
        =:
    \left\langle S'[\phi], h \right\rangle.
\end{equation}
where,
$h \in \mathfrak{D}(\mathcal{M})$ and
$f \in \mathfrak{D}(\mathcal{M})$ is chosen such that
$f^{-1}\{1\}$ contains a neighbourhood of $\mathrm{supp} \, h$
\footnote{
    We opt for a slightly stronger condition on $f$ than is usual,
    this is ultimately insignificant, but it makes it easier to show that
    null Lagrangians (defined below) have vanishing Euler-Lagrange derivative
}.
One can use the additivity and support properties to verify that
$S'[\phi]$ is well-defined
(i.e. \eqref{eq:Euler_Lagrange} is independent of the choice of $f$).
A field configuration $\phi \in \mathfrak{E}(\mathcal{M})$
is called \textit{on-shell} if its Euler-Lagrange derivative
$S'[\phi]$ vanishes as a distribution.

Different choices of a generalised Lagrangian may
yield the same Euler-Lagrange derivative.
If a generalised Lagrangian $\mathcal{L}_0$ satisfies
$\mathrm{supp} \, \mathcal{L}_0(f) \subseteq \mathrm{supp} \, df$,
then clearly its Euler-Lagrange derivative vanishes for all
$\phi \in \mathfrak{E}(\mathcal{M})$.
In such a case, we describe $\mathcal{L}_0$ as \textit{null}.
One may add a null Lagrangian to an arbitrary generalised Lagrangian
without changing its Euler-Lagrange derivative.
Given this, we say that two generalised Lagrangians,
$\mathcal{L}$ and $\mathcal{L}'$ define the same \textit{action}
if their difference is null,
we denote this fact by $[\mathcal{L}] = [\mathcal{L}'] =: S$.

In the case where $S$ is a quadratic action,
(i.e. it may be represented by a Lagrangian $\mathcal{L}$
such that $\mathcal{L}(f)$ is a quadratic functional for all $f$)
the map
$\phi \mapsto \left\langle S'[\phi], h \right\rangle$
is linear in $\phi$.
We assume that this functional can be expressed in the form
$\phi \mapsto \left\langle P \phi, h \right\rangle$,
where $P$ is a \textit{normally hyperbolic} differential operator,
i.e. $P$ is a second order differential operator of the form
$\nabla^a \nabla_a +$ lower order terms.
A more precise definition of normally hyperbolic differential operators can be found in,
e.g.  \cite[\S 1.5]{barWaveEquationsLorentzian2007}.
As an example, given the free field Lagrangian \eqref{eq:KG_Lagrangian},
$P$ is simply the Klein-Gordon operator $-(\square + m^2)$.

For interacting theories, one must take a further functional derivative,
defining
\begin{equation}
    \left\langle \mathcal{L}(f)^{(2)}[\phi], h \otimes g \right\rangle
        =:
    \left\langle S''[\phi], h \otimes g \right\rangle,
\end{equation}
where $f$ is chosen as before.
For a broad class of physically interesting actions,
there exists a self-adjoint,
Green hyperbolic differential operator
(\cite[Definition 3.2]{baerGreenhyperbolicOperatorsGlobally2015}) $P[\phi]$ such that
\begin{equation}
    \label{eq:linearisation_hypothesis}
    \left\langle S''[\phi], h \otimes g \right\rangle
        =
    \left\langle P[\phi] g, h \right\rangle.
\end{equation}
We refer to $P[\phi] g = 0$ as the \textit{linearised equations of motion}
at the configuration $\phi$ and, if such an operator exists for every
$\phi \in \mathfrak{E}(\mathcal{M})$, we say that the action satisfies the
\textit{linearisation hypothesis}.
If $\phi$ is an on-shell configuration,
then $\mathrm{Ker} \, P[\phi]$ can be thought of as the tangent space
at $\phi$ to the manifold of on-shell configurations.
Note that for a free action, $P$ coincides with $P[\phi]$ for every
$\phi \in \mathfrak{E}(\mathcal{M})$.

Throughout this paper we assume all spacetimes to be
\textit{globally hyperbolic}.
A Lorentzian manifold
$\mathcal{M} = (M, g)$
is globally hyperbolic if there exists a diffeomorphism
$\rho: M \overset{\sim}{\to} \Sigma \times \mathbb{R}$,
such that, for every $t \in \mathbb{R}$,
$\rho^{-1}(\Sigma \times \{t\})$ is a Riemannian submanifold
(referred to as a \textit{Cauchy surface}) of $\mathcal{M}$.

The key feature of such spacetimes is the existence of
Green hyperbolic differential operators $P$,
characterised by the property that the Cauchy problem $P \phi = 0$ 
admits fundamental solutions
$E^{R/A}: \mathfrak{D}(\mathcal{M}) \to \mathfrak{E}(\mathcal{M})$
uniquely distinguished by the fact that, for any $f \in \mathfrak{D}(\mathcal{M})$
\begin{align}
    P E^{R/A} f
    &=
    E^{R/A}  P f
    =
    f,\\
    \label{eq:propagator_supports}
    \mathrm{supp} \left( E^{R/A} f \right)
    &\subseteq
    \mathscr{J}^\pm  ( \mathrm{supp}( f ) ).
\end{align}
Here $\mathscr{J}^\pm(K)$ denotes the causal future/past of $K$,
i.e. the set of all points connected to some point $x \in K$ by
a causal future/past directed curve respectively.
We call these maps the \textit{retarded/advanced propagator} respectively.
For detailed explanation and proof of the relevant existence and uniqueness theorems,
we refer the reader again to \cite{barWaveEquationsLorentzian2007}.

Each propagator is formally adjoint to the other in the sense that,
for all $f, g \in \mathfrak{D}(\mathcal{M})$
\begin{equation}\label{eq:adjoint_pair}
    \left\langle f, E^R g \right\rangle
        =
    \left\langle g, E^A f \right\rangle.
\end{equation}
Their difference $E = E^R - E^A$,
known as the \textit{Pauli-Jordan function},
defines a map
from $\mathfrak{D}(\mathcal{M})$ to the space of solutions of $P \phi = 0$,
and is vital to our construction of a covariant Poisson structure on phase space.

Note that here and in the following we are considering a free theory,
governed by the single linear equation $P \phi = 0$.
However, to generalise to the interacting case, one need only replace $P$
with the linearised operator $P[\phi]$ defined by
\eqref{eq:linearisation_hypothesis},
and note that the fundamental solutions are then defined relative to this linearised operator.

Recall that the phase space of a free field theory is simply the space
$\mathrm{Ker} \, P$ of solutions to the equations of motion.
Traditionally, we identify this with the space of Cauchy data on some fixed surface,
i.e. the strength and momentum-density of a field at some fixed time.
\cite[Proposition 3.4.7]{barWaveEquationsLorentzian2007}
states that \textit{all} solutions with
spacelike-compact support may expressed as $Ef$ for some
$f \in \mathfrak{D}(\mathcal{M})$
and also that the kernel of this map is precisely $P(\mathfrak{D}(\mathcal{M}))$.
In other words, we can identify our phase space with the quotient
$\mathfrak{D}(\mathcal{M}) / P(\mathfrak{D}(\mathcal{M}))$.
One \textit{could} then define the algebra of observables on $\mathcal{M}$
to be the space of smooth maps from this space to $\mathbb{C}$,
which can be equipped with a non-degenerate Poisson bracket
using $E$ as a bivector.
This is not, however, the approach that we shall take,
which we outline below.

Given two regular functionals
$\mathcal{F}, \mathcal{G} \in \mathfrak{F}_\mathrm{reg}(\mathcal{M})$,
we can use $E$ to define a new functional
\begin{align}
    \label{eq:peierls_bracket}
    \left\{ \mathcal{F}, \mathcal{G}\right\} [\phi]
        :=
    \left\langle \mathcal{F}^{(1)}[\phi], E \mathcal{G}^{(1)}[\phi] \right\rangle
\end{align}
called the \textit{Peierls bracket} of $\mathcal{F}$ and $\mathcal{G}$,
where we recall that $\mathcal{F}^{(1)}[\phi]$ and $\mathcal{G}^{(1)}[\phi]$
may be identified with smooth test functions when
$\mathcal{F}$ and $\mathcal{G}$ are regular.
\textit{Local} functionals also possess this property,
hence we can define the Peierls bracket of local functionals,
though $\mathfrak{F}_\mathrm{loc}(\mathcal{M})$
is \textit{not} closed under this operation.

To obtain a closed algebra, we extend the domain of the Pauli-Jordan function
to include a suitable class of distributions.
As shown in \Cref{sec:closure_proofs},
the pairing $\left\langle f, Eg \right\rangle$
is well defined if $f$ and $g$ are compactly-supported distributions
satisfying the $(n = 1)$ wavefront set spectral condition
\eqref{eq:spectral_condition}.
In particular, this means \eqref{eq:peierls_bracket} is well defined for
$\mathcal{F}, \mathcal{G} \in \mathfrak{F}_{\mu c}(\mathcal{M})$,
and one can show (see \Cref{sec:closure_proofs}) that the result
is again a microcausal functional.
Once it is established that $\left\{ \cdot, \cdot \right\}$
is also a derivation over the pointwise product of functionals,
we may conclude that
$\left( \mathfrak{F}_{\mu c}(\mathcal{M}), \cdot, \{\cdot, \cdot\} \right)$
is a Poisson algebra \cite[Theorem 4.1.4]{brunettiAlgebraicStructureClassical2019},
which we shall denote $\mathfrak{P}(\mathcal{M})$.
In the next section, it is precisely this Poisson structure we shall deform
in order to arrive at the quantum algebra.

\subsection{Going On Shell}
\label{sec:classical_BV}

We shall now explain how this formalism distinguishes between
on-shell and off-shell observables.
Recall for the free theory we claimed that on-shell observables
could be defined as maps from
$\mathfrak{D}(\mathcal{M}) / P(\mathfrak{D}(\mathcal{M}))$
or equivalently the space of on-shell configurations,
to $\mathbb{C}$.
Broadly speaking, the strategy is to identify this space of maps
as a quotient of $\mathfrak{F}_{\mu c}(\mathcal{M})$
by a suitable ideal.

A well-known result states that,
given a manifold $X$ with some closed submanifold $Y \subseteq X$,
there is an isomorphism
\begin{equation}
    \mathcal{C}^\infty(Y)
        \simeq
    \mathcal{C}^\infty(X) / \mathcal{I}(Y),
\end{equation}
where $\mathcal{I}(Y) \subseteq \mathcal{C}^\infty(X)$ is
the ideal of functions vanishing on $Y$.
The construction of the Poisson algebra of on-shell observables
may be regarded as an infinite-dimenional analogue of this isomorphism,
where $\mathcal{C}^\infty(X)$ is replaced with $\mathfrak{F}_{\mu c}(\mathcal{M})$.
We define the ideal $\mathfrak{I}_S \subseteq \mathfrak{F}_{\mu c}(\mathcal{M})$
to be the set of functionals which vanish for all on-shell configurations,
i.e.
$\forall \mathcal{F} \in \mathfrak{I}_S, P \phi = 0 \Rightarrow \mathcal{F}[\phi] = 0$.

Crucially, $\mathfrak{I}_S$ is an ideal with respect not only
to the pointwise product $\cdot$, but also with respect
to the Peierls bracket $\{\cdot, \cdot\}$.
This can be proved from \eqref{eq:peierls_bracket} because,
if $\phi$ is a solution, $\mathcal{F} \in \mathfrak{I}_S$,
and $\mathcal{G} \in \mathfrak{F}_{\mu c}(\mathcal{M})$ then
$\phi + \epsilon E \mathcal{G}^{(1)}[\phi]$ is also a solution for any $\epsilon > 0$,
hence
\begin{equation}
    \mathcal{F}[\phi + \epsilon E \mathcal{G}^{(1)}[\phi]]
        =
    0,
\end{equation}
i.e.
$
    \{
        \mathcal{F},
        \mathcal{G}
    \}[\phi]
        =
    \left\langle 
          \mathcal{F}^{(1)}[\phi],
        E \mathcal{G}^{(1)}[\phi]
    \right\rangle
        =
    0,
$
indicating that
$
    \left\{ \mathfrak{F}_{\mu c}(\mathcal{M}), \mathfrak{I}_S \right\}
        \subseteq
    \mathfrak{I}_S
$
as desired.
Therefore, we may construct the quotient Poisson algebra
$\mathfrak{P}(\mathcal{M}) / \mathfrak{I}_S$
with the Poisson bracket given by
$\{[\mathcal{F}], [\mathcal{G}]\} := \left[\{\mathcal{F}, \mathcal{G}\}\right]$,
which we call the \textit{on-shell Peierls bracket}.

Defining the on-shell algebra as a quotient of two functional spaces, emphasises the algebraic viewpoint on geometry, where a space of maps on an algebraic variety or a topological vector space is used to describe the space itself. The advantage of this viewpoint will become even more apparent after we present a convenient way of characterising $\mathfrak{I}_S$.

We have already seen variations of the form
$
    \left\langle 
        S'[\cdot], h
    \right\rangle
$,
noting that an on-shell configuration $\phi$ is precisely one
for which the above functional vanishes,
for any $h \in \mathfrak{D}'(\mathcal{M})$.
We can identify $h$ with a constant section of the tangent bundle
$
    T \mathfrak{E}(\mathcal{M})
        \simeq
    \mathfrak{E}(\mathcal{M}) \times \mathfrak{D}(\mathcal{M}),
$
which we denote $X_h$.
Allowing such sections to act on functionals via derivation
(in the obvious way), we can rewrite the above functional as
$X_h \cdot \mathcal{L}(f)$ for any
$f \in \mathfrak{D}(\mathcal{M})$ which is suitable
in the manner specified after \eqref{eq:Euler_Lagrange}.
To discuss more general variations,
we must first discuss a suitable notion of a vector field.

A complete definition of the space of microcausal vector fields
requires a few subtleties, and may be found in
\cite[\S 4.4]{rejznerPerturbativeAlgebraicQuantum2016}.
There it is also noted how such vector fields are related
to the space of microcausal observables on the \textit{shifted cotangent bundle},
$T^*[1]\mathfrak{E}(\mathcal{M})$.
Let $\mathfrak{V}_{\mu c}(\mathcal{M})$ denote the space of
microcausal vector fields.
To every functional $\mathcal{F} \in \mathfrak{F}_{\mu c}(\mathcal{M})$
we can associate a \textit{one-form} $d \mathcal{F}$,
i.e. a smooth map $\mathfrak{V}_{\mu c}(\mathcal{M}) \to \mathbb{C}$ by
$d \mathcal{F}(X) = X \cdot \mathcal{F}$.
An important characteristic of any $X \in \mathfrak{V}_{\mu c}(\mathcal{M})$
is that there exists a compact subset $K \subset \mathcal{M}$
such that $X[\phi] \in \mathfrak{D}(K)$
for all $\phi \in \mathfrak{E}(\mathcal{M})$.
This means we can define a one-form
$\delta_S(X) = d\mathcal{L}(f) (X)$,
where $f \equiv 1$ on a neighbourhood $K$.
We call $\delta_S(X)$ the \textit{variation of the action with respect to} $X$.

The principle of critical action for $\phi \in \mathfrak{E}(\mathcal{M})$
can be expressed as the condition that,
\begin{equation}
    \delta_S(X)[\phi] \equiv 0,
        \qquad
    \forall X \in \mathfrak{V}_{\mu c}(\mathcal{M}).
\end{equation}
Hence, it is clear that all functionals which arise as a
variation of the action under a vector field must vanish on-shell,
in other words,
$\delta_S(\mathfrak{V}_{\mu c}(\mathcal{M})) \subseteq \mathfrak{I}_S(\mathcal{M})$.
If the action satisfies certain regularity conditions
\cite[\S 4.4]{henneauxLecturesAntifieldBRSTFormalism1990},
it is possible to show that all functionals vanishing on-shell arise this way,
i.e. the image of $\delta_S$ is \textit{precisely} $\mathfrak{I}_S(\mathcal{M})$.

We can begin to see some of the higher structure of this formalism by extending
the differential
$\delta_S: \mathfrak{V}_{\mu c}(\mathcal{M}) \to \mathfrak{F}_{\mu c}(\mathcal{M})$
to the exterior algebra of $\mathfrak{V}_{\mu c}(\mathcal{M})$
(graded such that the degree of
$\bigwedge^k \mathfrak{V}_{\mu c}(\mathcal{M})$ is $-k$).
This yields the cochain complex
\begin{equation}
    \cdots
        \overset{\delta_S}{\longrightarrow}
    {\bigwedge}^3 \mathfrak{V}_{\mu c}(\mathcal{M})
        \overset{\delta_S}{\longrightarrow}
    {\bigwedge}^2 \mathfrak{V}_{\mu c}(\mathcal{M})
        \overset{\delta_S}{\longrightarrow}
                \mathfrak{V}_{\mu c}(\mathcal{M})
        \overset{\delta_S}{\longrightarrow}
                \mathfrak{F}_{\mu c}(\mathcal{M})
        \longrightarrow
                0,
\end{equation}
where $\delta_S$ is defined in lower degrees via the graded Leibniz rule:
for example, a homogeneous element
$X \wedge Y \in \bigwedge^2 \mathfrak{V}_{\mu c}(\mathcal{M})$
is mapped to
$\delta_S(X \wedge Y) = \delta_S(X) Y - \delta_S(Y) X$.
We call this the \textit{Koszul complex associated to} $\delta_S$,
denoted $\mathfrak{K}(\delta_S)$.

One can show that the Peierls bracket also extends to a degree zero
Poisson bracket across the entire complex,
and that $\delta_S$ is a derivation over this bracket
(i.e. the pair $\left( \mathfrak{K}(\delta_S), \left\{ \cdot, \cdot \right\} \right)$
is a dg Poisson algebra).
In particular, for a vector field $X \in \mathfrak{V}_{\mu c}(\mathcal{M})$
and a functional $\mathcal{F} \in \mathfrak{F}_{\mu c}(\mathcal{M})$,
this means that
$\delta_S \left\{ X, \mathcal{F} \right\} = \left\{ \delta_S X, \mathcal{F} \right\}$
(as $\delta_S \mathcal{F} = 0$ for any functional $\mathcal{F}$).
In turn, this establishes that $\delta_S(\mathfrak{V}_{\mu c}(\mathcal{M}))$
is an ideal of the Peierls bracket,
and hence that the cohomology of this complex in degree 0
naturally inherits a Poisson structure.
Given the fact that $\delta_S(\mathfrak{V}_{\mu c}(\mathcal{M})) = \mathfrak{I}_S$,
this cohomology is
$H^0(\mathfrak{K}(\delta_S)) = \mathfrak{F}_{\mu c}(\mathcal{M}) / \mathfrak{I}_S$,
which we thus call the
\textit{on-shell algebra of observables}.

It is, at this point, natural to ask whether or not there exists
a physical interpretation of $H^{-1}(\mathfrak{K}(\delta_S))$,
or the cohomology in yet lower degrees.
To answer the first, note that for a vector field $X$,
$\delta_S(X) = 0$ implies that the infinitesimal transformation
$\phi \mapsto \phi + \epsilon X[\phi]$
leaves the action invariant to first order in $\epsilon$.
As such, the kernel of $\delta_S$ in degree $-1$ comprises
infinitesimal generators of \textit{gauge symmetries}.
The image of $\delta_S$ in degree $-1$
contains vector fields of the form
$\delta_S(X \wedge Y) = \delta_S(X) Y - \delta_S(Y) X$.
In the physics literature these are referred to as
\textit{trivial gauge symmetries}.
They are, in a sense, less insightful
because they are defined the same way regardless of the action in question,
and also because they act trivially on shell.
As such, we can regard $H^{-1}(\mathfrak{K}(\delta_S))$ as the space of
\textit{non-trivial gauge symmetries}
\footnote{
    In principle, one can go further
    \cite[Introduction \S 3.2]{costelloFactorizationAlgebrasQuantum2016},
    interpreting elements of
    $H^{-2}(\mathfrak{K}(\delta_S))$ as ``symmetries between symmetries'',
    however, such notions are tricky to formulate precisely and are
    well beyond the scope of this article.
}.

The above discussion motivates us to consider the space
$\bigwedge^\bullet \mathfrak{V}_{\mu c}$
as the primary kinematical object of a physical theory,
with $\delta_S$ representing the choice of dynamics.
This perspective is advantageous both in describing
conformally covariant field theories
(where the generalised Lagrangian formalism proves inconvenient)
as well as in the formulation of chiral sectors of a 2-dimensional CFT,
where one may require choices of $\delta_S$ which
cannot arise from a generalised Lagrangian.

Finally, as an aside
now that we have constructed our on-shell algebra,
it is informative to make a comparison to the `equal-time' (a.k.a. canonical) bracket
defined relative to some choice of Cauchy surface $\Sigma$.

\begin{definition}[Canonical Poisson Algebra]
    Let $\Sigma \subset \mathcal{M}$ be a Cauchy surface,
    we define the associated \emph{canonical Poisson algebra} as follows:
    The underlying vector space $\mathfrak{F}_\mathrm{can}(\Sigma)$ consists of functionals
    $F: \mathcal{C}^\infty_c(\Sigma) \times \mathcal{C}^\infty_c(\Sigma) \to \mathbb{C}$ which are Bastiani smooth, the arguments of this functional represent the initial field strength and momentum on $\Sigma$ of some on-shell field configuration.
    Given a pair $F, G$ of such functionals, their canonical bracket is then defined as
    \begin{equation}
        \{
            F, G\}_\mathrm{can}[\varphi, \pi]
            :=
        \int_\mathrm{\Sigma}\left[
                \frac{\delta F[\varphi, \pi]}{\delta \varphi (x)}
                \frac{\delta G[\varphi, \pi]}{\delta \pi  (x)}
                    -
                \frac{\delta G[\varphi, \pi]}{\delta \varphi (x)}
                \frac{\delta F[\varphi, \pi]}{\delta \pi  (x)}
        \right] \mathrm{dVol}_\Sigma.
    \end{equation}
\end{definition}

It is not immediately obvious why
the Peierls bracket should be related to this canonical bracket,
other than because $E$ parametrises the space of on-shell field configurations.
Especially as the canonical bracket requires
a particular Cauchy surface to be specified,
a manifestly Lorentz non-covariant choice.
However,
by sending the initial data
$(\varphi, \pi) \in \mathfrak{E}(\Sigma) \times \mathfrak{E}(\Sigma)$,
to their corresponding solution, one can construct a map
$\mathfrak{F}_{\mu c}(\mathcal{M}) \to \mathfrak{F}_\mathrm{can}(\Sigma)$
which in turn yields a Poisson algebra homomorphism
from the on-shell Peierls bracket to the canonical
\cite[\S 3.2]{fredenhagenPerturbativeConstructionModels2015}.

\subsection{Deformation Quantisation}
\label{sec:deformation_quantisation}

Having established our Poisson structure, the next step is to 
deform it to construct our \textit{quantum} algebra of observables.
Here we take an approach that is analogous to Moyal-Weyl quantisation,
though in QFT this is made somewhat harder than in the quantum-mechanical case,
due to the infinite degrees of freedom in the configuration space.
In particular, as is common in perturbative QFT,
our deformation shall be formal, meaning that quantised products will be
formal power series in $\hbar$,
allowing us to ignore the issue of proving convergence of our formulae.

For \textit{regular} functionals
$\mathcal{F}, \mathcal{G} \in \mathfrak{F}_\mathrm{reg}(\mathcal{M})$
we can define the \textit{star product} of
$\mathcal{F}$ and $\mathcal{G}$ directly as
\begin{equation}
    (
        \mathcal{F} \star
        \mathcal{G}
    )[\phi]
        =
    \mathcal{F}[\phi]\mathcal{G}[\phi]
        +
    \sum_{n \geq 1}
        \left( \frac{i \hbar}{2} \right)^n
        \frac{1}{n!}
        \left\langle 
            E^{\otimes n},
            \mathcal{F}^{(n)}[\phi] \otimes
            \mathcal{G}^{(n)}[\phi]
        \right\rangle.
\end{equation}
We may write this formula more concisely as
\begin{equation}\label{eq:star_prod}
    \mathcal{F} \star \mathcal{G}
        :=
    m \circ e^{\tfrac{i \hbar}{2}
    \left\langle
        E, \tfrac{\delta}{\delta \phi} \otimes \tfrac{\delta}{\delta \phi} 
    \right\rangle}
    \left( \mathcal{F} \otimes \mathcal{G} \right),
\end{equation}
where $m$ is the pointwise multiplication map
\begin{equation*}
        m(\mathcal{F} \otimes \mathcal{G})[\phi]
            :=
        (\mathcal{F} \otimes \mathcal{G})
        [\phi \otimes \phi]
            =
        \mathcal{F}[\phi] \cdot \mathcal{G}[\phi].
\end{equation*}
A general result
\cite[Proposition 4.5]{hawkinsStarProductInteracting2019}
states that this exponential form guarantees $\star$ is associative.
As mentioned, this deformation is formal,
meaning we have actually defined a map
$
    \star:
    \mathfrak{F}_\mathrm{reg}(\mathcal{M})
        \otimes
    \mathfrak{F}_\mathrm{reg}(\mathcal{M})
        \to
    \mathfrak{F}_\mathrm{reg}(\mathcal{M})[[\hbar]].
$
We can then define the $\star$ product on
$\mathfrak{F}_\mathrm{reg}(\mathcal{M})[[\hbar]]$ by linearity
to obtain a closed algebra.

Writing the first few terms explicitly, we see
$
    \mathcal{F} \star \mathcal{G}
        =
    \mathcal{F} \cdot \mathcal{G}
        + \frac{i \hbar}{2} \left\{\mathcal{F}, \mathcal{G}\right\}
        + \mathcal{O}(\hbar^2).
$
Thus the classical term of $\star$ (i.e. the coefficient of $\hbar^0$)
is simply the pointwise product
and the Dirac quantisation rule also holds modulo terms of order $\hbar^2$,
hence $\star$ is a deformation of the classical product in the sense of
\cite[\S 5.1]{rejznerPerturbativeAlgebraicQuantum2016}.
However, if we wished to apply \eqref{eq:star_prod} to other local functionals,
divergences would begin to appear.
Consider for example the family of quadratic functionals,
for $f \in \mathfrak{D}(\mathcal{M})$
\begin{equation}
    \Phi^2(f)[\phi]
        :=
    \int_\mathcal{M} f(x) \phi^2(x) \, \mathrm{dVol}_x.
\end{equation}
A na\"ive computation of the star product for two such functionals would yield
\begin{align}
    \begin{split}
        \Phi^2(f) \star \Phi^2(g)
            \,\text{``}\!=\!\text{''} \,
        &\Phi^2(f) \cdot \Phi^2(g)
        + \tfrac{i \hbar}{2} \left\{ \Phi^2(f), \Phi^2(g) \right\} \\
        &- \frac{\hbar^2}{2} \int_{\mathcal{M}^2}
            f(x) E^2(x; y) g(y)
            \, \mathrm{dVol}_x \, \mathrm{dVol}_y.
    \end{split}
\end{align}
In general, the $\mathcal{O}(\hbar^2)$ term of this product is ill-defined if
$\mathrm{supp} f \cap \mathrm{supp} g \neq \emptyset$.
This is because $E$ is a distribution, as opposed to a smooth function,
and the product of two distributions cannot be defined in general.

The solution is to make use of a \textit{Hadamard distribution}.
Physically, a Hadamard distribution is
the $2$-point correlator function for some `vacuum-like' state,
i.e. $W(x_1, x_2) = \left\langle \Phi(x_1) \star \Phi(x_2) \right\rangle$.
More precisely, a complex-valued distribution
$W \in \mathfrak{D}'(\mathcal{M}^2; \mathbb{C})$ is
\textit{Hadamard} if it satisfies the following properties
\cite{rejznerPerturbativeAlgebraicQuantum2016}
\begin{itemize}
    \item[\textbf{H0}]%
            The wavefront set of $W$ satisfies
            \begin{equation}
                \label{eq:Hadamard_WFS}
                \mathrm{WF}(W)
                    =
                \left\{ 
                    (x, y; \xi, \eta) \in \mathrm{WF}(E)
                        \,|\,
                    (x; \xi) \in \overline{V}_+
                \right\}
            \end{equation}
    \item[\textbf{H1}]%
            $
                W = \tfrac{i}{2}E + H,
            $
            where $H$ is a symmetric, real distribution.
    \item[\textbf{H2}]%
        $W$ is a weak solution to $P$.
    \item[\textbf{H3}]%
        $W$ is positive semi-definite in the sense that,
        $\forall f \in \mathfrak{D}(\mathcal{M}; \mathbb{C})$
        $\left\langle W, \bar{f} \otimes f \right\rangle \geq 0$.
\end{itemize}
Importantly, property \textbf{H0} implies that $W$ satisfies
the H\"ormander criterion \cite[Theorem 8.2.10]{hormanderAnalysisLinearPartial2015},
ensuring that pointwise powers $W^n$ are well-defined.

A choice of Hadamard distribution yields a corresponding star product by
\begin{equation}
    \label{eq:star_H_prod_defn}
    \mathcal{F} \star_H \mathcal{G}
        :=
    m \circ e^{\left\langle \hbar W, \tfrac{\delta}{\delta \phi} \otimes \tfrac{\delta}{\delta \phi} \right\rangle}
        \left( \mathcal{F} \otimes \mathcal{G} \right).
\end{equation}

Note that any freedom in the choice of a Hadamard state $W$ lies
solely in the choice of its symmetric part $H$.
As such, we shall denote by $\mathrm{Had}(\mathcal{M})$
the set of bi-distributions $H$ such that
$\tfrac{i}{2}E + H$ is a Hadamard distribution as per the above definition.

The product $\star_H$ is well-defined for regular functionals
for all $H \in \mathrm{Had}(\mathcal{M})$,
where it is in fact isomorphic to $\star$:
if we define the map
$
    \alpha_H:
    \mathfrak{F}_\mathrm{reg}(\mathcal{M}) \to
    \mathfrak{F}_\mathrm{reg}(\mathcal{M})
$ by
\begin{equation}
    \label{eq:alpha_map}
    \alpha_H \mathcal{F}
        = 
    e^{\tfrac{\hbar}{2}\big\langle H, \tfrac{\delta^2}{\delta \phi^2} \big\rangle}
        \mathcal{F},
\end{equation}
then
$
    \alpha_H \left( \mathcal{F} \star \mathcal{G} \right)
        =  
    \left( \alpha_H \mathcal{F} \right) \star_H
    \left( \alpha_H \mathcal{G} \right),
$
for any 
$\mathcal{F}, \mathcal{G} \in \mathfrak{F}_\mathrm{reg}(\mathcal{M})$
and the inverse of this map is simply $\alpha_{-H}$.
Where these two products differ, however,
is that $\star_H$ can also be extended to a well defined product
on $\mathfrak{F}_{\mu c}(\mathcal{M})$.

On a generic globally hyperbolic spacetime,
it is well-known \cite{fullingSingularityStructureTwopoint1981}
that there exist infinitely many Hadamard distributions,
thus we need never fear that $\mathrm{Had}(\mathcal{M})$ is empty.
However, there is usually no natural way of selecting
\textit{which} $H \in \mathrm{Had}(\mathcal{M})$ to use.
Thus, whilst we can always construct a well defined algebra
\begin{equation}
    \label{eq:A^H_algebras}
    (\mathfrak{F}_{\mu c}(\mathcal{M})[[\hbar]], \star_H)
        =:
    \mathfrak{A}^H(\mathcal{M})
\end{equation}
for an arbitrary globally hyperbolic spacetime $\mathcal{M}$,
it would be unnatural to define \textit{the} quantum algebra
by making such an arbitrary choice.
Fortunately, the algebraic structure of
$\mathfrak{A}^H(\mathcal{M})$
is actually independent of the Hadamard distribution selected:
if $H, H' \in \mathrm{Had}(\mathcal{M})$, then
\begin{equation}\label{alpha-H-Hp-relation}
    \alpha_{H' - H} \left( \mathcal{F} \star_{H} \mathcal{G} \right)
        =  
    \left( \alpha_{H' - H} \mathcal{F} \right) \star_{H'}
    \left( \alpha_{H' - H} \mathcal{G} \right),
\end{equation}
where
$\alpha_{H' - H}: \mathfrak{A}^H(\mathcal{M}) \to \mathfrak{A}^{H'}(\mathcal{M})$
is defined just as in \eqref{eq:alpha_map}.
As one might expect, the inverse of this map is $\alpha_{H - H'}$,
hence all of our candidate algebras are in fact \textit{isomorphic} to one another.
One way in which we can define the quantum algebra without
any undue preference to a particular Hadamard distribution is as follows.

\begin{definition}
    The \textit{quantum algebra of the free field theory},
    denoted $\mathfrak{A}(\mathcal{M})$,
    is a unital, associative $*$-algebra whose elements are the indexed sets
    $\mathcal{F} = \left( \mathcal{F}_H \right)_{H \in \mathrm{Had}(\mathcal{M})}$,
    subject to the compatibility criterion
    \begin{equation}\label{eq:compatibility}
        \mathcal{F}_{H'} = \alpha_{H' - H} \mathcal{F}_H,
    \end{equation}
    with a product defined by
    \begin{equation}
        \mathcal{F}
            \star
        \mathcal{G}
            :=
        \left(
            \mathcal{F}_H \star_H \mathcal{G}_H
        \right)_{H \in \mathrm{Had}(\mathcal{M})}.
    \end{equation}
\end{definition}

It is important to bear in mind that,
whilst we have deformed the classical algebra $\mathfrak{F}_{\mu c}(\mathcal{M})$
into a quantum algebra $\mathfrak{A}(\mathcal{M})$,
we have not yet specified a \textit{quantisation map},
embedding classical observables into the quantum algebra.
We will need to establish such a map before computing commutation relations
for the quantum stress energy tensor in \cref{sec:Virasoro}.
However, before considering what a suitable choice of map may be,
it is instructive to study how the construction we have just outlined
varies as we change the underlying spacetime $\mathcal{M}$.

\subsection{Local Covariance and Normal Ordering}
\label{sec:normal_ordering}

We have deliberately said little about
potential spacetime symmetries in the construction above.
The reason being that we take the perspective that covariance
under any symmetries a particular spacetime may enjoy is just
a special case of a broader property we wish to implement:
namely \textit{local covariance}.
The concept of local covariance,
introduced in \cite{hollandsLocalWickPolynomials2001} and
\cite{brunettiGenerallyCovariantLocality2003},
unites the representation of spacetime symmetries as
automorphisms of the algebra of observables with
the principle that an observable localised to a region
$\mathcal{O} \subset \mathcal{M}$
of a spacetime should be `unaware' of the structure of the spacetime
beyond this region.

The foundational idea is that, if there exists a suitable embedding of
a spacetime $\mathcal{M}$ into a spacetime $\mathcal{N}$,
then there should be a corresponding embedding
(more precisely, a \textit{homomorphism}) of observables
$\mathfrak{A}(\mathcal{M}) \to \mathfrak{A}(\mathcal{N})$.
A spacetime symmetry is just a suitable embedding of $\mathcal{M}$ into itself
which also admits an inverse.
If the corresponding algebra homomorphism is similarly invertible,
then we would have, in particular,
an action of the isometry group of $\mathcal{M}$ on $\mathfrak{A}(\mathcal{M})$
as desired.

To formulate local covariance more precisely,
it is convenient to invoke the language of category theory.
To begin with, by specifying the suitable embeddings of spacetimes,
we endow the collection of globally hyperbolic spacetimes
with the structure of a \textit{category},
which is denoted $\mathsf{Loc}$ and defined as follows:
\begin{itemize}
    \item   An object of $\mathsf{Loc}$ is a \textit{spacetime} $\mathcal{M}$,
            as specified in \cref{def:spacetime}, of a fixed dimension $d$.
    \item   For a pair of spacetimes
            $\mathcal{M} = (M, g, \mathfrak{o}, \mathfrak{t})$ and
            $\mathcal{N} = (N, g', \mathfrak{o}', \mathfrak{t}')$,
            a morphism $\chi: \mathcal{M} \to \mathcal{N}$ is a smooth embedding
            $\chi: M \hookrightarrow N$ which is \textit{admissible} in the sense that
            $\chi^* g' = g$,
            $\mathfrak{o} = \chi^* \mathfrak{o}'$, and
            $\mathfrak{t} = \chi^* \mathfrak{t}'$.
\end{itemize}

Given an admissible embedding $\chi: \mathcal{M} \to \mathcal{N}$,
there is a natural map
$\mathfrak{F}_{\mu c}(\mathcal{M}) \to \mathfrak{F}_{\mu c}(\mathcal{N})$
defined by
$\mathcal{F} \mapsto \chi_* \mathcal{F} := \mathcal{F} \circ \chi^*$.
We show later in \Cref{sec:conformally_covariant_classical}
that even if $\chi$ preserves the metric only up to a scale,
then $\mathcal{F}\circ\chi^*$ is still microcausal
whenever $\mathcal{F}$ is,
hence in particular
$
    \chi_*\left( \mathfrak{F}_{\mu c}(\mathcal{M}) \right)
    \subset \mathfrak{F}_{\mu c}(\mathcal{N})
$
for all $\mathsf{Loc}$ morphisms $\chi: \mathcal{M} \to \mathcal{N}$.
In fact, all of the different spaces of functionals specified in
\Cref{sec:kin} are each preserved under the map $\chi_*$,
and thus may be considered functors from $\mathsf{Loc}$ to
some category of observables.

Next, we need to find a way to specify dynamics in a coherent way
across all spacetimes.
This involves extending the generalised Lagrangian framework to
the concept of a \textit{natural Lagrangian}.
In categorical language, we can define
a natural Lagrangian as a natural transformation
$\mathcal{L}: \mathfrak{D} \Rightarrow \mathfrak{F}_\mathrm{loc}$,
such that for each $\mathcal{M} \in \mathsf{Loc}$,
$\mathcal{L}_\mathcal{M}$ is a generalised Lagrangian
as per \Cref{def:generalised_Lagrangian}.
Here, $\mathfrak{D}$ is the functor assigning each spacetime its
space of compactly-supported test functions,
and to each morphism $\chi: \mathcal{M} \to \mathcal{N}$
the map $\chi_*: \mathfrak{D}(\mathcal{M}) \to \mathfrak{D}(\mathcal{N})$
defined by
\begin{equation}
    \label{eq:test_function_pushforward}
    \chi_*f(y)
        =
    \begin{cases}
        f(\chi^{-1}(y)) & \text{ if } y \in \chi(\mathcal{M}),\\
        0             & \text{else}.
    \end{cases}
\end{equation}

Spelling this out, the naturality condition reduces to the condition that,
for every morphism of spacetimes $\chi: \mathcal{M} \to \mathcal{N}$,
$f \in \mathfrak{D}(\mathcal{M})$ and $\phi \in \mathfrak{E}(\mathcal{N})$
\begin{equation}
    \label{eq:locally_covariant_classical_field}
    \mathcal{L}_\mathcal{N}(\chi_*f)[\phi] = \mathcal{L}_\mathcal{M}(f)[\chi^*\phi],
\end{equation}
which is essentially a generalisation of the covariance condition appearing in
\Cref{def:generalised_Lagrangian} and can be shown to be satisfied by
the Klein-Gordon Lagrangian \eqref{eq:KG_Lagrangian}.

From the naturality condition, one can then show that
if $\chi: \mathcal{M} \to \mathcal{N}$,
then the Euler-Lagrange derivatives of
$\mathcal{L}_\mathcal{M}$ and $\mathcal{L}_\mathcal{N}$
are related by the equation, $\forall \phi \in \mathfrak{E}(\mathcal{N})$
\begin{equation}
    \label{eq:Euler_Lagrange_covariance}
    \chi^* S_\mathcal{N}'[\phi] = S'_\mathcal{M}[\chi^* \phi]
\end{equation}
and, in the case of the free scalar field,
the causal propagators arising from
$\mathcal{L}_\mathcal{M}$ and $\mathcal{L}_\mathcal{N}$
are related by
$E_\mathcal{N}(\chi_*f, \chi_*g) = E_\mathcal{M}(f, g)$.
From here, it can be deduced that
$\chi_*: \mathfrak{F}_{\mu c}(\mathcal{M}) \to \mathfrak{F}_{\mu c}(\mathcal{N})$
is a Poisson algebra homomorphism where each space is equipped with
its respective Peierls bracket, hence the assignment
$\mathfrak{P}(\mathcal{M})$ outlined in
the above section is \textit{covariant}
(i.e. it defines a functor from $\mathsf{Loc}$ to the category of Poisson algebras).

We shall use the generic designation $\mathsf{Obs}$
to denote the category our observables (either classical or quantum) belong to.
Choices of $\mathsf{Obs}$ relevant to our discussion include
\begin{itemize}
    \item   $\mathsf{Vec}$, whose objects are vector spaces over $\mathbb{C}$,
            and whose morphisms are linear maps.
            This is the most generic space generally considered,
            and is appropriate when one wishes to treat 
            classical and quantum theories on an equal footing.
    \item   $\mathsf{Poi}$ the category of Poisson algebras and
            Poisson algebra homomorphisms.
            This is the primary category of observables for
            classical theories.
    \item   $*\text{-}\mathsf{Alg}$, the space of topological $*$-algebras.
            We choose this as the target category of quantum theories,
            as the perturbative nature of our construction requires us to consider
            unbounded operators, else we would use instead the category
            of $C^*$-algebras.
    \item   In each of the above cases, we may add a \textit{dg-structure},
            i.e. if $\mathsf{Obs}$ is any of the above categories,
            $\mathsf{Ch(Obs)}$ comprises cochain complexes which
            in each degree take values in $\mathsf{Obs}$.
            Such categories are at the heart of the BV formalism
            in both the classical and quantum case
            \cite{gwilliamRelatingNetsFactorization2019},
            \cite{costelloFactorizationAlgebrasQuantum2016}.
\end{itemize}

A \textit{locally covariant field theory} (classical or quantum)
is then defined simply as a functor from $\mathsf{Loc} \to \mathsf{Obs}$.
Already this captures a lot of important features,
such as the representation of spacetime symmetries as
automorphisms of the algebra of observables.
Whilst one can go further by imposing additional axioms for such a functor
to satisfy, this general definition will suffice for our purposes.

The BV formalism outlined in the previous section can also be made locally covariant.
Just like $\mathfrak{F}_{\mu c}$, we can easily promote
$\mathfrak{V}_{\mu c}$ to a functor $\mathsf{Loc} \to \mathsf{Vec}$.
A choice of natural Lagrangian then yields a natural transformation
between the two, $\delta_S: \mathfrak{V}_{\mu c} \Rightarrow \mathfrak{F}_{\mu c}$.
From this it follows that the construction of the Koszul complex
$\mathfrak{K}(\delta_S)$ itself defines a functor
$\mathsf{Loc} \to \mathsf{Ch(Poi)}$.

We have already sketched an explanation of how our construction of
the classical theory may be made locally covariant.
If $H_0 \in \mathrm{Had}(\mathcal{N})$,
then one can show that $\chi^* H_0 \in \mathrm{Had}(\mathcal{M})$,
thus we can define a map
$\mathfrak{A}^{(\chi^* H_0)}(\mathcal{M}) \to \mathfrak{A}^{H_0}(\mathcal{N})$
as just the canonical extension of the pushforward
$\chi_*: \mathfrak{F}_{\mu c}(\mathcal{M}) \to \mathfrak{F}_{\mu c}(\mathcal{N})$
to formal power series in $\hbar$.
This map satisfies
\begin{equation}
    \chi_*(\mathcal{F} \star_{(\chi^* H_0)} \mathcal{G})
        =
    \chi_* \mathcal{F} \star_{H_0} \chi_* \mathcal{G}
\end{equation}
thus it defines a $*$-algebra homomorphism.
The map $\mathfrak{A}\chi: \mathfrak{A}(\mathcal{M}) \to \mathfrak{A}(\mathcal{N})$
is then given by
\begin{equation}
    \left(
        \mathfrak{A}\chi \mathcal{F}
    \right)_{H_0}
        =
    \chi_* \mathcal{F}_{(\chi^* H_0)},
\end{equation}
which can be shown to satisfy the criteron \eqref{eq:compatibility},
making the map well-defined.
With these morphisms, we can then declare
$\mathfrak{A}: \mathsf{Loc} \to \mathsf{Obs}$
to be a \textit{locally covariant quantum field theory}.

Next, we turn to the topic of \textit{normal ordering}.
On a fixed spacetime $\mathcal{M}$, normal ordering is
the process of mapping (some subset of) classical observables
into the space of quantum observables.
In our case, we seek a map
$
    \nord{-}_\mathcal{M}:
    \mathfrak{F}_\mathrm{loc}(\mathcal{M})
        \to
    \mathfrak{A}(\mathcal{M}),
$
such that the $\hbar^0$ coefficient of $\nord{\mathcal{F}}_\mathcal{M}$
is $\mathcal{F}$.
Given our somewhat indirect definition of $\mathfrak{A}(\mathcal{M})$,
it is helpful to outline here the general strategy for defining
a normal ordering prescription,
before we turn our attention to any particular maps.

It is easiest to define a normal ordering prescription by a family of map
$\mathfrak{F}_\mathrm{loc}(\mathcal{M}) \to \mathfrak{A}^H(\mathcal{M})$
for every choice of $H \in \mathrm{Had}(\mathcal{M})$.
Suppose we denote each such map as $\mathcal{F} \mapsto (: \mathcal{F} :)_H$,
they collectively define a map
$\mathfrak{F}_\mathrm{loc}(\mathcal{M}) \to \mathfrak{A}(\mathcal{M})$
if they satisfy, for every $H, H' \in \mathrm{Had}(\mathcal{M})$,
$\mathcal{F} \in \mathfrak{F}_\mathrm{loc}(\mathcal{M})$
\begin{equation}
    \label{eq:normal_ordering_compatibility}
    (:\mathcal{F}:)_H = \alpha_{H - H'} (: \mathcal{F} :)_{H'}.
\end{equation}

By choosing a fixed Hadamard state $H_0 \in \mathrm{Had}(\mathcal{M})$,
we can define a quantisation map which has the physical interpretation of
normal ordering ``with respect to'' that state.
As indicated above, we first define a map
$\mathfrak{F}_{\mathrm{loc}}(\mathcal{M}) \to \mathfrak{A}^H(\mathcal{M})$ by
\begin{equation}
    \mathcal{F} \mapsto \alpha_{H - H_0} \mathcal{F} =: (\altnord{\mathcal{F}}_{H_0})_H.
\end{equation}
This clearly satisfies the criterion \eqref{eq:normal_ordering_compatibility} above,
and hence is a valid normal ordering prescription.
We may also characterise this prescription as the only consistent choice
such that the map
$\mathfrak{F}_\mathrm{loc}(\mathcal{M}) \to \mathfrak{A}^{H_0}(\mathcal{M})$
is simply the inclusion of $\mathfrak{F}_\mathrm{loc}(\mathcal{M})$
into $\mathfrak{F}_{\mu c}(\mathcal{M})[[\hbar]]$,
the underlying vector space of $\mathfrak{A}^{H_0}(\mathcal{M})$.

Similar to our definition of a natural Lagrangian,
a \textit{locally covariant ordering prescrition} is defined to be
a natural transformation from $\mathfrak{F}_\mathrm{loc}$ to $\mathfrak{A}$.
(%
    Note that we must assume that the target category of each functor is
    $\mathsf{Vec}$,
    as normal ordering is linear, but not a homomorphism.%
)
Explicitly, this naturality condition is realised by the equation,
for every admissible embedding $\chi: \mathcal{M} \to \mathcal{N}$,
\begin{equation}
    \nord{\chi_* \mathcal{F}}_\mathcal{N}
        =
    \mathfrak{A}\chi \left( \nord{\mathcal{F}}_\mathcal{M} \right).
\end{equation}

It is tempting to believe that a covariant prescription across all spacetimes can
be found by selecting a suitable Hadamard state for each spacetime.
However, it is now a well-established fact that
such a choice cannot be made consistently across all spacetimes.
(See the remarks following definition 3.2 of
\cite{hollandsLocalWickPolynomials2001} for a discussion relevant to the scalar field,
and \cite[\S6.3]{fewsterDynamicalLocalityCovariance2012} for a more general result.)

The solution is to instead define an ordering prescription dependant upon
the Hadamard \textit{parametrix} of the spacetime in question.
Before the characterisation via wavefront sets used in \eqref{eq:Hadamard_WFS},
Hadamard states were defined by the ability to express them locally
(i.e. in some neighbourhood of the thin diagonal $\Delta \subset \mathcal{M}^2$)
in what is known as \textit{local Hadamard form}.
In the case of a $2$-dimensional spacetime, the local Hadamard form of a state $W$ is
\begin{equation}\label{eq:local_Hadamard_form}
    W(x; y)
        :=
    -\frac{1}{4 \pi}
    \lim_{\epsilon \searrow 0} \left(
        \left( \sum_{n = 0}^{N}
            v_n(x; y)\sigma(x; y)^n
        \right)
        \log \left(\frac{\sigma_\epsilon(x; y)}{\lambda^2}\right)
        + w_N(x; y)
    \right),
\end{equation}
where $\sigma(x; y)$ is the world function,
defined as half the geodesic between $x$ and $y$,
$t$ is some choice of a time function
(i.e. level sets of $t$ are Cauchy surfaces),
$\sigma_{\epsilon}$ is defined by
\begin{equation}
    \sigma_\epsilon(x; y)
        :=
    \sigma(x; y) + 2i \epsilon\left( t(x) - t(y) \right) + \epsilon^2,
\end{equation}
$w_N$ is some $2N + 1$ times continuously differentiable function,
and the functions $v_n$ are all smooth, symmetric, and
determined entirely by the metric of $\mathcal{M}$
\cite{hollandsLocalWickPolynomials2001}.

The series of distributions
$
    \left(W^\mathrm{sing}_N := W - w_N
    \right)_{N \in \mathbb{N}},
$
constitute the Hadamard parametrix,
which is independent of the choice of state.
The parametrix defines a normal ordering prescription,
first as a map
$\mathfrak{F}_\mathrm{loc}(\mathcal{M}) \to \mathfrak{A}^H(\mathcal{M})$
\begin{equation}
    \label{eq:locally_covariant_ordering}
    \left(\nord{\mathcal{F}}_\mathcal{M}\right)_H
        =
    \lim_{N \to \infty} \alpha_{H - H^\mathrm{sing}_N} \mathcal{F},
\end{equation}
where $H^\mathrm{sing}_N = W^\mathrm{sing}_N - \tfrac{i}{2}E$.
This map is defined for any local functional $\mathcal{F}$ because
the order $N$ at which we must truncate the series in
\eqref{eq:local_Hadamard_form}
depends only on the \textit{order} of the functional $\mathcal{F}$.
This corresponds to the highest order derivative of a field
configuration $\phi$ which enters into the definition of $\mathcal{F}[\phi]$,
and is guaranteed to be finite
\cite[\S 6.2.2]{rejznerPerturbativeAlgebraicQuantum2016}.
For instance, if $\mathcal{F}$ has order $n$, then
$
    \alpha_{H - H^\mathrm{sing}_N} \mathcal{F}  
        =
    \alpha_{H - H^\mathrm{sing}_n} \mathcal{F}
$
for all $N \geq n$, thus this series always converges in finite time.
From now on we shall supress both the truncation of the series,
as well as the limit in \eqref{eq:locally_covariant_ordering}.
Instead we shall write
$\left(\nord{\mathcal{F}}_\mathcal{M}\right)_H = \alpha_{H - H^\mathrm{sing}} \mathcal{F}$,
where one may interpret $H^\mathrm{sing}$ as
$H^\mathrm{sing}_N$ for a sufficiently large $N$.

We can then verify that, for $H, H' \in \mathrm{Had}(\mathcal{M})$
\begin{equation}
    (\nord{\mathcal{F}}_\mathcal{M})_{H}
        =
    \alpha_{H - H^\mathrm{sing}} \mathcal{F}
        =
    \alpha_{H - H'} \circ \alpha_{H' - H^\mathrm{sing}} \mathcal{F}
        =
    \alpha_{H - H'} \left( \nord{\mathcal{F}}_\mathcal{M} \right)_{H'},
\end{equation}
i.e. the family of functionals
$
    \big(
        \left( \nord{\mathcal{F}}_\mathcal{M} \right)_H
    \big)_{
        H \in \mathrm{Had}(\mathcal{M})
    }
$
satisfies the compatibility criterion \eqref{eq:compatibility},
hence the map
$
    \nord{-}_\mathcal{M}:
    \mathfrak{F}_\mathrm{loc}(\mathcal{M}) \to
    \mathfrak{A}(\mathcal{M})
$
is well defined.

Crucially, the Hadamard parametrix is also locally covariant
\footnote{
    This is a direct consequence of the fact that
    $\chi^*: \mathrm{Had}(\mathcal{N}) \to \mathrm{Had}(\mathcal{M})$,
    and that the difference of any pair of elements in $\mathrm{Had}(\mathcal{M})$
    is smooth.
}:
If $H^\mathrm{sing}_{\mathcal{M} / \mathcal{N}}$ are the
(symmetrised) Hadamard parametrices for two spacetimes
$\mathcal{M}$, $\mathcal{N}$,
related by a $\mathsf{Loc}$ morphism $\chi: \mathcal{M} \to \mathcal{N}$,
then $\chi^* H^\mathrm{sing}_\mathcal{N} = H^\mathrm{sing}_\mathcal{M}$.
Thus, we can use the fact that
$
    \left( \chi_* \mathcal{F} \right)^{(n)} [\phi]
        =
    \left( \chi_* \right)^{\otimes n} \mathcal{F}^{(n)} [\chi^* \phi]
$, to show
\begin{equation}
    \alpha_{H - H^\mathrm{sing}_\mathcal{N}} \left( \chi_* \mathcal{F} \right)
        =
    \chi_* \left( 
                \alpha_{\chi^*(H - H^\mathrm{sing}_\mathcal{N})} \mathcal{F}
            \right).
\end{equation}
On the left hand side, we have simply
$\left(\nord{\chi_* \mathcal{F}}_\mathcal{N}\right)_H$,
whereas on the right hand side, once we note that
$
    \alpha_{\chi^*(H - H^\mathrm{sing}_\mathcal{M})} \mathcal{F}
        =
    \alpha_{\chi^* H - H^\mathrm{sing}_\mathcal{M} } \mathcal{F}
        =
    \left( \nord{\mathcal{F}}_\mathcal{M} \right)_{\chi^*H}
$,
we see that this is
$\left( \mathfrak{A}\chi \nord{\mathcal{F}}_\mathcal{M} \right)_H$ as required.

\section{The Massless Scalar Field on a Cylinder}
\label{sec:heisenberg_virasoro}

Now that we have constructed both a classical and quantum
algebra of observables, and introduced several ordering maps between them,
we may study their finer details in an explicit example.
As our ultimate goal is to understand
conformal field theory from the perspective of pAQFT,
the massless scalar field is the obvious place to begin.
Moreover, owing to its flat geometry and compact Cauchy surfaces,
the Einstein cylinder $\mathcal{M}_\mathrm{cyl}$
--
    defined as the image of $2$D Minkowski space, $\mathbb{M}_2$,
    under the identification $(t, x) \sim (t, x + 2\pi)$
--
provides a natural and convenient setting in which to explore the chiral aspects of
the massless scalar field within the pAQFT framework.

In this section, we shall see how
the quantum algebra of observables for the massless scalar field
contains a pair of Heisenberg algebras and a pair of Virasoro algebras,
one each for the left and right null-derivatives of the field.
In the construction of the Virasoro algebra,
we shall also see that the principle of local covariance
outlined in \Cref{sec:normal_ordering} is necessary to recover
the `radially-ordered' form of the Virasoro algebra.
The argument involved in this re-ordering constitutes
a mathematically rigourous form of the known trick of identifying
$1 + 2 + 3 + \cdots = \zeta(-1)$.

\subsection{Minkowski Space}

We begin by finding the causal propagator for the
massless scalar field in Minkowski space.
From this we shall later obtain the propagator for the cylinder,
and hence the Poisson algebra $\mathfrak{P}(\mathcal{M}_\mathrm{cyl})$.
Moreover we shall begin to see how the classical Poisson algebra of
the massless scalar field naturally contains two chiral subalgebras.

The equation of motion for the massless scalar field on Minkowski space is simply
\begin{equation}
    \label{eq:EoM}
    -\left( \partial_t^2 - \partial_x^2 \right) \phi = 0.
\end{equation}
This is easiest to solve if we adopt null coordinates
$u = t - x$, $v = t + x$.
The fundamental solutions $E^{R/A}$ to \eqref{eq:EoM} must then satisfy
\begin{equation}
    \label{eq:adv_ret_propagators}
    4 \frac{\partial}{\partial u} \frac{\partial}{\partial v} E^{R/A}(u, v; u', v')
        =
    - 2 \delta(u - u') \delta(v - v').
\end{equation}
By inspection one can then deduce that the distributions
\begin{equation}
    E^{R/A}(u, v; u', v')
        =
    -\frac{1}{2}\theta(\pm (u - u') ) \theta(\pm (v - v') )
\end{equation}
both satisfy \eqref{eq:adv_ret_propagators} and have the desired supports.
Taking their difference we find the Pauli-Jordan function to be
\begin{equation}
    E(u, v; u', v')
        =
    -\frac{1}{2} \left[
        \theta(u - u')\theta(v - v') - \theta(u' - u)\theta(v' - v)
    \right].
\end{equation}
We can rewrite this propagator in the form
\begin{equation}
    E(u, v; u', v')
        =
    -\frac{1}{4} \left[
        \mathrm{sgn}(u - u') +
        \mathrm{sgn}(v - v')
    \right],
\end{equation}
where $\mathrm{sgn}(x) = \theta(x) - \theta(-x)$.
In other words, we can decouple the $u$-dependent terms from the $v$-dependent,
defining the summands
\begin{equation}
    E = E^\ell + E^\arr,
\end{equation}
such that $E^\ell$ does not depend on $v$ and \textit{vice-versa}.

This split is significant for functionals which depend on the field configuration $\phi$
only through its left/right null derivative.
If we indicate the action of the differential operator
$\partial_u$ on a functional $\mathcal{F}$ by
$(\partial_u^* \mathcal{F})[\phi] := \mathcal{F}[\partial_u \phi]$,
then the functional derivative of $\partial_u^* \mathcal{F}$ is given by
\begin{equation}
    (\partial_u^* \mathcal{F})^{(1)}[\phi]
        =
    - \partial_u \mathcal{F}^{(1)}[\partial_u \phi].
\end{equation}
Consequently, the Peierls bracket of two such functionals is
\begin{equation}
    \label{eq:derivative_peierls}
    \{ \partial_u^* \mathcal{F}, \partial_u^* \mathcal{G} \}[\phi]
        =
    \left\langle
        (\partial_u \otimes \partial_u)E,
        \mathcal{F}^{(1)}[\partial_u \phi] \otimes
        \mathcal{G}^{(1)}[\partial_u \phi]
    \right\rangle.
\end{equation}
This equality motivates the construction of a new Poisson algebra,
outlined in the following proposition:

\begin{prop}
    \label{prop:minkowski_chiral_bracket}
    The space $\mathfrak{F}_{\mu c}(\mathbb{M}_2)$, equipped with
    the pointwise product $\cdot$, and the bracket
    \begin{equation}
        \left\{ 
            \mathcal{F},
            \mathcal{G}
        \right\}_\ell
        [\phi]
            :=
        \left\langle 
            \left( \partial_u \otimes \partial_u \right) E,
            \mathcal{F}^{(1)}[\phi] \otimes
            \mathcal{G}^{(1)}[\phi]
        \right\rangle
    \end{equation}
    is a Poisson algebra, which we denote $\mathfrak{P}_\ell(\mathbb{M}_2)$.
    Furthermore, the map
    $
        \partial_u^*:
            \mathfrak{F}_{\mu c}(\mathbb{M}_2) \to
            \mathfrak{F}_{\mu c}(\mathbb{M}_2)
    $
    yields a Poisson algebra homomorphism
    $\mathfrak{P}_\ell(\mathbb{M}_2) \to \mathfrak{P}(\mathbb{M}_2)$.
\end{prop}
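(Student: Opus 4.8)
The plan is to recognise $\{\cdot,\cdot\}_\ell$ as the Peierls bracket \eqref{eq:peierls_bracket} with the Pauli--Jordan bidistribution $E$ replaced by $\tilde E := (\partial_u\otimes\partial_u)E$, and to check that this replacement preserves every structural feature of $E$ used in the proof \cite[Theorem 4.1.4]{brunettiAlgebraicStructureClassical2019} that $\left(\mathfrak{F}_{\mu c}(\mathcal{M}),\cdot,\{\cdot,\cdot\}\right)$ is a Poisson algebra. Three features are needed: (i) $\tilde E$ is antisymmetric, which follows from the antisymmetry of $E$ (itself a consequence of \eqref{eq:adjoint_pair}) together with the identity $\langle\tilde E, f\otimes g\rangle = \langle E, \partial_u f\otimes\partial_u g\rangle$; (ii) $\mathrm{WF}(\tilde E)\subseteq\mathrm{WF}(E)$, since a differential operator never enlarges the wavefront set; and (iii) $\tilde E$, like $E$, is independent of the field configuration.

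Granting this, the Poisson-algebra axioms for $\{\cdot,\cdot\}_\ell$ follow exactly as for the Peierls bracket. \emph{Well-definedness and closure:} for $\mathcal{F},\mathcal{G}\in\mathfrak{F}_{\mu c}(\mathbb{M}_2)$ the pairing $\langle\tilde E,\mathcal{F}^{(1)}[\phi]\otimes\mathcal{G}^{(1)}[\phi]\rangle$ exists, is Bastiani-smooth in $\phi$, and again satisfies the spectral condition \eqref{eq:spectral_condition}; this is the content of \Cref{sec:closure_proofs}, verbatim, with (ii) substituting for the properties of $\mathrm{WF}(E)$ used there. \emph{Antisymmetry} of $\{\cdot,\cdot\}_\ell$ is immediate from (i). The \emph{Leibniz rule} follows from the Leibniz rule $(\mathcal{G}\cdot\mathcal{H})^{(1)}[\phi] = \mathcal{G}[\phi]\,\mathcal{H}^{(1)}[\phi]+\mathcal{G}^{(1)}[\phi]\,\mathcal{H}[\phi]$ and bilinearity of the pairing. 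The \emph{Jacobi identity} holds because $\tilde E$ is a constant antisymmetric bivector: in the sum of $\{\mathcal{F},\{\mathcal{G},\mathcal{H}\}_\ell\}_\ell$ and its two cyclic permutations, every term is a pairing of $\tilde E\otimes\tilde E$ against a product of first and second functional derivatives of $\mathcal{F},\mathcal{G},\mathcal{H}$, and these cancel in pairs by the antisymmetry of $\tilde E$ and the symmetry of second derivatives, just as for $E$.

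For the homomorphism claim, three checks suffice. First, $\partial_u^*$ preserves $\mathfrak{F}_{\mu c}(\mathbb{M}_2)$: iterating $(\partial_u^*\mathcal{F})^{(1)}[\phi] = -\partial_u\mathcal{F}^{(1)}[\partial_u\phi]$ gives $(\partial_u^*\mathcal{F})^{(n)}[\phi] = (-1)^n\,\partial_u^{\otimes n}\mathcal{F}^{(n)}[\partial_u\phi]$ (with $\partial_u$ acting in each of the $n$ arguments), whence $\mathrm{WF}\big((\partial_u^*\mathcal{F})^{(n)}[\phi]\big)\subseteq\mathrm{WF}\big(\mathcal{F}^{(n)}[\partial_u\phi]\big)$ so \eqref{eq:spectral_condition} is inherited, and Bastiani smoothness survives because $\phi\mapsto\partial_u\phi$ is linear and continuous. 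Second, $\partial_u^*$ is multiplicative: $\big(\partial_u^*(\mathcal{F}\cdot\mathcal{G})\big)[\phi] = \mathcal{F}[\partial_u\phi]\,\mathcal{G}[\partial_u\phi] = (\partial_u^*\mathcal{F})[\phi]\,(\partial_u^*\mathcal{G})[\phi]$. Third --- and this is precisely the computation \eqref{eq:derivative_peierls} --- $\{\partial_u^*\mathcal{F},\partial_u^*\mathcal{G}\}[\phi] = \langle\tilde E,\mathcal{F}^{(1)}[\partial_u\phi]\otimes\mathcal{G}^{(1)}[\partial_u\phi]\rangle = \{\mathcal{F},\mathcal{G}\}_\ell[\partial_u\phi] = \big(\partial_u^*\{\mathcal{F},\mathcal{G}\}_\ell\big)[\phi]$, so $\partial_u^*$ intertwines $\{\cdot,\cdot\}_\ell$ with the Peierls bracket. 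Together these are exactly the conditions for $\partial_u^*:\mathfrak{P}_\ell(\mathbb{M}_2)\to\mathfrak{P}(\mathbb{M}_2)$ to be a Poisson algebra homomorphism.

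I expect the only genuinely non-formal point to be closure --- that $\{\mathcal{F},\mathcal{G}\}_\ell$ is again microcausal --- and even that is dispatched by the inclusion $\mathrm{WF}(\tilde E)\subseteq\mathrm{WF}(E)$, which lets the estimates of \Cref{sec:closure_proofs} apply unchanged. As a consistency check, note that $\partial_u:\mathfrak{E}(\mathbb{M}_2)\to\mathfrak{E}(\mathbb{M}_2)$ is surjective (solve $\partial_u f = g$ by $f(u,v) = \int_0^u g(s,v)\,\mathrm{d}s$), so $\partial_u^*$ is injective; hence the algebraic axioms for $\{\cdot,\cdot\}_\ell$ could alternatively be pulled back from their Peierls-bracket analogues along the intertwining relation of the third check above.
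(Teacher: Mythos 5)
Your proposal is correct, and its overall architecture coincides with the paper's: closure of $\mathfrak{F}_{\mu c}(\mathbb{M}_2)$ under $\{\cdot,\cdot\}_\ell$ via the inclusion $\mathrm{WF}\bigl((\partial_u\otimes\partial_u)E\bigr)\subseteq\mathrm{WF}(E)$ feeding into the estimates of \Cref{prop:peierls_closure}, and the homomorphism property of $\partial_u^*$ via the wavefront-set bound $\mathrm{WF}\bigl((\partial_u^*\mathcal{F})^{(n)}[\phi]\bigr)\subseteq\mathrm{WF}\bigl(\mathcal{F}^{(n)}[\partial_u\phi]\bigr)$ together with the intertwining relation \eqref{eq:derivative_peierls}. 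The one place you diverge is the Jacobi identity: you argue it directly, treating $(\partial_u\otimes\partial_u)E$ as a constant antisymmetric bivector and invoking the standard pairwise cancellation between the antisymmetry of the bivector and the symmetry of second functional derivatives, whereas the paper instead applies $\partial_u^*$ to the Jacobiator of $\{\cdot,\cdot\}_\ell$, identifies the result with the (vanishing) Jacobiator of the Peierls bracket via \eqref{eq:derivative_peierls}, and concludes by the injectivity of $\partial_u^*$ (which, as you note, follows from the surjectivity of $\partial_u$). Your direct argument is self-contained but re-proves a cancellation already established for the Peierls bracket in \cite[Theorem 4.1.4]{brunettiAlgebraicStructureClassical2019}; the paper's route gets the identity for free from that theorem at the cost of the small injectivity observation. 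You in fact record the paper's argument as your closing ``consistency check,'' so nothing is missing --- the two routes are both valid and you have effectively supplied both.
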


\begin{proof}
    Because $\mathrm{WF}((\partial_u \otimes \partial_u) E) \subseteq \mathrm{WF}(E)$,
    we see that all the estimates of
    $
        \mathrm{WF}\big(
            \left\{ \mathcal{F}, \mathcal{G}\right\}^{(n)}
        \big)
    $
    given in the proof of \Cref{prop:peierls_closure} also hold for
    $
        \mathrm{WF}\big(
            \left\{ \mathcal{F}, \mathcal{G}\right\}_\ell^{(n)}
        \big).
    $
    Thus, the microcausality of $\left\{ \mathcal{F}, \mathcal{G}\right\}$
    implies that of $\left\{ \mathcal{F}, \mathcal{G}\right\}_\ell$.

    Next, we must show that $\left\{ \cdot, \cdot\right\}_\ell$ satisfies
    the Jacobi identity.
    This we can achieve using \eqref{eq:derivative_peierls} alongside
    the observation that $\partial_u^*$ is injective
    (which follows from the fact that $\partial_u$ is surjective).
    Let $\mathcal{F}, \mathcal{G},$ and $\mathcal{H}$ all be microcausal functionals.
    Consider
    \begin{equation*}
        \partial_u^*\left( 
            \big\{
                \mathcal{F},
                \big\{
                    \mathcal{G},
                    \mathcal{H}
                \big\}_\ell
            \big\}_\ell +
            \cdots
        \right)
            = 
        \big\{
            \partial_u^* \mathcal{F},
            \big\{
                \partial_u^* \mathcal{G},
                \partial_u^* \mathcal{H}
            \big\}
        \big\} +
        \cdots,
    \end{equation*}
    where $\cdots$ includes both remaining even permutations of
    $\mathcal{F}, \mathcal{G}$, and $\mathcal{H}$.
    The right-hand side of this vanishes as
    the Peierls bracket satisfies the Jacobi identity
    hence, by injectivity, we see that
    $\{\mathcal{F}, \{\mathcal{G}, \mathcal{H}\}_\ell\}_\ell + \cdots$ also vanishes.

    Finally, we note that
    $
        \mathrm{WF}((\partial_u^* \mathcal{F})^{(n)}[\phi])
            =
        \mathrm{WF}(
            (-1)^n{\partial_u}^{\otimes n}
            \mathcal{F}^{(n)}[\partial_u \phi]
        )
            \subseteq
        \mathrm{WF}(\mathcal{F}^{(n)}[\partial_u \phi]),
    $
    confirming that $\partial_u^*$ indeed defines a linear endomorphism on
    $\mathfrak{F}_{\mu c}(\mathbb{M}_2)$ and hence,
    by \eqref{eq:derivative_peierls}, a Poisson algebra homomorphism.
\end{proof}

Note that, $(\partial_u \otimes \partial_u)E^\arr = 0$,
hence the integral kernel of the differentiated propagator is
\begin{equation}
    \partial_u \partial_{u'}   E   (u, v; u', v')
        =
    \partial_u \partial_{u'} E^\ell(u, v; u', v')
        =
    \tfrac{1}{2} \delta'(u - u').
\end{equation}
This form of the commutator can be seen as an example of the
\textit{mutual locality} of chiral fields,
\cite[Definition 2.3]{kacVertexAlgebrasBeginners1998},
a concept central to many theorems in the VOA framework
We shall henceforth refer to $\{\cdot, \cdot\}_\ell$ as
the \textit{chiral} bracket,
and the analogously defined $ \{\cdot, \cdot\}_\arr$ as
the \textit{anti-chiral} bracket.

It turns out that the chiral and anti-chiral brackets can be defined on
a space of functionals larger than
$\mathfrak{F}_{\mu c}(\mathbb{M}_2)$.
In an upcoming paper, we shall explore what a suitable enlargement is,
and how this relates to the concept
that chiral fields are defined over a single light-ray.

\subsection{The Heisenberg Algebra}

We shall now find the advanced and retarded propagators for
the Einstein cylinder $\mathcal{M}_{\mathrm{cyl}}$.
If $(u, v)$ denotes the null coordinates of a point in $\mathbb{M}_2$,
then we define an equivalence relation on $\mathbb{M}_2$ by
$(u, v) \sim (u + 2\pi, v - 2\pi)$.
The Einstein cylinder is then defined as the quotient space
$\mathcal{M}_{\mathrm{cyl}} = \mathbb{M}_2 / \sim$,
with the unique metric such that the covering map
$\pi: \mathbb{M}_2 \to \mathcal{M}_\mathrm{cyl}$
is a local isometry.
We will write points in $\mathcal{M}_{\mathrm{cyl}}$
as equivalence classes
$[u, v] \subset \mathbb{M}_2$,
where $[u, v] = [u + 2 \pi, v - 2 \pi]$.

The causal propagator for the cylinder may be obtained from
the advanced and retarded propagators of Minkowski spacetime using the method of images.
Firstly, note there is an isomorphism between
$
    \mathfrak{E}(\mathbb{M}_2)^{\mathbb{Z}}
        =
    \{
        f \in \mathfrak{E}(\mathbb{M}_2)
            \, | \,
        f \circ T_n \equiv f, \forall n \in \mathbb{Z}
    \}
$
and
$\mathfrak{E}(\mathcal{M}_{\mathrm{cyl}})$.
Going from $\mathcal{M}_{\mathrm{cyl}}$ to $\mathbb{M}_2$,
this map is simply the corestriction of $\pi^*$
to the space of $\mathbb{Z}$ invariants.
If we denote the inverse of this isomorphism by $\pi_*$,
then we claim the retarded and advanced propagators on the cylinder are given by
\begin{equation}
    \label{eq:loose_cyl_prop_def}
    E^{R/A}_\mathrm{cyl}
        =
    \pi_* E^{R/A} \pi^*.
\end{equation}
For this map to be well defined, amongst other details,
we must show that the domain of $E^{R/A}$ can be extended to the image
$\pi^* \big( \mathfrak{D}(\mathcal{M}_{\mathrm{cyl}}) \big)$,
and that the output of $E^{R/A} \pi^*$ contains only $\mathbb{Z}$ invariants.
Proof of which can be found in \Cref{sec:images}.

That these maps are then the desired propagators follows from
the relationship between the equations of motion on the cylinder and Minkowski.
Let $U \subseteq \mathbb{M}_2$ be a sub-spacetime of $\mathbb{M}_2$ and let
$\iota_U: U \hookrightarrow \mathbb{M}_2$ be its inclusion into $\mathbb{M}_2$.
If $U$ is small enough that
$\pi \circ \iota_U: U \to \mathcal{M}_{\mathrm{cyl}}$ is an embedding,
then we can show from \eqref{eq:Euler_Lagrange_covariance} that
\begin{equation}
    (\pi \circ \iota_U)^* P_{\mathcal{M}_{\mathrm{cyl}}}
        =
    P_U (\pi \circ \iota_U)^*.
\end{equation}
Furthermore, $\iota_U$ is itself an isometric embedding, hence
\begin{equation}
    \iota_U^* P_{\mathbb{M}_2} = P_U \iota_U^*.
\end{equation}
Combining these equations, we find
\begin{equation}
    \label{eq:local_intertwine}
    \iota_U^* \pi^* P_{\mathcal{M}_{\mathrm{cyl}}}
        =
    \iota_U^* P_{\mathbb{M}_2} \pi^*.
\end{equation}
One can then show that $\mathbb{M}_2$ is covered by open sets $U$
for which \eqref{eq:local_intertwine} holds,
and thence that $\pi^* P_{\mathcal{M}_{\mathrm{cyl}}} = P_{\mathbb{M}_2} \pi^*$.
By acting on the left-hand side of \eqref{eq:loose_cyl_prop_def} with
$\pi^* P_{\mathcal{M}_{\mathrm{cyl}}}$ and the right-hand side with
$P_{\mathbb{M}_2} \pi^*$, we are then able to see why these maps are fundamental solutions
to $P_{\mathcal{M}_{\mathrm{cyl}}}$.

Throughout this section we shall use the following coordinates for
$\mathcal{M}_\mathrm{cyl}$.
Let $U = (0,2 \pi) \times \mathbb{R} \subset \mathbb{R}^2$, then
\begin{align}
    \rho:
    U
    &\longrightarrow
    \mathcal{M}_\mathrm{cyl},
    \nonumber \\
    \label{eq:cyl_chart}
        (u, v)
            &\longmapsto
        [u, v].
\end{align}
And, by a standard abuse of notation,
for $\phi \in \mathfrak{E}(\mathcal{M}_{\mathrm{cyl}})$,
we shall write $(\phi \circ \rho) (u, v)$ as simply $\phi(u, v)$.
As the $(u, v)$ coordinates parametrise
$\mathcal{M}_{\mathrm{cyl}}$ up to a set of measure zero,
they are sufficient to define integration on $\mathcal{M}_{\mathrm{cyl}}$.
In turn, this allows us to define an integral kernel for $E_\mathrm{cyl}$ by
\begin{equation}
    (E_\mathrm{cyl} \phi)(u, v)
        =:
    \int_U
        E_\mathrm{cyl}(u, v; u', v') \phi(u', v')
    \, \mathrm{d}u' \, \mathrm{d}v',
\end{equation}
which we may then write in terms of the integral kernel of $E$ as
\begin{align}
    \label{eq:causal_prop_cyl}
    E_\mathrm{cyl}(u, v; u', v')
   &    =
    \sum_{k \in \mathbb{Z}}
    E(u, v; u' + 2 \pi k, v' - 2 \pi k), \nonumber \\
   &    =
    - \frac{1}{2}
    \left(
        \left\lfloor \frac{u - u'}{2 \pi} \right\rfloor +
        \left\lfloor \frac{v - v'}{2 \pi} \right\rfloor +
        1
    \right).
\end{align}
Once again, we see the characteristic splitting
of the $u$-dependent and $v$-dependent terms of $E_\mathrm{cyl}$,
which we write
$E_\mathrm{cyl} = E_\mathrm{cyl}^\ell + E_\mathrm{cyl}^\arr$,
just as before.

Just as with \Cref{prop:minkowski_chiral_bracket},
we can define a chiral bracket $\{\cdot, \cdot\}_\ell$ on
$\mathfrak{F}_{\mu c}(\mathcal{M}_{\mathrm{cyl}})$
using $\left( \partial_u \otimes \partial_u \right)E_\mathrm{cyl}$
instead of $E_\mathrm{cyl}$,
yielding the chiral Poisson algebra
$\mathfrak{P}_\ell(\mathcal{M}_{\mathrm{cyl}})$.
The proof that $\mathfrak{P}_\ell(\mathcal{M}_{\mathrm{cyl}})$
is a Poisson algebra and that
$
    \partial_u^*:
        \mathfrak{F}_{\mu c}(\mathcal{M}_{\mathrm{cyl}}) \to
        \mathfrak{F}_{\mu c}(\mathcal{M}_{\mathrm{cyl}})
$
is a Poisson algebra homomorphism
carries over essentially unchanged from $\mathbb{M}_2$.
For our choice of chart, we always have that
$-2 \pi < u - u' < 2\pi$,
thus the integral kernel for the chiral bracket can be written
\begin{equation}
    \left( \partial_u \otimes \partial_u \right)
    E_\mathrm{cyl}(u, v; u', v')
        =
    \left( \partial_u \otimes \partial_u \right)
    E^\ell(u, v; u', v')
        =
    \frac{1}{2} \delta'(u - u').
\end{equation}

We shall perform our next set of calculations using
$\{\cdot, \cdot\}_\ell$.
In an effort to avoid confusion,
when we are working in $\mathfrak{P}_\ell(\mathcal{M}_{\mathrm{cyl}})$,
we shall denote the field configuration input to the functional by $\psi$.
We think of $\psi$ as $\partial_u \phi$
which is realised when we apply the algebra homomorphism
$
    (\partial_u^* \mathcal{F})[\phi]
        =
    \mathcal{F}[\partial_u \phi]
        =
    \mathcal{F}[\psi].
$

We first define the family of functionals
$
    \{A_n\}_{n \in \mathbb{Z}}
        \subset
    \mathfrak{F}(\mathcal{M}_{\mathrm{cyl}})
$
by
\begin{equation}
    \label{eq:a_n_defn}
    A_n[\psi]
        :=
    \frac{1}{\sqrt{\pi}}
    \int_{u = 0}^{2 \pi}
        e^{inu} \psi(u, -u)
    \, \mathrm{d}u.
\end{equation}
Their derivatives are given by
\begin{equation}
    \left\langle 
        A_n^{(1)}[\psi],
        h
    \right\rangle
        =
    \frac{1}{\sqrt{\pi}}
    \int_{u = 0}^{2 \pi} 
        e^{inu} h(u, -u)
    \, \mathrm{d}u,
\end{equation}
for $h \in \mathfrak{D}(\mathcal{M}_{\mathrm{cyl}})$.

For $\psi \in \mathfrak{E}(\mathcal{M}_\mathrm{cyl})$, $A_n[\psi]$ is simply the
$n^\text{th}$ Fourier mode of $\psi$ restricted to the $t=0$ Cauchy surface
$\Sigma_0$ if we wind around the surface \textit{clockwise}.
These functionals are neither microcausal nor local because,
by \cite[Theorem 8.2.5]{hormanderAnalysisLinearPartial2015},
one can show the wavefront set of $A_n^{(1)}[\psi]$
is the conormal bundle to $\Sigma_0$.
However, we shall see that they still posess a well-defined chiral bracket,
and generate a closed algebra with respect to it.

A direct computation of the chiral bracket yields
\begin{align}
    \big\{ A_n, A_m\big\}_\ell [\psi]
        &=
    \frac{1}{\pi} \int_{u=0}^{2\pi} \int_{u'=0}^{2\pi}
        e^{i(nu + mu')}
        (\partial_u \partial_{u'} E_\mathrm{cyl}^\ell)
            (u, - u; u', - u')
    \, \mathrm{d}u \mathrm{d}u'
        \nonumber \\
        &=
    \frac{1}{2 \pi}
    \int_{u = 0}^{2 \pi} \int_{u' = 0}^{2 \pi}
        e^{i(nu + mu')}
        \delta'(u - u')
    \, \mathrm{d}u' \mathrm{d}u'
        \nonumber \\
        &=
    -i n \delta_{n+m, 0},
\end{align}
hence
\begin{equation}
    \big\{ A_n, A_m\big\}_\ell
        =
    -i n \delta_{n+m, 0},
\end{equation}
where we suppress the constant functional for convenience.

This demonstrates that the Lie algebra generated by the $A_n$
with the Lie bracket $\{ \cdot, \cdot\}_\ell$
is isomorphic to the Heisenberg algebra.
Moreover, as $\partial_u^*$ is a Poisson algebra homomorphism,
we see that the algebra generated by
$\mathcal{A}_n := \partial_u^*A_n$
with the Peierls bracket is also isomorphic to the Heisenberg algebra.

Quantising this family of functionals is relatively simple.
Let $H \in \mathrm{Had}(\mathcal{M}_{\mathrm{cyl}})$
be some Hadamard distribution.
As the functionals $\mathcal{A}_n$ are linear,
the definition of the $\star_H$ product implies
the familiar Dirac quantisation rule is valid:
\begin{equation}
    \big[
        \mathcal{A}_n,
        \mathcal{A}_m
    \big]_{\star_H}
        =
    i \hbar
        \big\{
            \mathcal{A}_n,
            \mathcal{A}_m
        \big\}
        =
      \hbar n \delta_{n+m, 0}.
\end{equation}
Furthermore, $\alpha_{H' - H}$ acts by identity on linear functionals,
hence this result is independent of our choice of a Hadamard state $H$.

Of course, there is nothing particularly special about the choice of
$\Sigma_0$ as the Cauchy surface.
From the covariance of the Peierls bracket we already know that,
for any isometry $\chi \in \mathrm{Aut}(\mathcal{M}_{\mathrm{cyl}})$,
the family of functionals $\{ \chi_* \mathcal{A}_n \}_{n \in \mathbb{N}}$
has the same commutation relations as $\{ \mathcal{A}_n \}_{n \in \mathbb{N}}$.
Moreover, we can see in these functionals the beginnings of conformal covariance,
which will be explored further in \Cref{sec:conformal_covariance}.
In null coordinates, we can define a conformal transformation of the cylinder as
$\chi[u, v] = [\mu(u), \nu(v)]$
where the pair of functions $\mu, \nu \in \mathrm{Diff}_+(\mathbb{R})$
satisfy $\mu(u + 2 \pi) = \mu(u) + 2 \pi$ and
$\nu(v + 2 \pi) = \nu(v) + 2 \pi$.
One can then show that the family $\{ \chi_* \mathcal{A}_n \}_{n \in \mathbb{N}}$
still has the same commutation relations as before in the case
that $\chi$ is conformal.

We can define a family of functionals akin to $\mathcal{A}_n$:
\begin{equation}
    \mathcal{A}_n^\gamma[\psi] :=
    \int_{S^1}
        e^{in \tau}
        \gamma^* \left( \frac{\partial \phi}{\partial u} \mathrm{d}u \right),
\end{equation}
where $\gamma: S^1 \to \mathcal{M}_{\mathrm{cyl}}$ is any
spacelike loop around $\mathcal{M}_{\mathrm{cyl}}$.
The original $\mathcal{A}_n$ correspond to the choice of loop
$\gamma_0(\tau) = [\tau, -\tau]$, and one can show that,
if $\gamma = \chi \circ \gamma_0$ for some conformal transformation $\chi$,
then $\chi_* \mathcal{A}_n = \mathcal{A}_n^\gamma$.

In fact, for any other Cauchy surface $\Sigma$ of $\mathcal{M}_{\mathrm{cyl}}$,
it is possible to find a conformal transformation $\chi$
such that $\gamma =\chi \circ \gamma_0$ is a parametrisation of $\Sigma$,
hence $\mathcal{A}_n^\gamma$ is a copy of the Heisenberg algebra associated with
the surface $\Sigma$.
As a sketch: $\chi$ is obtained by taking a right-moving null ray
passing through a point $[u, -u] \in \Sigma_0$, and finding the
unique point $[u, v] \in \Sigma$ lying on the same ray.
This defines the map $\nu$ such that $\nu(-u) = v$,
which one can show is an element of $\mathrm{Diff}_+(S^1)$,
then any choice of $\mu \in \mathrm{Diff}_+(S^1)$ completes
the definition of $\chi$, for example just the identity function.

These $\mathcal{A}_n^\gamma$ will not be needed in this paper.
However, functionals of this form prove vital for defining truly
chiral (i.e. 1-dimensional) algebras as emerging from locally covariant field theory.
We shall explore this further in a future paper.

\subsection{The Virasoro Algebra}
\label{sec:Virasoro}

As the Virasoro algebra arises from quadratic functionals,
the ordering ambiguities we could previously disregard become relevant,
and we cannot so easily carry computations from
Minkowski space over to the cylinder.
To start, the classical functionals are defined analogously to the $A_n$ functionals.
Again, we begin by defining a family
$
    \{B_n\}_{n \in \mathbb{Z}}
        \subset
    \mathfrak{F}(\mathcal{M}_{\mathrm{cyl}}),
$
by
\begin{equation*}
    B_n [\psi]
        :=
    \int_{u = 0}^{2 \pi}
        e^{inu} \psi^2(u, -u)
    \, \mathrm{d}u.
\end{equation*}
As before, we shall compute the chiral bracket of $B_n$ with $B_m$
in order to obtain the Peierls bracket for the functionals
$\mathcal{B}_n := \partial_u^*B_n$.

For future reference,
the functional derivatives of $B_n$ are
\begin{subequations}
    \begin{align}
        \label{eq:B_n_first_deriv}
        \left\langle 
            B_n^{(1)}[\psi],
            g
        \right\rangle
            &=
        2 \int_{u = 0}^{2 \pi}
            e^{in u} \psi(u, -u) g(u, -u)
        \, \mathrm{d}u,
            \\
        \left\langle 
            B_n^{(2)}[\psi],
            g \otimes h
        \right\rangle
            &=
        2 \int_{u = 0}^{2 \pi}
            e^{inu} g(u, -u) h(u, -u)
        \, \mathrm{d}u.
    \end{align}
\end{subequations}
Here again, the wavefront set of $B_n^{(1)}[\psi]$ is contained within
the conormal bundle of $\Sigma_0$ and hence $B_n$
is not microcausal.
Moreover, we see that, like $A_n$, these functionals are additive,
which means that the support of $B_n^{(2)}$,
and hence that of $\mathcal{B}_n^{(2)}$,
is contained within the thin diagonal
$\Delta_2 \subset \mathcal{M}_{\mathrm{cyl}}^2$.
This will be vital when we later apply
the locally covariant Wick ordering prescription 
outlined in \Cref{sec:normal_ordering} to these functionals.

The chiral bracket of $B_n$ with $B_m$ is given by
\begin{align}
    \{B_n, B_m\}_\ell [\psi]
   &    =
    2 \int_{u = 0}^{2 \pi} \int_{u' = 0}^{2 \pi}
        \delta'(u - u')
        e^{inu } \psi(u , -u ) \cdot
        e^{imu'} \psi(u', -u')
    \, \mathrm{d}u \, \mathrm{d}u'
    \nonumber \\
    &    =
    -2 \int_{u = 0}^{2 \pi}
        \big[
            in \psi(u, -u) +
            (\partial_u \psi)(u, -u)
        \big]
        \psi(u, -u)
        e^{i(n+m)u}
    \, \mathrm{d}u,
    \nonumber \\
   &    =
    -i(n - m)
    \int_{u = 0}^{2\pi}
        e^{i(n + m)u}
        \psi^2(u, -u)
    \, \mathrm{d}u
    \nonumber \\
    \label{eq:B_n_witt_relations}
   &    =
    -i(n-m)B_{n+m}[\psi],
\end{align}
where the move from the second to the third line can be made by exploiting
the skew-symmetry of the equation under the interchange of $n$ with $m$.
Hence, we can already see that the $B_n$ under the chiral bracket generate
a copy of the Witt algebra.

Next, we shall quantise the $\mathcal{B}_n$ observables.
Using \eqref{eq:B_n_witt_relations},
we can immediately note that the $\mathcal{O}(\hbar)$ term of
$\left[ \nord{\mathcal{B}_n}, \nord{\mathcal{B}_m} \right]$
must be ${\hbar(n - m)\nord{\mathcal{B}_{n + m}}}$,
regardless of the quantisation map used.
In order to determine the $\mathcal{O}(\hbar^2)$ term though,
we must decide on a particular choice of prescription.

As explained in \Cref{sec:deformation_quantisation},
it is inconvenient to work directly with $\mathfrak{A}(\mathcal{M}_\mathrm{cyl})$.
Instead, we perform our computations in $\mathfrak{A}^H(\mathcal{M}_\mathrm{cyl})$
for some suitable choice of Hadamard distribution $H$.
The simplest choice is to take $H = W_\mathrm{cyl} - \tfrac{i}{2}E_\mathrm{cyl}$,
where $W_\mathrm{cyl}$ is the ultrastatic vacuum for the cylinder,
uniquely distinguished by the fact that it is
invariant under time-translations.
The integral kernel of $W_\mathrm{cyl}$ may be written
\begin{equation}
    \label{eq:cylinder_2_point}
    W_\mathrm{cyl}(u, v; u', v')
        =
    \frac{1}{4 \pi}
    \sum_{k \in \mathbb{Z}^*}
        \frac{1}{k}
        \left( 
            e^{-ik(u - u')} + e^{-ik(v - v')}
        \right).
\end{equation}
Unlike for the massive scalar field,
time-translation is not enough to fix the kernel of $W_\mathrm{cyl}$ uniquely,
owing to the presence of zero mode solutions to the massless Klein-Gordon equation.
However, this is no issue in the algebraic approach to QFT,
as the construction of our algebra of observables is
independent of any choice of ground state and, hence,
of any way in which we may choose to handle the problem of zero modes.

Moreover, we are concerned with the $\star$ products
of functionals which depend on the field configuration $\phi$
only through one of its null derivatives.
In effect, this means we only depend on $W_\mathrm{cyl}$ to define
the 2-point function for the derivative field
\begin{equation}
    (\partial_u \otimes \partial_u)W_\mathrm{cyl}(\mathbf{x}; \mathbf{y})
        =
    \big\langle
        (\partial_u \phi)(\mathbf{x})
        (\partial_u \phi)(\mathbf{y})
    \big\rangle_{\omega}.
\end{equation}
Taking this derivative annihilates any zero-modes,
thus there is no ambiguity in defining the integral kernel
of $(\partial_u \otimes \partial_u)W_\mathrm{cyl}$.

If we consider the $\star_H$ product of two functionals of the form
$\partial_u^*\mathcal{F}$, we find
\begin{equation}
    \label{eq:star_prod_intertwine}
    \left(
        \left( \partial_u^* \mathcal{F} \right)
            \star_H
        \left( \partial_u^* \mathcal{G} \right)
    \right)[\phi]
        =
    \sum_{n = 0}^\infty \frac{\hbar^n}{n!}
    \left\langle 
        \left[ 
            \left( \partial_u \otimes \partial_u \right) W_\mathrm{cyl}
        \right]^{\otimes n},
        \mathcal{F}^{(n)}[\partial_u \phi] \otimes
        \mathcal{G}^{(n)}[\partial_u \phi]
    \right\rangle.
\end{equation}
Analogously to \Cref{prop:minkowski_chiral_bracket},
we can hence define a chiral subalgebra of $\star_H$ via the following:
\begin{prop}\label{prop:du-star-alg}
    The space $\mathfrak{F}_{\mu c}(\mathcal{M}_{\mathrm{cyl}})[[\hbar]]$,
    equipped with the associative product $\star_{H, \ell}$ defined by
    \begin{equation}
        (
            \mathcal{F} \star_{H, \ell}
            \mathcal{G}
        )[\phi]
            :=
        \sum_{n \in \mathbb{N}}
            \frac{\hbar^n}{n!}
            \left\langle 
                [(\partial_u \otimes \partial_u) W_\mathrm{cyl}]^{\otimes n},
                \mathcal{F}^{(n)}[\phi] \otimes \mathcal{G}^{(n)}[\phi]
            \right\rangle,
    \end{equation}
    is a $*$-algebra, which we denote by
    $\mathfrak{A}^H_\ell(\mathcal{M}_{\mathrm{cyl}})$.
    Moreover, the linear extension of $\partial_u^*$
    --
        defined in \Cref{prop:minkowski_chiral_bracket}
    --
    to $\mathfrak{F}_{\mu c}(\mathcal{M}_{\mathrm{cyl}})[[\hbar]]$
    yields a $*$-algebra homomorphism
    $
        \mathfrak{A}^H_\ell(\mathcal{M}_{\mathrm{cyl}}) \to
        \mathfrak{A}^H     (\mathcal{M}_{\mathrm{cyl}})
    $.
\end{prop}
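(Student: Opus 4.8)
The plan is to check four things in turn: that $\star_{H,\ell}$ is well defined and closes on $\mathfrak{F}_{\mu c}(\mathcal{M}_{\mathrm{cyl}})[[\hbar]]$; that it is associative with the constant functional as unit; that complex conjugation of functionals is an involution compatible with it; and finally that $\partial_u^*$ intertwines $\star_{H,\ell}$ with $\star_H$, which is precisely what the homomorphism claim amounts to. This last point is essentially free: equation \eqref{eq:star_prod_intertwine}, once its right-hand side is recognised as $(\mathcal{F}\star_{H,\ell}\mathcal{G})[\partial_u\phi]=\bigl(\partial_u^*(\mathcal{F}\star_{H,\ell}\mathcal{G})\bigr)[\phi]$, says exactly that $(\partial_u^*\mathcal{F})\star_H(\partial_u^*\mathcal{G})=\partial_u^*(\mathcal{F}\star_{H,\ell}\mathcal{G})$. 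Since $\partial_u^*$ is already known, by the $\mathcal{M}_{\mathrm{cyl}}$ analogue of \Cref{prop:minkowski_chiral_bracket} (whose statement and proof carry over unchanged), to be a linear endomorphism of $\mathfrak{F}_{\mu c}(\mathcal{M}_{\mathrm{cyl}})$, to fix the unit, and --- being a real differential operator --- to commute with complex conjugation, its $\mathbb{C}[[\hbar]]$-linear extension is then the asserted $*$-algebra homomorphism.

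For well-definedness and microcausality I would argue exactly as in the construction of $\star_H$ on $\mathfrak{F}_{\mu c}$ (the wavefront-set estimates of \Cref{sec:closure_proofs}, as already reused in the proof of \Cref{prop:minkowski_chiral_bracket}), observing that applying a differential operator can only shrink a wavefront set, so that $\mathrm{WF}\bigl((\partial_u\otimes\partial_u)W_\mathrm{cyl}\bigr)\subseteq\mathrm{WF}(W_\mathrm{cyl})$; the latter is controlled by \textbf{H0}, hence the same H\"ormander-criterion bounds guarantee that each pairing $\bigl\langle[(\partial_u\otimes\partial_u)W_\mathrm{cyl}]^{\otimes n},\mathcal{F}^{(n)}[\phi]\otimes\mathcal{G}^{(n)}[\phi]\bigr\rangle$ exists and that the result is again microcausal; order by order in $\hbar$ this yields a well-defined map into $\mathfrak{F}_{\mu c}(\mathcal{M}_{\mathrm{cyl}})[[\hbar]]$. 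It is worth flagging that $(\partial_u\otimes\partial_u)W_\mathrm{cyl}$ is itself degenerate and need not be a bona fide Hadamard bidistribution; only the containment of its wavefront set is used.

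For associativity I would write $\mathcal{F}\star_{H,\ell}\mathcal{G}=m\circ\exp\bigl(\hbar\langle(\partial_u\otimes\partial_u)W_\mathrm{cyl},\tfrac{\delta}{\delta\phi}\otimes\tfrac{\delta}{\delta\phi}\rangle\bigr)(\mathcal{F}\otimes\mathcal{G})$, which has precisely the exponential form of \eqref{eq:star_H_prod_defn}, and invoke the general result \cite[Proposition 4.5]{hawkinsStarProductInteracting2019}; the constant functional $\mathbf{1}$ is visibly a two-sided unit. For the $*$-structure, with $\mathcal{F}^*[\phi]:=\overline{\mathcal{F}[\phi]}$ one gets $(\mathcal{F}\star_{H,\ell}\mathcal{G})^*=\mathcal{G}^*\star_{H,\ell}\mathcal{F}^*$ as soon as $\overline{(\partial_u\otimes\partial_u)W_\mathrm{cyl}(x;y)}=(\partial_u\otimes\partial_u)W_\mathrm{cyl}(y;x)$, and this follows from the Hermiticity $\overline{W_\mathrm{cyl}(x;y)}=W_\mathrm{cyl}(y;x)$ --- a consequence of \textbf{H1} and the antisymmetry of $E_\mathrm{cyl}$ --- together with the reality of $\partial_u$.

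The only step that needs genuine care, rather than bookkeeping, is the wavefront-set analysis in the second paragraph: confirming that the estimates underpinning the $\star_H$ construction survive the replacement of $W_\mathrm{cyl}$ by $(\partial_u\otimes\partial_u)W_\mathrm{cyl}$. As indicated this is automatic --- differentiation cannot enlarge a wavefront set and the microcausal condition is an open condition on wavefront sets --- so I do not anticipate a real obstacle; the substance of the proposition is the intertwining identity \eqref{eq:star_prod_intertwine}, which is already in hand, the rest being formal verification.
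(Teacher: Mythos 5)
Your proposal is correct, and on the two substantive points it coincides with the paper's own (very terse) proof: closure of $\mathfrak{F}_{\mu c}(\mathcal{M}_{\mathrm{cyl}})[[\hbar]]$ under $\star_{H,\ell}$ via the containment $\mathrm{WF}\bigl((\partial_u\otimes\partial_u)W_{\mathrm{cyl}}\bigr)\subseteq\mathrm{WF}(W_{\mathrm{cyl}})$ and a rerun of \Cref{prop:quantum_closure}, and the homomorphism property read off from \eqref{eq:star_prod_intertwine}. Where you genuinely diverge is associativity: the paper deduces it from the intertwining identity together with the injectivity of $\partial_u^*$ (an injective linear map carrying $\star_{H,\ell}$ to the associative product $\star_H$ forces the associator of $\star_{H,\ell}$ to vanish), whereas you note that $\star_{H,\ell}$ has the exponential form $m\circ\exp\bigl(\hbar\langle(\partial_u\otimes\partial_u)W_{\mathrm{cyl}},\tfrac{\delta}{\delta\phi}\otimes\tfrac{\delta}{\delta\phi}\rangle\bigr)$ and invoke \cite[Proposition 4.5]{hawkinsStarProductInteracting2019} directly. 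Both are valid; your route is self-contained and does not lean on the injectivity of $\partial_u^*$, which on the cylinder rests on the surjectivity of $\partial_u$ on $2\pi$-periodic configurations and so deserves at least a remark, while the paper's route is shorter once the intertwining is in hand. Your explicit check of the $*$-involution from the Hermiticity $\overline{W_{\mathrm{cyl}}(x;y)}=W_{\mathrm{cyl}}(y;x)$, and your observation that $(\partial_u\otimes\partial_u)W_{\mathrm{cyl}}$ need not itself be Hadamard (only the wavefront containment is used), are details the paper leaves implicit and are worth keeping.
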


\begin{proof}
    Just as in the classical case,
    because
    $
        \mathrm{WF}(\left( \partial_u \otimes \partial_u \right) W)
            \subseteq
        \mathrm{WF}(W)
    $,
    the closure of
    $\mathfrak{F}_{\mu c}(\mathcal{M}_{\mathrm{cyl}})[[\hbar]]$ under
    $\star_{H, \ell}$ is proved in exactly the same way as for $\star_H$,
    as spelled out in \Cref{prop:quantum_closure}.
    That $\partial_u^*$ intertwines $\star_{H, \ell}$ with $\star_H$ is verified by
    \eqref{eq:star_prod_intertwine}.
    And associativity follows from injectivity of $\partial_u^*$.
\end{proof}

we may now compute the product
$B_n \star_{H_\mathrm{cyl}, \ell} B_m$.
In the abstract algebra, this amounts to computing
$
    \altnord{\mathcal{B}_n}_{H_\mathrm{cyl}}
        \star
    \altnord{\mathcal{B}_m}_{H_\mathrm{cyl}}.
$
Later, we shall compare this to the product of the
\textit{covariantly} ordered $\mathcal{B}_n$.

As the $B_n$ functionals are quadratic, the power series for
their star product truncates at $\mathcal{O}(\hbar^2)$.
Thus, it may be written in full as
\begin{align}
    \begin{split}
    B_n \star_{H_\mathrm{cyl}, \ell} B_m
        =
    B_n \cdot B_m
        &+
    \hbar
    \left\langle 
        \left[
            \left( \partial_u \otimes \partial_u \right) W_\mathrm{cyl}
        \right]
            ,
        B_n^{(1)}[\psi] \otimes
        B_m^{(1)}[\psi]
    \right\rangle\\
        &+
    \frac{\hbar^2}{2}
    \left\langle 
        \left[
            \left( \partial_u \otimes \partial_u \right) W_\mathrm{cyl}
        \right]^{\otimes 2}
            ,
        B_n^{(2)}[\psi] \otimes
        B_m^{(2)}[\psi]
    \right\rangle.
    \end{split}
\end{align}
First, let us consider the $\mathcal{O}(\hbar)$ term
\begin{align}
    \begin{split}
        \left\langle 
            \left[ \left( \partial_u \otimes \partial_u \right) W_\mathrm{cyl} \right]
                ,
            B_n^{(1)}[\psi] \otimes
            B_m^{(1)}[\psi]
        \right\rangle
        &   = \\
        \sum_{k \in \mathbb{N}}
        \frac{1}{\pi}
        \int_{u  = 0}^{2 \pi}
        \int_{u' = 0}^{2 \pi}
            k e^{-ik(u - u')} \cdot
        &   e^{inu } \psi(u , -u ) \cdot
            e^{imu'} \psi(u', -u')
        \, \mathrm{d}u \mathrm{d}u'.
    \end{split}
\end{align}
We can simplify this slightly by reintroducing the $A_n$ functionals.
Upon doing so, we find
\begin{equation}
    \label{eq:virasoro_o_hbar}
    \left\langle 
        \left[ \left( \partial_u \otimes \partial_u \right) W \right]
            ,
        B_n^{(1)}[\psi] \otimes
        B_m^{(1)}[\psi]
    \right\rangle
        =
    \sum_{k = 1}^\infty k A_{n - k}[\psi] A_{m + k}[\psi].
\end{equation}
(Note that for any function $\psi$ the above series is absolutely convergent as
the smoothness of $\psi$ guarantees $|A_n[\psi]|$ decays rapidly in $n$.)

For the commutator, we need only the anti-symmetric part of \eqref{eq:virasoro_o_hbar},
which is markedly simpler.
For now, however, we proceed to compute the $\mathcal{O}(\hbar^2)$ term.
To do this, we need the following form of the squared propagator:
\begin{equation}
    \label{eq:squared_propagator}
    \left\langle
        \left[
            \left( \partial_u \otimes \partial_u \right)
            W_\mathrm{cyl}
        \right]^2,
        f
    \right\rangle
        =
    \frac{1}{16 \pi^2}
    \sum_{k = 0}^\infty \sum_{l = 0}^{k}
        l(k - l) \int_{\mathcal{M}_\mathrm{cyl}^2} e^{-ik(u - u')} f(u, v, u', v')
    \, \mathrm{dVol}^2.
\end{equation}
This can be obtained na\"ively by just squaring \eqref{eq:cylinder_2_point}
and applying the Cauchy product formula.
For a proof that this indeed converges to the correct distribution,
see \Cref{sec:cauchy_prod_prop}.
We then find
\begin{align}
    \label{eq:virasoro_o_hbar_2}
    \begin{split}
        \frac{1}{2}
        \big\langle 
            \left[
                \left( \partial_u \otimes \partial_u \right) W_\mathrm{cyl}
            \right]^{\otimes 2}
    &           , \,
            B_n^{(2)}[\psi] \otimes
            B_m^{(2)}[\psi]
        \big\rangle \\
    &       =
        \frac{1}{8 \pi^2}
        \sum_{k \in \mathbb{N}} \sum_{l = 0}^k
        l (k - l) 
        \int_{u  = 0}^{2 \pi}
        \int_{u' = 0}^{2 \pi}
            e^{-ik(u - u')}
            e^{in u }
            e^{im u'}
        \, \mathrm{d}u \mathrm{d}u', \\
    &       =
        \frac{1}{2}
        \sum_{k \in \mathbb{N}} \sum_{l = 0}^k
            l (k - l)
        \delta_{n - k, 0} \delta_{m + k, 0}, \\
    &       =
        \frac{n(n^2 - 1)}{12}
        \theta(n)
        \delta_{n+m, 0}.
    \end{split}
\end{align}
Hence, altogether we have
\begin{equation}
    \left(
        B_n
            \star_{H_\mathrm{cyl}, \ell}
        B_m
    \right)
        =
    B_n \cdot B_m
        +
    \hbar
    \sum_{k = 1}^\infty k
       A_{n - k} \cdot A_{m + k}
            +
    \frac{\hbar^2}{12} n^2(n - 1) \theta(n) \delta_{n + m, 0}.
\end{equation}
Next, we compute the commutator
$\left[ B_n, B_m \right]_{\star_{H_\mathrm{cyl}, \ell}}$
Taking the anti-symmetric part of the $\mathcal{O}(\hbar^2)$ term is straighforward:
simply drop the $\theta(n)$.
For \eqref{eq:virasoro_o_hbar}, note that we can write
\begin{equation}
    \sum_{k = 1} k A_{n - k} A_{m + k}
        =
    \frac{1}{2}
    \left(
        \sum_{k \in \mathbb{Z}}  k  A_{n - k} A_{m + k}
            +
        \sum_{k \in \mathbb{Z}} |k| A_{n - k} A_{m + k}
    \right).
\end{equation}
The first series is anti-symmetric under an interchange of $n$ and $m$,
whereas the latter is symmetric and can thus be disregarded.
Next, we take two copies of the anti-symmetric series,
for the first copy we make the change of variables $k \mapsto (n - k)$,
and for the second we choose $k \mapsto (k - m)$.
Recombining these two copies we find
\begin{equation}
    \label{eq:A_n_series_for_B_n_plus_m}
    \sum_{k \in \mathbb{Z}}
        k A_{n - k} A_{m + k}
            =
    \frac{1}{2} (n - m)
    \sum_{k \in \mathbb{Z}}
        A_k A_{n + m - k}.
\end{equation}
By the second convolution theorem,
this final series converges (up to a constant factor) to the
$(n + m)$\textsuperscript{th} Fourier mode of $\psi^2$.
Thus, \eqref{eq:A_n_series_for_B_n_plus_m} is equal to $(n - m)B_{n + m}$,
agreeing with our earlier calculation using the chiral bracket $\{\cdot, \cdot\}_\ell$.
Combining this with the $\mathcal{O}(\hbar^2)$ term \eqref{eq:virasoro_o_hbar_2},
we arrive at the Virasoro relations
\begin{equation}\label{eq:standard_virasoro}
    [B_n, B_m]_{\star_{H_\mathrm{cyl}, \ell}}
        =
    \hbar(n - m)B_{n + m}
        +
    \frac{\hbar^2}{12} n (n^2 - 1) \delta_{n+m, 0}.
\end{equation}
Using the $*$-algebra homomorphism $\partial_u^*$ from Proposition \ref{prop:du-star-alg}, we can then conclude that
\begin{equation}
    [
        \mathcal{B}_n,
        \mathcal{B}_m
    ]_{\star_{H_\mathrm{cyl}}}
        =
    \hbar(n - m) \mathcal{B}_{n + m}
        +
    \frac{\hbar^2}{12} n (n^2 - 1) \delta_{n+m, 0}.
\end{equation}
Finally, applying $\alpha_{H - H_{\mathrm{cyl}}}$ and using the identity \eqref{alpha-H-Hp-relation} we obtain the commutation relation
\begin{equation}
    [
        \altnord{\mathcal{B}_n}_{H_\mathrm{cyl}},
        \altnord{\mathcal{B}_m}_{H_\mathrm{cyl}}
    ]
        =
    \hbar(n - m) \altnord{\mathcal{B}_{n + m}}_{H_\mathrm{cyl}}
        +
    \frac{\hbar^2}{12} n (n^2 - 1) \delta_{n+m, 0}
\end{equation}
in $\mathfrak{A}(\mathcal{M}_{\mathrm{cyl}})$, recalling that $\big( \altnord{\mathcal{B}_n}_{H_\mathrm{cyl}} \big)_H = \alpha_{H - H_{\mathrm{cyl}}} \mathcal{B}_n$.

It is curious that at this stage we have commutators
recognisable as what one might call the `planar' Virasoro relations
(for example \cite[(2.6.6)]{kacVertexAlgebrasBeginners1998})
for a central charge $c = 1$,
despite the fact that all the functionals in question belong on the cylinder.
We will now compute the correction to these relations which occurs when adopting
the locally covariant Wick ordering prescription.
In doing so, we shall see the result is the `radially ordered' Virasoro relations.

Recall from \cref{sec:normal_ordering} that,
heuristically, locally covariant Wick ordering is normal ordering
with respect to the Hadamard parametrix.
In the case of the Minkowski cylinder,
the Hadamard parametrix \eqref{eq:local_Hadamard_form} is particularly simple.
Locally the cylinder is isometric to Minkowski space,
hence the parametrix of the cylinder coincides with that of Minkowski.
For an arbitrary choice of length scale $\lambda$,
the singular part of a Hadamard distribution for the \textit{undifferentiated} field $\phi$ is
\begin{equation}
    W_\mathrm{sing}(u, v; u', v')
        =
    - \frac{1}{4 \pi} \log \left( \frac{(u - u')(v - v')}{\lambda^2} \right).
\end{equation}
Here it is clear that the parametrix exists only locally,
as $W_\mathrm{sing}$ is not spacelike periodic.
Passing over to the differentiated field $\psi$,
the singular term becomes 
\begin{equation}\label{eq:diff_sing}
    \partial_u \partial_{u'} W_\mathrm{sing}(u; u')
        =
    - \frac{1}{4 \pi}
      \frac{1}{(u - u')^2}.
\end{equation}
For the cylindrical vacuum, we have
\begin{equation}
    \partial_u \partial_u' W_\mathrm{cyl}(u; u')
        =
    \frac{1}{4 \pi}
    \sum_{k \in \mathbb{N}}
        k e^{-ik(u - u')}.
\end{equation}
We can think of the above series formally as the derivative of a geometric series.
Replacing $u - u'$ with $z_\epsilon = u - u' - i \epsilon$ makes this series absolutely convergent for $\epsilon > 0$,
thus we can write the $2$-point function as 
\begin{align}
    \partial_u \partial_{u'}
    W_\mathrm{cyl}(u; u') 
        = 
    \frac{1}{4 \pi}
    \lim_{\epsilon \searrow 0} \frac{
        e^{iz_\epsilon}
    }{
        (1 - e^{iz_\epsilon})^2
    }.
\end{align}
Performing an asymptotic expansion of this function near the coincidence limit $u - u' = 0$, we find
\begin{equation}
    \partial_u \partial_{u'} W_\mathrm{cyl}(u; u')
        \approx
    - \frac{1}{4 \pi}
      \frac{1}{(u - u')^2}
        -
      \frac{1}{4 \pi}
      \frac{1}{12}
        +
    \mathcal{O}\left( (u - u')^2 \right).
\end{equation}
Which provides an explicit verification that the vacuum state differs from the parametrix
only by the addition of a smooth, symmetric function.
Moreover, this allows us to calculate
$\nord{\mathcal{B}_n}_{\mathcal{M}_{\mathrm{cyl}}}$.
As we are working in
$\mathfrak{A}^{H_\mathrm{cyl}}(\mathcal{M}_{\rm cyl})$,
we need only compute the functional
$
    \left(
        \nord{\mathcal{B}_n}_{\mathcal{M}_{\mathrm{cyl}}}
    \right)_{H_\mathrm{cyl}}
$,
which is given by
\begin{align}
    \left(
        \nord{\mathcal{B}_n}_{\mathcal{M}_{\mathrm{cyl}}}
    \right)_{H_\mathrm{cyl}}
        &=
    \alpha_{H_\mathrm{cyl} - H_\mathrm{sing}}
    \mathcal{B}_n
    \nonumber \\
        &= 
    \mathcal{B}_n
        +
    \frac{\hbar}{2}
    \left\langle
        H_\mathrm{cyl} - H_\mathrm{sing},
        \mathcal{B}_n^{(2)}
    \right\rangle
    \nonumber \\
        &=
    \mathcal{B}_n
        +
    \frac{\hbar}{2}
    \left\langle
        \left[
            (\partial_u \otimes \partial_u)
            (H_\mathrm{cyl} - H_\mathrm{sing})
        \right],
        B_n^{(2)}
    \right\rangle
    \nonumber \\
        &=
    \mathcal{B}_n
        +
    \hbar \int_{u = 0}^{2 \pi}
        e^{in u}
        \left[ 
            ( \partial_u \partial_{u'} H_\mathrm{cyl} ) -
            ( \partial_u \partial_{u'} H_\mathrm{sing} )
        \right](u, -u; u, -u)
    \, \mathrm{d}u
    \nonumber \\
        &=
    \mathcal{B}_n - \frac{\hbar}{24}\delta_{n, 0}.
\end{align}
For a generic Hadamard state
$H\in \mathrm{Had}(\mathcal{M}_{\mathrm{cyl}})$ we then have
\begin{align}
    \left(
        \nord{\mathcal{B}_n}_{\mathcal{M}_{\mathrm{cyl}}}
    \right)_H
        &=
    \alpha_{H - H_\mathrm{sing}}
    \mathcal{B}_n = \alpha_{H - H_\mathrm{cyl}} \big( \alpha_{H_\mathrm{cyl} - H_\mathrm{sing}}
    \mathcal{B}_n \big)
    \nonumber \\
        &= \alpha_{H - H_\mathrm{cyl}} \mathcal{B}_n - \frac{\hbar}{24}\delta_{n, 0}
    = \big( \altnord{\mathcal{B}_n}_{H_\mathrm{cyl}} \big)_H - \frac{\hbar}{24}\delta_{n, 0}.
\end{align}
In other words, the quantum observables $\nord{\mathcal{B}_n}_{\mathcal{M}_{\mathrm{cyl}}}$ and $\altnord{\mathcal{B}_n}_{H_\mathrm{cyl}}$ in $\mathfrak{A}(\mathcal{M}_{\mathrm{cyl}})$ defined, respectively, as the locally covariant Wick ordering and the normal ordering with respect to the vacuum $H_{\mathrm{cyl}}$ of the classical functionals $\mathcal{B}_n$, are related by a shift
\begin{equation} \label{eq:zero_mode_shift}
    \nord{\mathcal{B}_n}_{\mathcal{M}_{\mathrm{cyl}}}
        =
    \altnord{\mathcal{B}_n}_{H_\mathrm{cyl}} - \frac{\hbar}{24}\delta_{n, 0}.
\end{equation}
With this shift we find, as expected, that the commutation relations of $\nord{\mathcal{B}_n}_{\mathcal{M}_{\mathrm{cyl}}}$ are
\begin{equation}
    \label{eq:shifted_virasoro}
    \left[
            \nord{\mathcal{B}_n}_{\mathcal{M}_\mathrm{cyl}},
            \nord{\mathcal{B}_m}_{\mathcal{M}_\mathrm{cyl}} 
    \right]
        =
    \hbar (n - m) \nord{\mathcal{B}_{n + m}}_{\mathcal{M}_\mathrm{cyl}} 
    + \frac{\hbar^2}{12} n^3 \delta_{n + m, 0}.
\end{equation}

Recall that $\altnord{-}_{H_\mathrm{cyl}}$
can be interpreted as normal ordering with respect to the vacuum $H_\mathrm{cyl}$.
Moreover, we established the Hadamard parametrix $H_\mathrm{sing}$
of the cylinder is effectively the 2-point function of the Minkowski vacuum,
embedded into some suitable neighbourhood of
$\Delta \subset \mathcal{M}_\mathrm{cyl}^2$.
Accordingly, \eqref{eq:standard_virasoro} computes the commutation relations
for Fourier modes of the stress-energy tensor normally ordered with respect to
$H_\mathrm{cyl}$, and \eqref{eq:shifted_virasoro} the same but
ordered with respect to the Minkowski vacuum.

In the standard approach to CFT in two dimensions,
one typically imposes \eqref{eq:standard_virasoro} as the
standard commutation relations for Laurent modes of the stress energy tensor,
here understood as a field over the complex plane in a particular sense.
Then, mapping the plane to the `cylinder' via the map $z \mapsto e^{iz}$,
one may obtain the radially ordered commutation relations,
concordant with \eqref{eq:shifted_virasoro}.
However, in our framework, it does not make much sense
to speak of a Virasoro algebra for the plane,
as there is no suitable notion of mode expansion for the stress-energy tensor.
In fact, arguably the most significant differences between our approach
and the VOA framework is that the latter relies on mode decomposition
in order to analyse the singularity structure of quantum fields,
whereas we instead use tools from microlocal analysis.
\todo[color=todogreen, inline]{
    Changed it. Do we like this as a closing remark?
}
\todo[color=todoBenoit, inline]{
    Yes, looks good now.
}

\subsection{Connection to Zeta Regularisation}

There is a well known trick in the physicists' literature to explain \eqref{eq:zero_mode_shift}.
Firstly, recall that we can write a given $\mathcal{B}_n$ functional as an
infinite series over $\mathcal{A}_m$ functionals
(which is point-wise convergent) as:
\begin{equation}
    \label{eq:B_in_terms_of_A}
    \mathcal{B}_n
        =
    \frac{1}{2}
    \sum_{k \in \mathbb{Z}}
        \mathcal{A}_k \cdot \mathcal{A}_{n - k}.
\end{equation}
The $\star_{H_\mathrm{cyl}}$ product of two such functionals is
\begin{equation}
    \mathcal{A}_k
        \star_{H_\mathrm{cyl}}
    \mathcal{A}_{n - k}
        =
    \mathcal{A}_k \cdot \mathcal{A}_{n - k}
        +
    \hbar k \theta(k) \delta_{n, 0}.
\end{equation}
In particular, for $n \neq 0$ this means that
$
    \mathcal{A}_k
        \cdot
    \mathcal{A}_{n - k}
        =
    \mathcal{A}_{k}
        \star_{H_\mathrm{cyl}}
    \mathcal{A}_{n - k}.
$
Hence, we can define a family of observables
$
    \{(\mathcal{L}_n)\}_{n \in \mathbb{Z}^*}
        \subset
    \mathfrak{A}(\mathcal{M}_{\mathrm{cyl}})
$
by replacing the classical pointwise product $\cdot$ in
\eqref{eq:B_in_terms_of_A} with $\star$.
This family would then coincide with
$
    \{\altnord{\mathcal{B}_n}_{H_\mathrm{cyl}}\}_{n \in \mathbb{Z}^*}.
$
For $n = 0$, we may still replace the pointwise product with
$\star_{H_\mathrm{cyl}}$, but the ordering of the functionals is now significant. Na\"ively replacing the classical pointwise product $\cdot$ in
\eqref{eq:B_in_terms_of_A} for $n=0$ by the $\star$ product yields
the quantum observable
\begin{equation}
    \mathcal{B}_0
        =
    \frac{1}{2}
    \sum_{k = 1}^\infty
        \mathcal{A}_{- k}
            \star_{H_\mathrm{cyl}}
        \mathcal{A}_{k}
            +
    \frac{1}{2}
    \sum_{k = - \infty}^0
        \mathcal{A}_{-k}
            \star_{H_\mathrm{cyl}}
        \mathcal{A}_{k}.     
\end{equation}
Casting rigour aside,
we could then `reorder' $\mathcal{B}_0$ by moving every
$\mathcal{A}_k$ in the second series to the left hand side of the product,
which would produce the infamous divergent series
\begin{equation}
    \mathcal{B}_0
        =
    \frac{1}{2} \mathcal{A}_0
            \star_{H_\mathrm{cyl}}
        \mathcal{A}_0 +
    \sum_{k=1}^\infty
        \mathcal{A}_{-k}
            \star_{H_\mathrm{cyl}}
        \mathcal{A}_k
        +
    \frac{\hbar}{2}
    \sum_{k \in \mathbb{N}} k.
\end{equation}

The rigourous and covariant way of reordering $\mathcal{B}_0$,
as we saw in the previous section, is to apply the map
$\alpha_{H_\mathrm{cyl} - H_\mathrm{sing}}$.
If we define
$
    w_\mathrm{cyl}(u)
        :=
    (\partial_u \otimes \partial_u)
    \big[
        H_\mathrm{cyl}(u; 0)
            -
        H_\mathrm{sing}(u; 0)
    \big]
$,
where we exploit translation invariance to write
$w_\mathrm{cyl}$ as a function of a single variable,
then we can write the normally ordered form of $\mathcal{B}_0$ as
\begin{equation}
    \alpha_{H_\mathrm{cyl} - H_\mathrm{sing}}
        \mathcal{B}_0
            =
    \mathcal{B}_0 + \frac{\hbar}{2} \lim_{u \to 0} w_\mathrm{cyl}(u).
\end{equation}
By approximating both $H_\mathrm{cyl}$ and $H_\mathrm{sing}$
by smooth functions, we can write 
\begin{align}
    w_\mathrm{cyl}(u)
        &=
    \lim_{\epsilon \searrow 0} \left[
        \sum_{n = 0}^\infty n e^{-inu} e^{-n\epsilon}
            -
        \int_{p = 0}^\infty p e^{-ipu} e^{-np} \, \mathrm{d}p
    \right] \\
        &=
    \lim_{z \to -iu} \frac{d}{dz} \left[
        \frac{1}{1 - e^z} + \frac{1}{z}
    \right] \\
        &=
    \lim_{z \to -iu} \frac{d}{dz} \left[
        - \sum_{k = 0}^\infty \frac{B_k}{k!} z^{k-1} + z^{-1}
    \right] \\
        &=
    \lim_{z \to -iu} \frac{d}{dz} \left[
        \sum_{k=0}^\infty \frac{\zeta(-k)}{k!} z^k
    \right],
\end{align}
where here $B_k$ denotes the $k^\text{th}$ Bernoulli number.
This explains the appearance of $\zeta(-1)$ in the normal ordering of
$\mathcal{B}_0$ without any recourse to intermediate divergent series.

To close out this section,
we make a brief remark about how our notion of normal ordering
corresponds to the procedure of shuffling creation operators past annihilators,
or similarly the normally ordered products of chiral fields
\cite[(2.3.5)]{kacVertexAlgebrasBeginners1998}.

Consider the classical product of a collection of $\mathcal{A}_{m_i}$,
the functional derivative of this may be written
$
    (\mathcal{A}_{m_1} \cdots \mathcal{A}_{m_k})^{(1)}
        =
    \sum_{i = 1}^k
        (\mathcal{A}_{m_1} \cdots
        \widehat{\mathcal{A}_{m_i}} \cdots
        \mathcal{A}_{m_k})
        \mathcal{A}_{m_i}^{(1)},
$
where $\widehat{-}$ indicates ommission.
From this we may compute that
\begin{equation}
    \left( \mathcal{A}_{m_1} \cdots \mathcal{A}_{m_k} \right)
        \star_{H_\mathrm{cyl}}
    \mathcal{A}_{n}
        =
    \mathcal{A}_{m_1} \cdots \mathcal{A}_{m_k} \cdot \mathcal{A}_n
        +
    \hbar \sum_{i = 1}^k
        (\mathcal{A}_{m_1} \cdots
        \widehat{\mathcal{A}_{m_i}} \cdots
        \mathcal{A}_{m_k})
        m_i \theta(-m_i) \delta_{m_i + n, 0}.
\end{equation}
Note that the $i^\text{th}$ term in the sum vanishes if $n \leq m_i$.
If we have $n \leq m_i$ for every $i \in \{1, \ldots k\}$,
then we are only left with the $\hbar^0$ term on the right hand side.
Moving to the abstract algebra $\mathfrak{A}(\mathcal{M}_{\mathrm{cyl}})$
by applying the formal map $\alpha_{H_\mathrm{cyl}}^{-1}$,
we then have
\begin{equation}
    \altnord{\mathcal{A}_{m_1} \cdots \mathcal{A}_{m_k}
        \cdot
    \mathcal{A}_n}_{H_\mathrm{cyl}}
    =
    \altnord{\mathcal{A}_{m_1} \cdots \mathcal{A}_{m_k}}_{H_\mathrm{cyl}}
        \star
    \, \mathcal{A}_{n},
\end{equation}
where we make use of the fact that we can canonically identify
linear classical observables with their quantum counterparts.
Applying this procedure iteratively,
if we assume that the sequence $i \mapsto m_i$ is monotonically decreasing,
then we can write
\begin{equation}
    \altnord{\mathcal{A}_{m_1} \cdots \mathcal{A}_{m_k}}_{H_\mathrm{cyl}}
        =
    \mathcal{A}_{m_1} \star \cdots \star \mathcal{A}_{m_k},
        \qquad
    m_i \leq m_{i+1}.
\end{equation}
Given that $[\mathcal{A}_m, \mathcal{A}_n] = 0$
whenever $m$ and $n$ are either both negative or both positive,
we have recovered the familiar result that normal ordering
moves $\mathcal{A}_m$ ``to the right'' if $m \leq 0$
and ``to the left'' if $m > 0$.

\section{Conformal Covariance}
\label{sec:conformal_covariance}

So far, our classical and quantum algebras of observables are insensitive to any
conformal symmetries a given theory may possess.
This is because the morphisms in $\mathsf{Loc}$ are isometric embeddings,
required to preserve the metric exactly.
To study the contidions for and consequences of conformal covariance,
we must relax this condition to allow \textit{conformally} admissible embeddings.
\begin{definition}[Conformally admissible embedding]
    Let $ \mathcal{M} = (M, g, \mathfrak{o}, \mathfrak{t})$
    and $ \mathcal{N} = (N, g', \mathfrak{o}', \mathfrak{t}')$
    be a pair of spacetimes
    (i.e. objects of $\mathsf{Loc}$).
    A smooth embedding
        $\chi: M \hookrightarrow N$
    is \textit{conformally admissible} if
        $\chi^*\mathfrak{o}' = \mathfrak{o}$,
        $\chi^*\mathfrak{t}' = \mathfrak{t}$, and
        $\chi^* g' = \Omega^2 g$,
    where $\Omega \in \mathfrak{E}(\mathcal{M})$ is some nowhere-vanishing function
    known as the \textit{conformal factor}.
\end{definition}
The category $\mathsf{CLoc}$
-- first introduced by Pinamonti in \cite{pinamontiConformalGenerallyCovariant2009} --
is the natural setting for the study of conformal field theories.
It comprises the same objects as $\mathsf{Loc}$,
but enlarges the collection of morphisms to conformally admissible embeddings.
As one might expect, we upgrade the concept of
locally covariant field theory to
locally \textit{conformally} covariant field theory
simply by replacing the category $\mathsf{Loc}$ with $\mathsf{CLoc}$.
In the next section, we show explicitly how this may be done
for a large class of classical theories,
and for the conformally coupled scalar field in the quantum case.

\subsection{Conformally Covariant Field Theory}

\subsubsection{Conformal Lagrangians}

In this section we shall outline the language necessary
to identify a particular Lagrangian (more precisely, its corresponding action)
as being conformally covariant.
In order to do so we must first introduce some notation.

\begin{definition}[Weighted Pushforward/Pullback]
    Let $\chi: \mathcal{M} \hookrightarrow \mathcal{N}$ be
    a conformally admissible embedding with conformal factor $\Omega^2$.
    Given $\Delta \in \mathbb{R}$, the \textit{weighted pushforward}
    with respect to $\Delta$ is defined by
    \begin{align}
        \label{eq:weighted_pushforward}
        \chi_*^{(\Delta)}:
            \mathfrak{D}(\mathcal{M})
                &\to
            \mathfrak{D}(\mathcal{N}),
                \nonumber \\
            f
                &\mapsto
            \chi_*\left( \Omega^{-\Delta} f \right),
    \end{align}
    where $\chi_*$ denotes the standard pushforward of test functions
    \eqref{eq:test_function_pushforward}.
    Similarly, we define the \textit{weighted pullback} with respect to $\Delta$ by
    \begin{align}
        \chi^*_{(\Delta)}:
            \mathfrak{E}(\mathcal{N})
                &\to
            \mathfrak{E}(\mathcal{M}),
                \nonumber \\
            \phi
                &\mapsto
            \Omega^{\Delta} \chi^* \phi.
    \end{align}
\end{definition}

In the following proposition, we collect some useful properties of these maps.

\begin{prop}
    \label{prop:weighted_map_facts}
    Let $\chi \in \mathrm{Hom}_{\mathsf{CLoc}}(\mathcal{M}; \mathcal{N})$,
    and $\rho \in \mathrm{Hom}_{\mathsf{CLoc}}(\mathcal{N}; \mathcal{O})$.
    Then, 
    \begin{enumerate}
        \item   \label{itm:weighted_pushforward_composition}
                $
                    \rho_*^{(\Delta)} \circ
                    \chi_*^{(\Delta)}
                        =
                    \left( \rho \circ \chi \right)_*^{(\Delta)}
                $
        \item   \label{itm:weighted_pullback_composition}
                $
                    \chi^*_{(\Delta)}
                        \circ
                    \rho^*_{(\Delta)}
                    =
                    \left( \rho \circ \chi \right)^*_{(\Delta)}
                $
        \item   \label{itm:weighted_maps_duality}
                For
                $\phi \in \mathfrak{E}(\mathcal{N})$,
                $f    \in \mathfrak{D}(\mathcal{M})$
                \begin{equation*}
                    \int_{\mathcal{N}}
                        \phi \left( \chi_*^{(\Delta)} f \right)
                    \mathrm{dVol}_\mathcal{N}
                            =
                    \int_{\mathcal{M}}
                        \left( \chi^*_{(d - \Delta)} \phi \right) f
                    \mathrm{dVol}_\mathcal{M},
                \end{equation*}
                where $d = \mathrm{Dim}(\mathcal{M}) = \mathrm{Dim}(\mathcal{N})$.
    \end{enumerate}
\end{prop}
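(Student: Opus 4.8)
The plan is to dispatch the three items in order, first isolating the single geometric fact behind items~\ref{itm:weighted_pushforward_composition} and~\ref{itm:weighted_pullback_composition}: a cocycle identity for conformal factors under composition. Write $\Omega_\chi$ and $\Omega_\rho$ for the conformal factors of $\chi$ and $\rho$, so that $\chi^* g' = \Omega_\chi^2\, g$ and $\rho^* g'' = \Omega_\rho^2\, g'$ (both nowhere vanishing, hence taken positive so that real powers are unambiguous). Pulling the second equation back along $\chi$ and inserting the first gives $(\rho\circ\chi)^* g'' = \big((\chi^*\Omega_\rho)\,\Omega_\chi\big)^2 g$, so the conformal factor of $\rho\circ\chi$ is
\[
    \Omega_{\rho\circ\chi} = (\chi^*\Omega_\rho)\,\Omega_\chi .
\]
I would record this at the outset.

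Item~\ref{itm:weighted_pullback_composition} then follows by unwinding the definitions and using only functoriality and multiplicativity of the ordinary pullback:
\[
    \chi^*_{(\Delta)}\big(\rho^*_{(\Delta)}\phi\big)
        = \Omega_\chi^\Delta\,\chi^*\!\big(\Omega_\rho^\Delta\,\rho^*\phi\big)
        = \big(\Omega_\chi\,(\chi^*\Omega_\rho)\big)^{\Delta}\,(\rho\circ\chi)^*\phi
        = \Omega_{\rho\circ\chi}^{\Delta}\,(\rho\circ\chi)^*\phi .
\]
Item~\ref{itm:weighted_pushforward_composition} is the same computation dualised, with one elementary lemma: for $a\in\mathfrak{E}(\mathcal{N})$ and $h\in\mathfrak{D}(\mathcal{M})$ one has $a\cdot(\chi_* h) = \chi_*\big((\chi^*a)\,h\big)$, since by \eqref{eq:test_function_pushforward} both sides are supported in $\chi(\mathcal{M})$ and agree there. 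Applying this with $a=\Omega_\rho^{-\Delta}$ and using functoriality of the ordinary pushforward,
\[
    \rho_*^{(\Delta)}\big(\chi_*^{(\Delta)}f\big)
        = \rho_*\!\Big(\Omega_\rho^{-\Delta}\,\chi_*\big(\Omega_\chi^{-\Delta}f\big)\Big)
        = (\rho\circ\chi)_*\big(\Omega_{\rho\circ\chi}^{-\Delta}f\big)
        = (\rho\circ\chi)_*^{(\Delta)}f .
\]

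For item~\ref{itm:weighted_maps_duality} the idea is to compute the left-hand integral by pulling back along $\chi$. Since $\chi_*^{(\Delta)}f = \chi_*(\Omega^{-\Delta}f)$ is supported in $\chi(\mathcal{M})$, the integral over $\mathcal{N}$ reduces to one over $\chi(\mathcal{M})$, and the change of variables $y=\chi(x)$ turns it into $\int_{\mathcal{M}} (\chi^*\phi)\,\Omega^{-\Delta}\,f\;\chi^*(\mathrm{dVol}_{\mathcal{N}})$. The only nontrivial input is the conformal scaling of the metric volume form: since $\chi$ preserves orientation and $\chi^* g' = \Omega^2 g$ in dimension $d$, one has $\chi^*(\mathrm{dVol}_{\mathcal{N}}) = \mathrm{dVol}_{\Omega^2 g} = \Omega^d\,\mathrm{dVol}_{\mathcal{M}}$ (from $\sqrt{|\det(\Omega^2 g)|} = \Omega^d\sqrt{|\det g|}$). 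Collecting the powers of $\Omega$ leaves $\Omega^{\,d-\Delta}\,\chi^*\phi = \chi^*_{(d-\Delta)}\phi$ inside the integral, which is exactly the asserted identity. All three arguments are routine; the only points needing care are deriving the cocycle identity cleanly and, in item~\ref{itm:weighted_pushforward_composition}, commuting multiplication by a function on $\mathcal{N}$ past the extension-by-zero map $\chi_*$ — so I would state the lemma $a\cdot(\chi_* h) = \chi_*((\chi^*a)\,h)$ explicitly before using it — and I do not expect a genuine obstacle.
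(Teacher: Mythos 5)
Your proofs of items (2) and (3) coincide with the paper's: the same cocycle identity $\Omega_{\rho\circ\chi} = (\chi^*\Omega_\rho)\,\Omega_\chi$ drives item (2), and the same restriction to $\chi(\mathcal{M})$ plus the scaling $\chi^*(\mathrm{dVol}_\mathcal{N}) = \Omega^d\,\mathrm{dVol}_\mathcal{M}$ gives item (3). Where you genuinely diverge is item (1). The paper deliberately defers it and derives it \emph{indirectly}: it pairs $\rho_*^{(\Delta)}\chi_*^{(\Delta)}f$ against an arbitrary $h \in \mathfrak{D}(\mathcal{O})$, applies the duality of item (3) twice to move everything to $\mathcal{M}$, invokes item (2) on the pullback side, and concludes by non-degeneracy of the integral pairing. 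You instead prove it \emph{directly} on the level of functions, via the projection-formula lemma $a\cdot(\chi_* h) = \chi_*((\chi^* a)\,h)$, which lets you commute the multiplication by $\Omega_\rho^{-\Delta}$ past the extension-by-zero map. Both are correct. Your route is more elementary and self-contained -- it needs neither item (3) nor the separating property of the pairing, and it makes explicit the one point the paper's dualisation quietly absorbs (that multiplication interacts with $\chi_*$ via pullback of the multiplier). The paper's route buys the observation that (1) and (2) are formally adjoint to one another through (3), which is mildly illuminating but not logically necessary. Either proof would be acceptable; yours is arguably cleaner as a standalone argument.
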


\begin{proof}
    The first of these results is easiest to see as a consequence of
    the other two, thus we defer its proof until the end.

    Result \ref{itm:weighted_pullback_composition} can be obtained by
    a direct computation.
    Firstly, note that if
    $\chi^* g_\mathcal{N} = \Omega_\chi^2 g_\mathcal{M}$, and
    $\rho^* g_\mathcal{O} = \Omega_\rho^2 g_\mathcal{N}$,
    then the conformal factor for $\rho \circ \chi$ is given by
    $
        (\rho \circ \chi)^* g_\mathcal{O}
            =
        (\Omega_\chi \cdot \chi^* \Omega_\rho)^2
        g_\mathcal{M}.
    $
    If we select some arbitrary $\phi \in \mathfrak{E}(\mathcal{O})$, then
    \begin{align*}
        \chi^*_{(\Delta)} \left( \rho^*_{(\Delta)} \phi \right)
       &    =
        \chi^*_{(\Delta)} \left( \Omega_\rho^\Delta \rho^* \phi \right)
            \\
       &    =
        \left(\Omega_\chi \cdot ( \chi^* \Omega_\rho ) \right)^\Delta
        \left( \chi^* \rho^* \phi \right)
            \\
       &    =
        \left( \rho \circ \chi \right)^*_{(\Delta)} \phi.
    \end{align*}

    To prove \ref{itm:weighted_maps_duality}, first note that, because
    $\mathrm{supp} \, \big( \chi_*^{(\Delta)} f \big) \subseteq \chi(\mathcal{M})$,
    we may restrict the first integral to $\chi(\mathcal{M})$,
    where we may consider $\chi$ to be a diffeomorphism.
    Next, recall that a standard result for conformal transformations states
    $\chi^* (\mathrm{dVol}_\mathcal{N}) = \Omega^d \mathrm{dVol}_\mathcal{M}$.
    From this we find 
    \begin{align*}
        \chi^*  \left(
                    \phi \cdot (\chi_*^{(\Delta)} f) \cdot \mathrm{dVol}_\mathcal{N}
                \right)
       &    =
        \left( \chi^* \phi \right) \cdot
        \left( \Omega^{-\Delta} f \right) \cdot
        \left( \Omega^d \mathrm{dVol}_\mathcal{M} \right)
            \\
       &    =
        \left( \chi^*_{(d - \Delta)} \phi \right) f \, \mathrm{dVol}_\mathcal{M}.
    \end{align*}

    Finally, to prove \ref{itm:weighted_pushforward_composition},
    let $f \in \mathfrak{D}(\mathcal{M})$ and take
    some arbitrary test function $h \in \mathfrak{D}(\mathcal{O})$.
    Then, consider
    $
        \int_{\mathcal{O}}
            h
            \left( \rho_*^{(\Delta)} \chi_*^{(\Delta)} f \right)
        \, \mathrm{dVol}_\mathcal{O}.
    $
    Using the two results we have just established, we see that
    \begin{align*}
        \int_{\mathcal{O}}
            h
            \left( \rho_*^{(\Delta)} \chi_*^{(\Delta)} f \right)
        \, \mathrm{dVol}_\mathcal{O}
       &    =
        \int_\mathcal{M}
            \left( \chi^*_{(d - \Delta)} \rho^*_{(d - \Delta)} h \right)
            f
        \, \mathrm{dVol}_\mathcal{M}
            \\
       &    =
        \int_\mathcal{M}
            \left( (\rho \circ \chi)^*_{(d - \Delta)} h \right)
            f
        \, \mathrm{dVol}_\mathcal{M}
            \\
       &    =
        \int_{\mathcal{O}}
            h
            \left( (\rho_ \circ \chi)_*^{(\Delta)} f \right)
        \, \mathrm{dVol}_\mathcal{O}.
    \end{align*}
    Thus, as this holds for every choice of $h \in \mathfrak{D}(\mathcal{O})$,
    we can conclude that
    $
        \rho_*^{(\Delta)} \chi_*^{(\Delta)} f
            =
        (\rho_ \circ \chi)_*^{(\Delta)} f.
    $
\end{proof}

Using these definitions,
we can then state the condition required for the theory arising from
a natural Lagrangian $\mathcal{L}$ to be conformally covariant.

\begin{definition}[Conformal Natural Lagrangian]
    Let $\mathcal{L}: \mathfrak{D} \Rightarrow \mathfrak{F}_\mathrm{loc}$ be
    a natural Lagrangian as per \Cref{sec:normal_ordering}.
    Suppose there exists $\Delta \in \mathbb{R}$ such that,
    for every conformally admissible embedding
    $\chi \in \mathrm{Hom}_{\mathsf{CLoc}}(\mathcal{M}; \mathcal{N})$,
    every $\phi \in \mathfrak{E}(\mathcal{N})$,
    and every $f \in \mathfrak{D}(\mathcal{M})$
    \begin{equation}
        \label{eq:conformal_lagrangian_cond}
        \left\langle 
            S_\mathcal{M}'[\chi^*_{(\Delta)} \phi],
            f
        \right\rangle
            =
        \left\langle 
            S_\mathcal{N}'[\phi],
            \chi_*^{(\Delta)} f
        \right\rangle,
    \end{equation}
    where $S'_\mathcal{M}$ is the Euler-Lagrange derivative of
    $\mathcal{L}_\mathcal{M}$ as defined in
    \eqref{eq:Euler_Lagrange}.
    In this case, we call $\mathcal{L}$ a \textit{conformal natural Lagrangian}.
\end{definition}
 
We can state this condition more elegantly by
once again taking the BV perspective where,
instead of focussing on the natural Lagrangian $\mathcal{L}$,
we use its associated differential
$\delta_S: \mathfrak{V}_{\mu c} \Rightarrow \mathfrak{F}_{\mu c}$.

Firstly, we can use the weighted pullback to define a modification of
the functor assigning a spacetime its classical observables,
$\mathfrak{F}_{\mu c}$.
For $\Delta \in \mathbb{R}$, let $\mathfrak{F}_\mathrm{\mu c}^{(\Delta)}$ be
a functor $\mathsf{CLoc} \to \mathsf{Vec}$ which assigns
to each spacetime $\mathcal{M}$ its microcausal observables
$\mathfrak{F}_\mathrm{\mu c}(\mathcal{M})$ as usual,
but assigns to
$\chi \in \mathrm{Hom}_{\mathsf{CLoc}}(\mathcal{M}; \mathcal{N})$
the morphism
\begin{equation}
    \label{eq:F_mc_delta_morphisms}
    (\mathfrak{F}_\mathrm{\mu c}^{(\Delta)}\chi \mathcal{F})[\phi]
        :=
    \mathcal{F}[\chi^*_{(\Delta)} \phi].
\end{equation}
\Cref{prop:weighted_map_facts} assures us these morphisms compose as they should.
Moreover, by using
\begin{equation}
    \label{eq:F_mc_delta_derivatives}
    \left( \mathfrak{F}_{\mu c}^{(\Delta)}\chi \mathcal{F} \right)^{(n)}[\phi]
        =
    \left( \chi_*^{(d - \Delta)} \right)^{\otimes n}
    \mathcal{F}^{(n)}[\chi^*_{(\Delta)} \phi],
\end{equation}
we can see that the wavefront sets of
functional derivatives are independent of the choice of $\Delta$.
Then, by noting that the joint future/past lightcones
$\overline{V}_\pm^n$ are preserved under pullback by $\chi$
are, and both preserved under pushforward by a conformal embedding,
the wavefront set spectral condition
\eqref{eq:spectral_condition} is also preserved.
Hence
$
    \mathfrak{F}_{\mu c}^{(\Delta)}\chi:
    \mathfrak{F}_{\mu c}(\mathcal{M})
        \to
    \mathfrak{F}_{\mu c}(\mathcal{N})
$
as desired.

Similarly to $\mathfrak{F}_\mathrm{\mu c}$, for any choice of weight $\Delta$,
we can define an extention
$\mathfrak{V}^{(\Delta)}_{\mu c}: \mathsf{CLoc} \to \mathsf{Vec}$ by
\begin{equation*}
    \left( 
        \mathfrak{V}_{\mu c} \chi X
    \right)[\phi]
        =
    \chi_*^{(\Delta)} (X[\chi^*_{(\Delta)} \phi]),
\end{equation*}
where $\chi_*^{(\Delta)}$ is again the weighted pushforward of test functions.
Recall that we defined local covariance in the BV formalism as
the condition that $\delta_S$ is a natural transformation
$\mathfrak{V}_{\mu c} \Rightarrow \mathfrak{F}_{\mu c}$,
where each is a functor $\mathsf{Loc} \to \mathsf{Vec}$.
Similarly, \eqref{eq:conformal_lagrangian_cond} simply states that
such a theory is conformally covariant if
the same collection of maps comprising $\delta_S$
also define a natural transformation
$
    \delta_S:
    \mathfrak{V}^{(\Delta)}_{\mu c} \Rightarrow
    \mathfrak{F}^{(\Delta)}_\mathrm{\mu c},
$
where each is now a functor $\mathsf{CLoc} \to \mathsf{Vec}$.

\subsubsection{Conformally Covariant Classical Field Theory}
\label{sec:conformally_covariant_classical}

We can now see how the criterion for conformal covariance
that has just been outlined gives rise to classical dynamical structures
which vary as one would expect under conformal transformations.
The first result compares the linearised equations of motion
on two spacetimes related by a conformally admissible embedding.

\begin{prop}
    Let $\mathcal{L}$ be a conformal natural Lagrangian which satisfies
    the linearisation hypothesis \eqref{eq:linearisation_hypothesis}.
    If $\chi \in \mathrm{Hom}_{\mathsf{CLoc}}(\mathcal{M}; \mathcal{N})$ and
    $\phi \in \mathfrak{E}(\mathcal{N})$, then
    \begin{equation}
        \label{eq:linearised_eom_relation_test}
        \chi_*^{(d - \Delta)} P_\mathcal{M}[\chi^*_{(\Delta)} \phi]
            =
        P_\mathcal{N}[\phi] \chi_*^{(\Delta)},
    \end{equation}
    where each differential operator has been implicitly restricted to the space of
    \textit{test} functions of the appropriate spacetime.
\end{prop}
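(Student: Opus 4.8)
The plan is to reduce this identity about differential operators to the already-established conformal covariance condition \eqref{eq:conformal_lagrangian_cond} by taking a further functional derivative. Recall that $P_\mathcal{M}[\psi]$ is defined (via the linearisation hypothesis \eqref{eq:linearisation_hypothesis}) by $\langle P_\mathcal{M}[\psi]g, h\rangle = \langle S_\mathcal{M}''[\psi], h\otimes g\rangle$, and that $S''$ is obtained from $S'$ by one more derivative. So the first step is to write \eqref{eq:conformal_lagrangian_cond} as
\begin{equation*}
    \big\langle S_\mathcal{M}'[\chi^*_{(\Delta)}\phi], f \big\rangle
        =
    \big\langle S_\mathcal{N}'[\phi], \chi_*^{(\Delta)} f \big\rangle
\end{equation*}
for all $\phi \in \mathfrak{E}(\mathcal{N})$, $f \in \mathfrak{D}(\mathcal{M})$, and then differentiate both sides in $\phi$ along a direction $g \in \mathfrak{D}(\mathcal{N})$, i.e. apply $\frac{d}{d\epsilon}\big|_{\epsilon=0}$ after replacing $\phi$ by $\phi + \epsilon g$.

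The key point for the left-hand side is that $\chi^*_{(\Delta)}$ is linear in its argument, so $\chi^*_{(\Delta)}(\phi + \epsilon g) = \chi^*_{(\Delta)}\phi + \epsilon\, \chi^*_{(\Delta)} g$, and the chain rule gives
\begin{equation*}
    \frac{d}{d\epsilon}\Big|_0 \big\langle S_\mathcal{M}'[\chi^*_{(\Delta)}(\phi+\epsilon g)], f \big\rangle
        =
    \big\langle S_\mathcal{M}''[\chi^*_{(\Delta)}\phi], f \otimes \chi^*_{(\Delta)} g \big\rangle
        =
    \big\langle P_\mathcal{M}[\chi^*_{(\Delta)}\phi]\, \chi^*_{(\Delta)} g,\, f \big\rangle,
\end{equation*}
using self-adjointness of $P_\mathcal{M}$ to put the operator on the $g$-slot. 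For the right-hand side, $\chi_*^{(\Delta)} f$ does not depend on $\phi$, so differentiating only hits $S_\mathcal{N}'$:
\begin{equation*}
    \frac{d}{d\epsilon}\Big|_0 \big\langle S_\mathcal{N}'[\phi+\epsilon g], \chi_*^{(\Delta)} f \big\rangle
        =
    \big\langle S_\mathcal{N}''[\phi], \chi_*^{(\Delta)} f \otimes g \big\rangle
        =
    \big\langle P_\mathcal{N}[\phi]\, g,\, \chi_*^{(\Delta)} f \big\rangle.
\end{equation*}
Now I would apply the duality statement \ref{itm:weighted_maps_duality} of \Cref{prop:weighted_map_facts} to the right-hand pairing: since $P_\mathcal{N}[\phi]g \in \mathfrak{E}(\mathcal{N})$ and $f \in \mathfrak{D}(\mathcal{M})$, we have
\begin{equation*}
    \big\langle P_\mathcal{N}[\phi]g,\, \chi_*^{(\Delta)} f \big\rangle
        =
    \big\langle \chi^*_{(d-\Delta)}\big(P_\mathcal{N}[\phi]g\big),\, f \big\rangle.
\end{equation*}
Similarly, on the left I would rewrite $\langle P_\mathcal{M}[\chi^*_{(\Delta)}\phi]\,\chi^*_{(\Delta)}g,\, f\rangle$ — here $P_\mathcal{M}[\cdots]\chi^*_{(\Delta)}g$ is already a function on $\mathcal{M}$ paired against $f \in \mathfrak{D}(\mathcal{M})$, so nothing needs moving. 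Equating the two differentiated sides and using that the identity holds for all $f \in \mathfrak{D}(\mathcal{M})$ gives, as an equality in $\mathfrak{E}(\mathcal{M})$ (or $\mathfrak{D}'(\mathcal{M})$),
\begin{equation*}
    P_\mathcal{M}[\chi^*_{(\Delta)}\phi]\, \chi^*_{(\Delta)} g
        =
    \chi^*_{(d-\Delta)}\big( P_\mathcal{N}[\phi]\, g \big)
    \qquad \forall g \in \mathfrak{D}(\mathcal{N}).
\end{equation*}
Finally, to convert this into the stated form \eqref{eq:linearised_eom_relation_test}, which is an identity of operators on test functions, I note that $\chi^*_{(d-\Delta)}$ restricted to the relevant image is invertible with inverse a weighted pushforward — more precisely, for functions supported in $\chi(\mathcal{M})$ the weighted pullback and pushforward are mutually inverse up to the appropriate conformal weight — and applying the inverse $\chi_*^{(d-\Delta)}$ to both sides turns the displayed equation into $\chi_*^{(d-\Delta)} P_\mathcal{M}[\chi^*_{(\Delta)}\phi] = P_\mathcal{N}[\phi]\,\chi_*^{(\Delta)}$ on test functions, as claimed.

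The main obstacle I anticipate is bookkeeping of the weights and the support/invertibility subtleties in the last step: the operator $P_\mathcal{N}[\phi]g$ need not be compactly supported (it lies in $\mathfrak{E}(\mathcal{N})$, though in fact $P$ being differential it is supported in $\operatorname{supp} g$), and $\chi_*^{(d-\Delta)}$ as literally defined in \eqref{eq:weighted_pushforward} acts on $\mathfrak{D}$, so one must either restrict attention to the open submanifold $\chi(\mathcal{M})$ where $\chi$ is a diffeomorphism — using that differential operators are local and $\operatorname{supp} g$ can be taken inside $\chi(\mathcal{M})$ when checking the operator identity — or phrase everything in terms of the duality pairing and avoid inverting maps altogether. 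I would favour the latter: keep the conclusion in the "smeared against $f$" form until the very end, so that \ref{itm:weighted_maps_duality} does all the work and the only place locality of $P$ enters is the implicit restriction already flagged in the statement ("each differential operator has been implicitly restricted to the space of test functions").
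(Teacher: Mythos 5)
Your argument is correct and follows essentially the same route as the paper's proof: differentiate the covariance condition \eqref{eq:conformal_lagrangian_cond} once more in the field configuration and then use the duality between $\chi_*^{(\Delta)}$ and $\chi^*_{(d-\Delta)}$ from \Cref{prop:weighted_map_facts}. The one structural difference is which slot you differentiate: by perturbing $\phi$ along $g\in\mathfrak{D}(\mathcal{N})$ while keeping the smearing function $f\in\mathfrak{D}(\mathcal{M})$ fixed, you first arrive at the adjoint form $P_\mathcal{M}[\chi^*_{(\Delta)}\phi]\,\chi^*_{(\Delta)} = \chi^*_{(d-\Delta)}\,P_\mathcal{N}[\phi]$ --- which is precisely the ``equivalent form'' the paper records in the Remark immediately following the proposition --- and must then convert to \eqref{eq:linearised_eom_relation_test}, whereas the paper smears against $\chi_*^{(\Delta)}g$ from the outset, differentiates along $h\in\mathfrak{D}(\mathcal{N})$, and lands on the stated form directly by discarding the arbitrary $h$ at the last line. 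Your conversion step is sound provided you substitute $g=\chi_*^{(\Delta)}f$, so that $\chi^*_{(\Delta)}g=f$ and, by locality of the differential operator, $P_\mathcal{N}[\phi]g$ is supported inside $\chi(\mathcal{M})$, where the weighted pullback and pushforward of matching weight are mutually inverse; you correctly flag this support subtlety yourself, and the fully smeared formulation you say you would favour is exactly how the paper sidesteps it.
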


\begin{proof}
    The proof is effectively a direct computation.
    Let $g \in \mathfrak{D}(\mathcal{M})$ and $h \in \mathfrak{D}(\mathcal{N})$.
    Recall from the definition of $P_\mathcal{N}$ that
    \begin{equation}
        \label{eq:conformally_invariant_proof_1}
        \left\langle
            P_\mathcal{N}[\phi] \chi_*^{(\Delta)}g,
            h
        \right\rangle_{\mathcal{N}}
            =
        \left\langle 
            S_\mathcal{N}''[\phi],
            \left( \chi_*^{(\Delta)}g \right) \otimes h
        \right\rangle_{\mathcal{N}}.
    \end{equation}
    This then allows us to employ \eqref{eq:conformal_lagrangian_cond} as
    \begin{align}
        \left\langle 
            S_\mathcal{N}''[\phi],
            \left( \chi_*^{(\Delta)}g \right) \otimes h
        \right\rangle_{\mathcal{N}}
       &    =
        \frac{d}{d \epsilon}
        \left\langle
            S_\mathcal{N}'
            \left[
                \phi + \epsilon h
            \right],
            \chi_*^{(\Delta)} g
        \right\rangle_\mathcal{N}
        \Big|_{\epsilon = 0}
            \nonumber \\
       &    =
        \frac{d}{d \epsilon}
        \left\langle
            S_\mathcal{M}'
            \left[
                \chi^*_{(\Delta)}\phi + \epsilon \chi^*_{(\Delta)}h
            \right],
            g
        \right\rangle_\mathcal{M}
        \Big|_{\epsilon = 0}
        \nonumber \\
       &    =
        \left\langle
            P_\mathcal{M}\left( \chi^*_{(\Delta)}\phi \right) g,
            \chi^*_{(\Delta)}h
        \right\rangle_\mathcal{M}
            \nonumber \\
        \label{eq:conformally_invariant_proof_2}
       &    =
        \left\langle
            \chi_*^{(d - \Delta)}
            P_\mathcal{M}[\chi^*_{(\Delta)} \phi] g,
            h
        \right\rangle_\mathcal{N}.
    \end{align}
    Note the first equality is not immediately obvious:
    rather, it follows from the locality of $\mathcal{L}_\mathcal{N}$.
    In the following line we use \eqref{eq:conformal_lagrangian_cond}
    and, for the final equality, we note that
    $\chi_*^{(d - \Delta)}$ is the adjoint of $\chi^*_{(\Delta)}$.
    As the choice of $h$ is arbitrary, we may then conclude that
    the two operators coincide.
\end{proof}

\begin{remark}
    As $P_\mathcal{N}[\phi]$ and $P_\mathcal{M}[\chi^*_{(\Delta)} \phi]$
    are both self-adjoint, we can write an equivalent form of
    \eqref{eq:linearised_eom_relation_test} for linear maps
    $\mathfrak{E}(\mathcal{N})$, namely
    \begin{equation}
        P_\mathcal{M}[\chi^*_{(\Delta)} \phi] \chi^*_{(\Delta)}
            =
        \chi^*_{(d - \Delta)} P_\mathcal{N}[\phi].
    \end{equation}
    Using this equation, we can immediately see that the solution spaces for
    these two operators are closely related:
    if $\psi$ is a solution to $P_\mathcal{N}[\phi]$,
    then $\chi^*_{(\Delta)} \psi$ is a solution to
    $P_\mathcal{M}[\chi^*_{(\Delta)}\phi]$.

    Moreover if, for $\lambda > 0$, we take
    $
        \mathcal{N}
            =
        (
            M,
            \lambda^2 g_\mathcal{M},
            \mathfrak{o}_\mathcal{M},
            \mathfrak{t}_\mathcal{M}
        )
    $,
    i.e. just $\mathcal{M}$ with the metric scaled by some factor $\lambda^2$
    and $\chi = \mathrm{Id}_M$,
    then $\chi^*_{(\Delta)} \psi = \lambda^\Delta \psi$.
    This indicates that $\Delta$ is
    what is typically referred to in the literature as
    the \textit{scaling dimension} of the field $\phi$.
\end{remark}

When a pair of normally-hyperbolic differential operators are
related in the above manner,
we can similarly relate their fundamental solutions.
The following proposition,
which reduces to \cite[Lemma 2.2]{pinamontiConformalGenerallyCovariant2009}
in the particular case of the conformally coupled Klein-Gordon field in 4D,
establishes the conformal covariance Pauli-Jordan function
arising from a suitable conformal natural Lagrangian.
To simplify notation, we shall refer only to
a single differential operator on each spacetime,
i.e. we suppress the dependence on an initial field configuration
$\phi$ or $\chi^*_{(\Delta)} \phi$,
though this does not mean that the scope of the result is limited to free theories.

\begin{prop}
    \label{prop:conformal_adv/ret}
    Let $\chi \in \mathrm{Hom}_{\mathsf{CLoc}}(\mathcal{M}; \mathcal{N})$,
    and let $P_\mathcal{M}$, $P_\mathcal{N}$ be
    a pair of symmetric, normally hyperbolic differential operators on
    $\mathcal{M}$ and $\mathcal{N}$ respectively such that
    \begin{equation}
        \label{eq:conformally_covariant_eom}
        P_\mathcal{M} \chi^*_{(\Delta)}
            =
        \chi^*_{(d - \Delta)} P_\mathcal{N}.
    \end{equation}
    If $E^{R/A}_{\mathcal{M} / \mathcal{N}}$ denotes
    the advanced/retarded propagator for
    $P_{\mathcal{M} / \mathcal{N}}$ as appropriate,
    then
    \begin{equation}
        \label{eq:conformal_causal_prop}
        E_\mathcal{M}^{R/A}
            =
        \chi^*_{(\Delta)}  
        E_\mathcal{N}^{R/A}
        \chi_*^{(d - \Delta)}.
    \end{equation}
\end{prop}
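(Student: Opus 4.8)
The plan is to appeal to the uniqueness of the advanced/retarded Green operators for the normally hyperbolic operator $P_\mathcal{M}$: recall from the discussion around \eqref{eq:propagator_supports} that $E_\mathcal{M}^{R/A}$ is the \emph{unique} map $\mathfrak{D}(\mathcal{M}) \to \mathfrak{E}(\mathcal{M})$ satisfying $P_\mathcal{M} G = G P_\mathcal{M} = \mathrm{id}$ on $\mathfrak{D}(\mathcal{M})$ together with $\mathrm{supp}(Gf) \subseteq \mathscr{J}^\pm_\mathcal{M}(\mathrm{supp}\, f)$. Thus, writing $\tilde{E}^{R/A} := \chi^*_{(\Delta)} E_\mathcal{N}^{R/A} \chi_*^{(d-\Delta)}$, which is manifestly a map $\mathfrak{D}(\mathcal{M}) \to \mathfrak{E}(\mathcal{M})$, it suffices to verify these three properties for $\tilde{E}^{R/A}$. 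Two elementary facts will be used repeatedly. First, since $\chi$ is an embedding and $\Omega$ is nowhere-vanishing, $\chi^*_{(\beta)} \chi_*^{(\beta)} f = \Omega^{\beta} \chi^*\big(\chi_*(\Omega^{-\beta} f)\big) = f$ for every $f \in \mathfrak{D}(\mathcal{M})$ and every weight $\beta \in \mathbb{R}$. Second, taking the formal transpose of the hypothesis \eqref{eq:conformally_covariant_eom} with respect to the integration pairings $\mathfrak{E}(\cdot) \times \mathfrak{D}(\cdot) \to \mathbb{C}$ on each spacetime, and using that $P_\mathcal{M}, P_\mathcal{N}$ are symmetric together with the duality \ref{itm:weighted_maps_duality} of \Cref{prop:weighted_map_facts} (which identifies the transpose of $\chi^*_{(\Delta)}$ with $\chi_*^{(d-\Delta)}$ and conversely), one obtains $\chi_*^{(d-\Delta)} P_\mathcal{M} = P_\mathcal{N} \chi_*^{(\Delta)}$ on $\mathfrak{D}(\mathcal{M})$ --- that is, the adjoint form \eqref{eq:linearised_eom_relation_test} of the intertwining relation, now available without reference to a Lagrangian origin.

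The two Green's identities are then immediate. For $f \in \mathfrak{D}(\mathcal{M})$,
\begin{equation*}
    P_\mathcal{M} \tilde{E}^{R/A} f
        =
    \big( P_\mathcal{M} \chi^*_{(\Delta)} \big) E_\mathcal{N}^{R/A} \chi_*^{(d-\Delta)} f
        =
    \chi^*_{(d-\Delta)} \big( P_\mathcal{N} E_\mathcal{N}^{R/A} \big) \chi_*^{(d-\Delta)} f
        =
    \chi^*_{(d-\Delta)} \chi_*^{(d-\Delta)} f
        =
    f,
\end{equation*}
using \eqref{eq:conformally_covariant_eom} and then $P_\mathcal{N} E_\mathcal{N}^{R/A} = \mathrm{id}$; and similarly
\begin{equation*}
    \tilde{E}^{R/A} P_\mathcal{M} f
        =
    \chi^*_{(\Delta)} E_\mathcal{N}^{R/A} \big( \chi_*^{(d-\Delta)} P_\mathcal{M} \big) f
        =
    \chi^*_{(\Delta)} \big( E_\mathcal{N}^{R/A} P_\mathcal{N} \big) \chi_*^{(\Delta)} f
        =
    \chi^*_{(\Delta)} \chi_*^{(\Delta)} f
        =
    f,
\end{equation*}
using the transposed relation and $E_\mathcal{N}^{R/A} P_\mathcal{N} = \mathrm{id}$.

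It remains to check the support condition, which I expect to be the only genuinely non-formal step. Since $\Omega$ is nowhere-vanishing and $\chi$ is an embedding, $\mathrm{supp}(\chi_*^{(d-\Delta)} f) = \chi(\mathrm{supp}\, f)$, so the support property of $E_\mathcal{N}^{R/A}$ gives $\mathrm{supp}\big(E_\mathcal{N}^{R/A}\chi_*^{(d-\Delta)}f\big) \subseteq \mathscr{J}^\pm_\mathcal{N}(\chi(\mathrm{supp}\, f))$; pulling back and then multiplying by the nowhere-vanishing $\Omega^\Delta$ preserves supports, so $\mathrm{supp}(\tilde{E}^{R/A}f) \subseteq \chi^{-1}\big(\mathscr{J}^\pm_\mathcal{N}(\chi(\mathrm{supp}\, f))\big)$. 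The proof is then completed by the geometric fact that $\chi^{-1}\big(\mathscr{J}^\pm_\mathcal{N}(\chi(K))\big) \subseteq \mathscr{J}^\pm_\mathcal{M}(K)$ for compact $K \subseteq \mathcal{M}$: this holds because a conformally admissible embedding preserves the causal structure (light cones, hence causal curves, depend only on the conformal class of the metric), and because a $\mathsf{CLoc}$ morphism has causally convex image, so any causal curve of $\mathcal{N}$ joining two points of $\chi(\mathcal{M})$ stays inside $\chi(\mathcal{M})$ and therefore pulls back under $\chi^{-1}$ to a causal curve of $\mathcal{M}$. Equivalently, one may argue this last point by factoring $\chi$ through the pure rescaling $\mathcal{M} \to (M, \Omega^2 g_\mathcal{M}, \mathfrak{o}_\mathcal{M}, \mathfrak{t}_\mathcal{M})$ (which leaves $\mathscr{J}^\pm$ unchanged) followed by an honest $\mathsf{Loc}$ morphism. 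With all three defining properties verified, uniqueness forces $\tilde{E}^{R/A} = E_\mathcal{M}^{R/A}$, which is \eqref{eq:conformal_causal_prop}.
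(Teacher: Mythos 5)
Your proposal is correct and follows essentially the same route as the paper's own proof: appeal to uniqueness of the Green operators, verify the two identities $P_\mathcal{M}\tilde{E}^{R/A} = \tilde{E}^{R/A}P_\mathcal{M} = \mathrm{id}$ on $\mathfrak{D}(\mathcal{M})$ using \eqref{eq:conformally_covariant_eom} and its adjoint (obtained via symmetry of the operators and the duality in \Cref{prop:weighted_map_facts}), and then check the support property via preservation of causal structure under conformally admissible embeddings. Your explicit remark about needing causal convexity of $\chi(M)$ to pull causal curves back is a point the paper's proof passes over silently, but it is the same argument.
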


\begin{proof}
    Recall that the advanced and retarded propagators of $P_\mathcal{M}$
    are uniquely determined by
    their composition with $P_\mathcal{M}$ and their support properties.
    As such, we simply need to establish that the operator on the right-hand side of
    \eqref{eq:conformal_causal_prop}
    satisfies the relevant criteria.

    Firstly, if we act on this operator with $P_\mathcal{M}$ we see
    \begin{align*}
        P_\mathcal{M} \,
        \chi^*_{(\Delta)}
        E_\mathcal{N}^{R/A}
        \chi_*^{(d - \Delta)}
                =
        \chi^*_{(d - \Delta)}
        P_\mathcal{N}
        E^{R/A}_\mathcal{N}
        \chi_*^{(d - \Delta)}.
    \end{align*}
    By definition,
    $
        P_\mathcal{N} \circ E^{R / A}_\mathcal{N}
            =
        \mathds{1}_{\mathfrak{D}(\mathcal{N})},
    $
    and clearly
    $
        \chi^*_{(d - \Delta)} \chi_*^{(d - \Delta)}
            =
        \mathds{1}_{\mathfrak{D}(\mathcal{M})},
    $
    hence
    \begin{equation}
        P_\mathcal{M}
        \left(
            \chi^*_{(\Delta)}
            E_\mathcal{N}^{R/A}
            \chi_*^{(d - \Delta)}
        \right)
                =
        \mathds{1}_{\mathfrak{D}(\mathcal{M})}.
    \end{equation}

    If we denote by $P_{\mathcal{M}}^c$ the restriction of
    $P_{\mathcal{M}}$ to $\mathfrak{D}(\mathcal{M})$,
    and likewise $P_\mathcal{N}^c$,
    by the symmetry of these operators, we have that
    \begin{equation*}
        \chi_*^{(d - \Delta)} P^c_\mathcal{M}
            =
        P^c_\mathcal{N} \chi_*^{(\Delta)}.
    \end{equation*}
    Thus, acting on $P^c_\mathcal{M}$ with our candidate propagator yields
    \begin{equation*}
        \chi^*_{(\Delta)}
        E_\mathcal{N}^{R/A}
        \chi_*^{(d - \Delta)}
        P^c_\mathcal{M}
                =
        \chi^*_{(\Delta)}
        E_\mathcal{N}^{R/A}
        P_\mathcal{N}^c
        \chi_*^{(\Delta)},
    \end{equation*}
    which is again simply $\mathds{1}_{\mathfrak{D}(\mathcal{M})}$.

    Finally, we must determine the supports of these functions.
    Let $f \in \mathfrak{D}(\mathcal{M})$.
    Note that $\mathrm{supp} \, (\chi_*^{(d - \Delta)} f) = \chi(\mathrm{supp} \, f)$,
    hence, using the support property of $E^{R/A}$
    \begin{equation*}
        \mathrm{supp} \,
            \left(
                E^{R/A}_\mathcal{N} \chi_*^{(d - \Delta)} f
            \right)
                    \subseteq
        \mathscr{J}^\pm_\mathcal{N}\left(\chi \left( \mathrm{supp} \, f \right) \right).
    \end{equation*}
    Pulling this back to $\mathcal{M}$, we have
    \begin{equation*}
        \mathrm{supp} \,
        \left( 
            \chi^*_{(\Delta)} E^{R/A}_\mathcal{N} \chi_*^{(d - \Delta)} f
        \right)
            \subseteq
        \chi^{-1}\left(
            \mathscr{J}^\pm_\mathcal{N}
            \left( 
                \chi(\mathrm{supp} \, f)
            \right)
        \right).
    \end{equation*}
    Conformally admissible embeddings preserve causal structure.
    In particular, if $\gamma: [0, 1] \to \mathcal{M}$ is a causal,
    future/past-directed curve, then
    $\chi \circ \gamma$ is also causal and future/past-directed.
    This means that
    $
        \chi\left( \mathscr{J}^\pm_\mathcal{M}(\mathrm{supp} \, f) \right)
            =
        \mathscr{J}^\pm_\mathcal{N}\left( \chi(\mathrm{supp} \, f) \right).
    $
    Hence our candidate propagators also meet the desired support criteria,
    and must genuinely be the advanced and retarded propagators for $P_\mathcal{M}$
    as required.
\end{proof}

One can show that conformal invariance as defined in appendix D of
\cite{waldGeneralRelativity2010}
implies \eqref{eq:conformally_covariant_eom},
so long as it is also assumed that $P_\mathcal{M}$ and $P_\mathcal{N}$
are \textit{symmetric} in the sense that
$
    \left\langle f, P_\mathcal{M} \phi \right\rangle_\mathcal{M}
        =
    \left\langle P_\mathcal{M} f, \phi \right\rangle_\mathcal{M}
$ for all $f \in \mathfrak{D}(\mathcal{M})$, $\phi \in \mathfrak{E}(\mathcal{M})$.

Similar to the case of (isometric) local covariance,
the consequence of \cref{prop:conformal_adv/ret}
is that we can define a symplectomorphism from 
the solution space of $P_\mathcal{M}$ to that of $P_\mathcal{N}$.
Recall that we can identify the space of solutions to $P_\mathcal{M}$
with $\mathfrak{D}(\mathcal{M})/P_\mathcal{M}\left( \mathfrak{D}(\mathcal{M}) \right)$.
If $f, g \in \mathfrak{D}(\mathcal{M})$, then
\begin{equation}
    \label{eq:causal_prop_covar}
    \left\langle f, E_\mathcal{M} g \right\rangle
        =
    \left\langle \chi_*^{(d - \Delta)} f,
        E_\mathcal{N} \left( \chi_*^{(d - \Delta)} g \right) \right\rangle.
\end{equation}
Moreover, from \eqref{eq:linearised_eom_relation_test}, it follows that
$
    \chi_*^{(d - \Delta)} \left( P_\mathcal{M}
        \left( \mathfrak{D}(\mathcal{M}) \right) \right)
            \subseteq
    P_\mathcal{N}
        \left( \mathfrak{D}(\mathcal{N}) \right)
$,
hence $\chi^{(\Delta)}_*$ yields a well-defined map between the quotient spaces
$
    \mathfrak{D}(\mathcal{M}) / P_{\mathcal{M}}
        \left( \mathfrak{D}(\mathcal{M}) \right)
            \to
    \mathfrak{D}(\mathcal{N}) / P_{\mathcal{N}}
        \left( \mathfrak{D}(\mathcal{N}) \right).
$

As was the case in \Cref{sec:normal_ordering},
this symplectomorphism of solution spaces in turn gives rise to
a Poisson algebra homomorphism relating the Peierls brackets for each spacetime.
A quick calculation shows that the map $\mathfrak{F}_{\mu c}^{(\Delta)}\chi$
defined in \eqref{eq:F_mc_delta_morphisms}
is a Poisson algebra homomorphism:
for
$\mathcal{F}, \mathcal{G} \in \mathfrak{F}_{\mu c}(\mathcal{M})$,
$\phi \in \mathfrak{E}(\mathcal{N})$
we have that
\begin{align*}
    \left\{ 
        \mathfrak{F}_{\mu c}^{(\Delta)} \chi \mathcal{F},
        \mathfrak{F}_{\mu c}^{(\Delta)} \chi \mathcal{G}
    \right\}_\mathcal{N}[\phi]
   &    =
    \left\langle 
        \left( \mathfrak{F}_{\mu c}^{(\Delta)} \chi \mathcal{F} \right)^{(1)}[\phi],
        E_\mathcal{N}(\phi)
        \left( \mathfrak{F}_{\mu c}^{(\Delta)} \chi \mathcal{G} \right)^{(1)}[\phi]
    \right\rangle_\mathcal{N}
        \\
   &    =
    \left\langle 
        \chi_*^{(d - \Delta)} \mathcal{F}^{(1)}[\chi^*_{(\Delta)} \phi],
        E_\mathcal{N}(\phi)
        \chi_*^{(d - \Delta)} \mathcal{G}^{(1)}[\chi^*_{(\Delta)} \phi]
    \right\rangle_\mathcal{N}
        \\
   &    =
    \left\langle 
        \mathcal{F}^{(1)}[\chi^*_{(\Delta)} \phi],
        E_\mathcal{M}(\chi^*_{(\Delta)} \phi)
        \mathcal{G}^{(1)}[\chi^*_{(\Delta)} \phi]
    \right\rangle_\mathcal{M}
        \\
   &    =
    \left(
        \mathfrak{F}_{\mu c}^{(\Delta)} \chi
        \left\{ 
            \mathcal{F},
            \mathcal{G}
        \right\}_\mathcal{M}
    \right)[\phi].
\end{align*}

We may summarise the above results as ensuring that the following is well-defined:

\begin{definition}[Locally Conformally Covariant Classical Field Theory]
    \label{def:LCCCFT}
    For some $\Delta \in \mathbb{R}$, let $\mathcal{L}$
    be a conformal natural Lagrangian of weight $\Delta$.
    The \textit{locally conformally covariant classical field theory}
    associated to $\mathcal{L}$ is a functor
    $\mathfrak{P}: \mathsf{CLoc} \to \mathsf{Poi}$,
    which assigns
    \begin{itemize}
        \item   To every spacetime $\mathcal{M} \in \mathsf{CLoc}$,
                the algebra $\mathfrak{F}_{\mu c}(\mathcal{M})$
                equipped with the Peierls bracket
                $\left\{ \cdot, \cdot\right\}_\mathcal{M}$
                associated to the generalised Lagrangian $\mathcal{L}_\mathcal{M}$.
        \item   To every morphism
                $\chi \in \mathrm{Hom}_{\mathsf{CLoc}}(\mathcal{M}; \mathcal{N})$,
                the Poisson algebra homomorphism
                $\mathfrak{F}_{\mu c}^{(\Delta)}\chi$.
    \end{itemize}
\end{definition}

\begin{example}[The Conformally Coupled Scalar Field]
    The simplest example of a conformal natural Lagrangian is that of
    the conformally coupled scalar field.
    For spacetimes of dimension $d$, this is given by,
    for
    $\mathcal{M} \in \mathsf{CLoc}$,
    $f \in \mathfrak{D}(\mathcal{M})$,
    $\phi \in \mathfrak{E}(\mathcal{M})$
    \begin{equation}
        \mathcal{L}_\mathcal{M}(f)[\phi]
            :=
        \frac{1}{2}
        \int_\mathcal{M}
            f
            \left[
                g_\mathcal{M}\left( 
                    \nabla \phi, \nabla \phi
                \right)
                    +
                \xi_d R_\mathcal{M} \phi^2
            \right]
            \, \mathrm{dVol}_\mathcal{M},
    \end{equation}
    where $R_\mathcal{M}$ is the scalar curvature function
    for the spacetime $\mathcal{M}$ and
    $\xi_d = \tfrac{d-2}{4(d - 1)}$ is the conformal coupling constant.

    In this case, we can see that the Euler-Lagrange derivative
    satisfies the desired covariance property with
    $\Delta = \tfrac{(d - 2)}{2}$.

    Even in this example we see the necessity of phrasing
    \eqref{eq:conformal_lagrangian_cond}
    in terms of variations of the action.
    Na\"ively, we may have assumed conformal covariance to be given by
    $
        \mathcal{L}_\mathcal{M}(f)[\chi^*_{(\Delta)} \phi]
            =
        \mathcal{L}_\mathcal{N}(\chi_*^{(\Delta)} f)[\phi]
    $.
    However, the presence of the test function $f$ in the above Lagrangian
    prevents the integration by parts necessary for this equation to hold.
\end{example}

\subsubsection{Conformally Covariant Quantum Field Theory}

In order to discuss quantisation,
we must return our attention to free field theories.
In doing so we can once again refer unambiguously to a single operator
$P_\mathcal{M}$ producing the equations of motion on $\mathcal{M}$.

We saw in \cref{sec:deformation_quantisation} that
quantisation of a free field theory is achieved through the use of
arbitrarily selected Hadamard distributions for each $P_\mathcal{M}$.
The covariance of the quantum algebras was thus dependent on
the fact that, given an admissible embedding
$\chi: \mathcal{M} \to \mathcal{N}$,
the pullback of a Hadamard distribution on $\mathcal{N}$ by $\chi$ is
again a Hadamard distrbution on $\mathcal{N}$.
We have already seen that the \textit{weighted} pullback of
the causal propagator on $\mathcal{N}$ is the causal propagator on $\mathcal{M}$.
The following proof, again adapted from \cite{pinamontiConformalGenerallyCovariant2009},
gives the corresponding result for Hadamard distributions.

\begin{prop}
    Let $\chi \in \mathrm{Hom}_{\mathsf{CLoc}}(\mathcal{M}; \mathcal{N})$
    be a conformally admissible embedding with conformal factor $\Omega$,
    and let $P_\mathcal{M}, P_\mathcal{N}$ be a pair of normally hyperbolic
    differential operators satisfying
    \begin{equation*}
        P_\mathcal{M} \chi^*_{(\Delta)}
            =
        \chi^*_{(d - \Delta)} P_\mathcal{N}.
    \end{equation*}
    If $W_\mathcal{N}: \mathfrak{D}(\mathcal{N}) \to \mathfrak{E}(\mathcal{N})$
    is a Hadamard distribution for $P_\mathcal{N}$, then
    \begin{equation}
        W_\mathcal{M}
            :=
        \chi^*_{(\Delta)} W_\mathcal{N} \chi_*^{(d - \Delta)}
    \end{equation}
    is a Hadamard distribution for $P_\mathcal{M}$.
\end{prop}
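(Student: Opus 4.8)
The plan is to verify each of the defining properties \textbf{H0}--\textbf{H3} in turn for $W_\mathcal{M}$, leaning throughout on the fact that, by \Cref{prop:weighted_map_facts}, the weighted pushforward and pullback are mutually adjoint with the roles $\Delta \leftrightarrow d - \Delta$ interchanged, and are implemented by the (local) diffeomorphism $\chi$ composed with multiplication by the smooth, real, nowhere-vanishing functions $\Omega^{\pm\Delta}$. In particular, the same change of variables used in the proof of \Cref{prop:conformal_adv/ret} shows that the integral kernel of $W_\mathcal{M}$ is
\begin{equation*}
    W_\mathcal{M}(x; y)
        =
    \Omega(x)^\Delta \Omega(y)^\Delta \, W_\mathcal{N}(\chi(x); \chi(y)),
\end{equation*}
and by \Cref{prop:conformal_adv/ret} the identical identity holds with $W$ replaced by $E$. (One should first note that $W_\mathcal{M} \colon \mathfrak{D}(\mathcal{M}) \to \mathfrak{E}(\mathcal{M})$ is well defined, which is immediate from the mapping properties of its three factors.)

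For \textbf{H1}, write $W_\mathcal{N} = \tfrac{i}{2} E_\mathcal{N} + H_\mathcal{N}$; conjugating by the weighted maps and invoking \Cref{prop:conformal_adv/ret} gives $W_\mathcal{M} = \tfrac{i}{2} E_\mathcal{M} + H_\mathcal{M}$ with $H_\mathcal{M} := \chi^*_{(\Delta)} H_\mathcal{N} \chi_*^{(d - \Delta)}$. Symmetry of $H_\mathcal{M}$ follows from that of $H_\mathcal{N}$ together with the adjointness $\big(\chi^*_{(\Delta)}\big)^\dagger = \chi_*^{(d - \Delta)}$ from \Cref{prop:weighted_map_facts}(\ref{itm:weighted_maps_duality}); reality follows because $\Omega$ is real-valued, so the weighted maps commute with complex conjugation and send real kernels to real kernels. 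For \textbf{H2}, acting on the left with $P_\mathcal{M}$ and using the hypothesis $P_\mathcal{M} \chi^*_{(\Delta)} = \chi^*_{(d - \Delta)} P_\mathcal{N}$ gives $P_\mathcal{M} W_\mathcal{M} = \chi^*_{(d - \Delta)} (P_\mathcal{N} W_\mathcal{N}) \chi_*^{(d - \Delta)} = 0$; acting on the right with $P_\mathcal{M}^c$ and using the symmetric intertwiner $\chi_*^{(d - \Delta)} P_\mathcal{M}^c = P_\mathcal{N}^c \chi_*^{(\Delta)}$ (established in the proof of \Cref{prop:conformal_adv/ret}) gives $W_\mathcal{M} P_\mathcal{M}^c = \chi^*_{(\Delta)} (W_\mathcal{N} P_\mathcal{N}^c) \chi_*^{(\Delta)} = 0$. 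For \textbf{H3}, given $f \in \mathfrak{D}(\mathcal{M}; \mathbb{C})$ put $h := \chi_*^{(d - \Delta)} f$; using \Cref{prop:weighted_map_facts}(\ref{itm:weighted_maps_duality}) together with $\overline{h} = \chi_*^{(d - \Delta)} \overline{f}$ one finds $\langle W_\mathcal{M}, \overline{f} \otimes f \rangle = \langle W_\mathcal{N}, \overline{h} \otimes h \rangle \geq 0$.

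The main work is \textbf{H0}. Since $\Omega^{\pm\Delta}$ is smooth and nowhere vanishing, the kernel formula yields $\mathrm{WF}(W_\mathcal{M}) = \mathrm{WF}\big((\chi \times \chi)^* W_\mathcal{N}\big)$, and since $\chi$ is a diffeomorphism onto an open subset of $\mathcal{N}$ the pullback of wavefront sets is exact (see \cite[\S 8.2]{hormanderAnalysisLinearPartial2015}): $\mathrm{WF}(W_\mathcal{M})$ is precisely the preimage of $\mathrm{WF}(W_\mathcal{N})$ under the cotangent lift of $\chi \times \chi$, and the same statement holds with $W$ replaced by $E$. Thus \textbf{H0} for $W_\mathcal{M}$ reduces to \textbf{H0} for $W_\mathcal{N}$ together with the claim that the cotangent lift of $\chi$ identifies $\overline{V}^\mathcal{M}_\pm$ with the restriction to $\chi(M)$ of $\overline{V}^\mathcal{N}_\pm$. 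This is exactly where the conformal (rather than isometric) nature of $\chi$ must be handled with care: because $\chi^* g' = \Omega^2 g$, the embedding $\chi$ preserves the conformal class of the metric, hence the null cones and therefore the closed solid lightcones $\overline{V}_\pm$ as \emph{cones} in each cotangent space; and because $\chi^* \mathfrak{t}' = \mathfrak{t}$ it preserves the time-orientation, so future cones map to future cones. I expect the only real subtlety to lie in this cotangent-lift bookkeeping and the matching of the cone condition; everything else is a routine transport of structure through the adjoint pair $\big(\chi^*_{(\Delta)}, \chi_*^{(d - \Delta)}\big)$.
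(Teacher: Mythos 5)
Your proposal is correct and follows the same overall architecture as the paper's proof: verify \textbf{H1}--\textbf{H3} by transport of structure through the adjoint pair $\big(\chi^*_{(\Delta)}, \chi_*^{(d-\Delta)}\big)$, then concentrate the real work in \textbf{H0}, where the multiplication by $\Omega^{d-\Delta}\otimes\Omega^{d-\Delta}$ is stripped off as a smooth nowhere-vanishing factor and the remaining pullback is handled by the exact covariance of the wavefront set under diffeomorphisms. The one genuine difference is in how \textbf{H0} is finished. The paper factorises $\chi = \iota\circ\xi$ into an isometric inclusion and a conformal diffeomorphism, and then explicitly shows $\xi^*\Gamma_{\chi(\mathcal{M})} = \Gamma_{\mathcal{M}}$ by tracking null geodesic strips and their cotangent lifts in both directions. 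You instead reuse \Cref{prop:conformal_adv/ret} to transport $\mathrm{WF}(E)$ wholesale -- since $E_\mathcal{M}$ and $W_\mathcal{M}$ are related to $E_\mathcal{N}$ and $W_\mathcal{N}$ by the \emph{same} kernel formula, the identity $\mathrm{WF}(E_\mathcal{M}) = (\chi\times\chi)^*\mathrm{WF}(E_\mathcal{N})$ comes for free from the distributional identity, and all that remains is the purely pointwise check that the cotangent lift of $\chi$ matches $\overline{V}{}^\mathcal{M}_\pm$ with $\overline{V}{}^\mathcal{N}_\pm|_{\chi(M)}$, which follows from preservation of the conformal class and the time-orientation. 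This sidesteps the geodesic bookkeeping entirely and is arguably cleaner, though it leans on the specific form of \textbf{H0} as a positive-frequency selection from $\mathrm{WF}(E)$; the paper's geodesic argument would survive a formulation of the Hadamard condition not phrased relative to $E$. Your explicit treatment of the symmetry and reality of $H_\mathcal{M}$ in \textbf{H1} is also slightly more complete than the paper's, which only addresses the antisymmetric part.
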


\begin{proof}
    Firstly, \eqref{eq:causal_prop_covar}
    ensures that the anti-symmetric part of $W_\mathcal{M}$ is
    $\tfrac{i}{2}E_\mathcal{M}$.
    Secondly, by a direct computation, we can see that
    $P_\mathcal{M} W_\mathcal{M} \equiv 0$,
    hence $W_\mathcal{M}$ is a distibutional solution to $P_\mathcal{M}$.
    Thirdly, upon complexification of
    $\mathfrak{D}(\mathcal{M})$ and
    $\mathfrak{D}(\mathcal{N})$,
    we clearly have that
    $
        \chi_*^{(d - \Delta)} \overline{f}
            =
        \overline{\big( \chi_*^{(d - \Delta)} f \big)},
    $
    hence positivity of $W_\mathcal{M}$ follows directly from that of $W_\mathcal{N}$.

    Thus, all that remains to be shown is that $W_\mathcal{M}$ has
    the appropriate wavefront set:

    As a distribution in $\mathfrak{D}'(\mathcal{M}^2)$,
    as opposed to a continuous map
    $\mathfrak{D}(\mathcal{M}) \to \mathfrak{E}(\mathcal{M})$,
    $W_\mathcal{M}$ is defined on the dense subspace
    $\mathfrak{D}(\mathcal{M})^{\otimes 2} \subset \mathfrak{D}(\mathcal{M}^2)$ by
    \begin{equation}
        \left\langle 
            W_\mathcal{M},
            f \otimes g
        \right\rangle
            =
        \left\langle 
            W_\mathcal{N},
            \chi_*^{(d - \Delta)} f \otimes \chi_*^{(d - \Delta)} g
        \right\rangle.
    \end{equation}
    This differs from the usual pullback $\chi^*W_\mathcal{N}$ only in the
    multiplication by the smooth function
    $\Omega^{d - \Delta} \otimes \Omega^{d - \Delta}$,
    hence
    $
        \mathrm{WF}(W_\mathcal{M})
            =
        \mathrm{WF}\left((\chi^*)^{\otimes 2} W_\mathcal{N}\right)
    $.

    At this point it is convenient to regard $\chi(\mathcal{M})$ as
    a spacetime in its own right,
    with all the relevant data being that inherited from $\mathcal{N}$ by restriction.
    We then observe that $\chi$ factorises as $\iota \circ \xi$,
    where the inclusion
    $\iota: \chi(\mathcal{M}) \hookrightarrow \mathcal{N}$
    is an isometric embedding, and
    $\xi: \mathcal{M} \to \chi(\mathcal{M})$
    is a conformal \textit{diffeomorphism}.
    With this,
    we write $\chi^* W_\mathcal{N} = \xi^* \left( \iota^* W_\mathcal{N} \right)$.
    As $\xi$ is a diffeomorphism, we know that
    $
        \mathrm{WF}\left(\xi^*\left( \iota^* W_\mathcal{N} \right)\right)
            =
        \xi^* \mathrm{WF}(\iota^* W_\mathcal{N}),
    $
    and, since $\iota$ is an isometric admissible embedding
    $
        \mathrm{WF}(\iota^* W_\mathcal{N})
            =
        \Gamma_{\chi(\mathcal{M})},
    $
    where $\Gamma_\mathcal{M} = \mathrm{WF}(W)$ for any (and hence every)
    Hadamard distribution $W$ on $\mathcal{M}$.
    
    It is only left for us to show that
    $\xi^* \Gamma_{\chi(\mathcal{M})} = \Gamma_\mathcal{M}$.
    Let $(y_1, y_2; \eta_1, \eta_2) \in \Gamma_{\chi(\mathcal{M})}$,
    and let $\gamma: (-\epsilon, 1 + \epsilon)$ be a null geodesic satisfying
    $\gamma(0) = y_1$,
    $\gamma(1) = y_2$,
    $\dot{\gamma}^\flat(0) = \eta_1$,
    $\dot{\gamma}^\flat(1) = \eta_2$.
    It is then readily verified that $\xi^{-1} \circ \gamma$ is
    a null geodesic strip which demonstrates
    $(x_1, x_2; k_1, k_2) \in \Gamma_\mathcal{M}$,
    where $y_i = \xi(x_i)$, and $k_i = \eta_i \circ d\xi|_{x_i}$.
    Thus we see that $\xi^* \Gamma_{\chi(\mathcal{M})} \subseteq \Gamma_\mathcal{M}$.
    Similarly, if $\widetilde{\gamma}$ is a null geodesic stip demonstrating
    $(x_1, x_2; k_1, k_2) \in \Gamma_\mathcal{M}$,
    then $\gamma := \xi \circ \widetilde{\gamma}$ shows that
    $(y_1, y_2; \eta_1, \eta_2) \in \Gamma_{\chi(\mathcal{M})}$.
    From this we can conclude that
    $
        \mathrm{WF}(W_\mathcal{M})
            =
        \mathrm{WF}(\chi^* W_\mathcal{N})
            =
        \Gamma_\mathcal{M}
    $,
    hence $W_\mathcal{M}$ is indeed a Hadamard distribution for $P_\mathcal{M}$.
\end{proof}

If we, by a slight abuse of notation,
write $W_\mathcal{M} = \chi^*_{(\Delta)} W_\mathcal{N}$,
then the above proposition can be expressed as
$\chi^*_{(\Delta)}: \mathrm{Had}(\mathcal{N}) \to \mathrm{Had}(\mathcal{M})$.
This map, together with the map $\mathfrak{F}_{\mu c}^{(\Delta)} \chi$
defined in the previous section, creates the algebra homomorphism
required to make the quantum theory conformally covariant.

Firstly we observe that, if $H_\mathcal{M}$
is the symmetric part of $W_\mathcal{M}$ \textit{etc},
then a quick computation confirms that
\begin{equation*}
    \left( 
        \mathfrak{F}_{\mu c}^{(\Delta)} \chi
        \mathcal{F}
    \right)
        \star_{H_\mathcal{N}}
    \left( 
        \mathfrak{F}_{\mu c}^{(\Delta)} \chi
        \mathcal{G}
    \right)
        =
    \mathfrak{F}_{\mu c}^{(\Delta)} \chi \left( 
        \mathcal{F} \star_{H_\mathcal{M}} \mathcal{G}
    \right).
\end{equation*}
In other words, for a Hadamard distribution
$H_\mathcal{N} \in \mathrm{Had}(\mathcal{N})$,
$\mathfrak{F}_{\mu c}^{(\Delta)} \chi$ defines a $*$-algebra homomorphism
${
    \mathfrak{A}^{H_\mathcal{M}}(\mathcal{M}) \to
    \mathfrak{A}^{H_\mathcal{N}}(\mathcal{N})
}$,
using the notation introduced in \eqref{eq:A^H_algebras}.

To see that these maps define a homomorphism
$\mathfrak{A}(\mathcal{M}) \to \mathfrak{A}(\mathcal{N})$
note that, if $H'_\mathcal{N} \in \mathrm{Had}(\mathcal{N})$ and
$H'_\mathcal{M} := \chi^*_{(\Delta)} H'_\mathcal{N}$ then,
using \eqref{eq:F_mc_delta_derivatives}, one can show that
\begin{equation}
    \alpha_{H'_\mathcal{N} - H_\mathcal{N}}
        \circ
    \mathfrak{F}_{\mu c}^{(\Delta)} \chi
        =
    \mathfrak{F}_{\mu c}^{(\Delta)} \chi
        \circ
    \alpha_{H'_\mathcal{M} - H_\mathcal{M}},
\end{equation}
hence our homomorphisms are compatible with the isomorphisms
between different concrete realisations of $\mathfrak{A}(\mathcal{N})$
as required.

Thus we have shown that the following definition makes sense.

\begin{definition}[The Quantum Massless Scalar Field]
    Let
    $
        \mathcal{L}:
            \mathfrak{D} \Rightarrow
            \mathfrak{F}_\mathrm{loc}
    $
    be the conformal natural Lagrangian of the massless scalar field
    in spacetime dimension $d$.
    The \textit{locally conformally covariant quantum field theory}
    associated to $\mathcal{L}$ is a functor
    $\mathfrak{A}: \mathsf{CLoc} \to *\mbox{-}\mathsf{Alg}$,
    which assigns
    \begin{itemize}
        \item   To every spacetime $\mathcal{M} \in \mathsf{CLoc}$,
                the algebra $\mathfrak{A}(\mathcal{M})$
                defined in \Cref{sec:deformation_quantisation}.
        \item   To every morphism
                $\chi \in \mathrm{Hom}_{\mathsf{CLoc}}(\mathcal{M}; \mathcal{N})$,
                the $*$-algebra homomorphism defined, for
                $
                    \mathcal{F}
                        =
                    (\mathcal{F}_H)_{H \in \mathrm{Had}(\mathcal{M})}
                        \in
                    \mathfrak{A}(\mathcal{M})
                $
                and
                $H_\mathcal{N} \in \mathrm{Had}(\mathcal{N})$, by
                \begin{equation*}
                    \left( 
                        \mathfrak{A}\chi \mathcal{F}
                    \right)_{H_\mathcal{N}}
                        :=
                    \mathfrak{F}_{\mu c}^{(\Delta)} \chi
                    \left(
                        \mathcal{F}_{\chi^*_{(\Delta)} H_\mathcal{N}}
                    \right),
                \end{equation*}
                where $\Delta = \tfrac{d-2}{2}$.
    \end{itemize}
\end{definition}

\subsection{Primary and Quasi-Primary Fields}

Now that we have constructed the quantum theory of the massless scalar field,
we can begin comparing our formalism to the standard CFT literature.
In formulations of CFT descended from the Osterwalder-Schrader axioms,
one defines a field $\phi(z, \bar{z})$,
to be primary with conformal weights
$(h, \widetilde{h}) \in \mathbb{R}^2$ if,
for a holomorphic function $z \mapsto w(z)$
\begin{equation}
    \label{eq:Euclidian_primary}
    \phi(z, \bar{z})
        \mapsto
    \left( \frac{\partial w}{\partial z} \right)^{h}
    \left( \frac{\partial \bar{w}}{\partial \bar{z}} \right)^{\widetilde{h}}
    \phi(w(z), \bar{w}(\bar{z})).
\end{equation}

In order to reach an analogous definition of a primary field within the AQFT framework,
we must equip our spacetimes with frames.
As a motivating example, Minkowski space is naturally equipped with the frame
(in null coordinates) $(du, dv)$.
The Minkowski metric is then simply $ds^2 = du \odot dv$,
where $\odot$ denotes the symmetrised tensor product.
A general conformal automorphism, $\chi$,
of Minkowski space can be written in the form
\begin{equation}
    \chi: (u, v) \mapsto (\mu(u), \nu(v)),
\end{equation}
where either $\mu, \nu \in \mathrm{Diff}_+(\mathbb{R})$ or $\mathrm{Diff}_-(\mathbb{R})$. This is readily shown to be conformal as, for any $(u, v) \in {\mathbb{M}_2}$
\begin{equation}
    \chi^*(du \odot dv)_{(u, v)} = \mu'(u) \nu'(v) (du \odot dv)_{(u, v)}.
\end{equation}
Hence the conformal factor is the product $\Omega^2(u, v) = \mu'(u) \nu'(v)$. 
To generalise this concept to arbitrary globally-hyperbolic spacetimes,
we make the following definition.

\begin{definition}
    The category $\mathsf{CFLoc}$, consists of objects that are tuples
    $\mathscr{M} = (M, (e^\ell, e^\arr))$,
    where $M$ is a $2$-manifold, and $e^\ell, e^\arr$ are a pair of $1$-forms 
    such that, $\forall p \in M$, $\{{e^\ell}_p, {e^\arr}_p\}$ spans $T^*_pM$,
    subject to the condition that the map
    \begin{equation}
        \label{eq:induced_spacetime}
        (M, (e^\ell, e^\arr))
            \mapsto
        (M, e^\ell \odot e^\arr, [e^\ell \wedge e^\arr], [e^\ell + e^\arr])
    \end{equation}
    sends objects in $\mathsf{CFLoc}$ to objects in $\mathsf{Loc}$.

    A morphism
    $\chi: (M, (e^\ell, e^\arr)) \to (N, (\tilde{e}^\ell, \tilde{e}^\arr)))$
    is a smooth embedding $\chi: M \hookrightarrow N$
    such that if $\mathcal{M}$ and $\mathcal{N}$ are the spacetimes obtained
    in the above manner from
    ($M, (e^\ell, e^\arr))$ and
    $(N, (\tilde{e}^\ell, \tilde{e}^\arr)))$ respectively, 
    then
    $
        \chi
            \in
        \mathrm{Hom}_\mathsf{CLoc}\left(\mathcal{M}; \mathcal{N} \right).
    $
    In other words, $\chi$ is a conformally admissible embedding of
    $M$ into $N$ with respect to the metrics and orientations induced by their coframes.
\end{definition}

As every $2$D globally hyperbolic spacetime is parallelisable,
each may be expressed as the spacetime induced by some object of $\mathsf{CFLoc}$,
i.e. the map \eqref{eq:induced_spacetime} is surjective.
Furthermore, from the definition of the morphisms in $\mathsf{CFLoc}$,
it is evident that this map extends to a fully faithful functor
$\mathfrak{p}: \mathsf{CFLoc} \to \mathsf{CLoc}$,
hence we have an \textit{equivalence} between the two
in the sense of category theory.

Rather than relying solely on this equivalence, however,
the following proposition provides a test of whether an embedding
$\chi: M \hookrightarrow N$ is conformally admissible
with respect to the spacetime structure induced by the frames
$(e^\ell, e^\arr)$ and $(\tilde{e}^\ell, \tilde{e}^\arr)$.

\begin{prop}
    \label{prop:conformally_admissible_frames}
    Let
    $\mathscr{M} = (M, (e^\ell, e^\arr))$,
    $\mathscr{N} = (N, (\tilde{e}^\ell, \tilde{e}^\arr)))$
    be two objects in $\mathsf{CFLoc}$,
    a smooth embedding
    $\chi: M \hookrightarrow N$
    is then a $\mathsf{CFLoc}$ morphism
    between $\mathscr{M}$ and $\mathscr{N}$ if and only if
    there exists a pair of smooth, everywhere-positive functions
    $\omega_\ell, \omega_\arr \in \mathfrak{E}_{> 0}(M)$
    such that
    \begin{equation}
        \label{eq:conformally_admissible_frames}
        \chi^* \tilde{e}^{\ell / \arr}
            =
        \omega_{\ell / \arr} e^{\ell / \arr}.
    \end{equation}
\end{prop}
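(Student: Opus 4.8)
The plan is to unfold the definition of $\mathsf{CFLoc}$ and reduce the statement to an elementary fact about the two null directions of a Lorentzian surface. By construction, a smooth embedding $\chi\colon M\hookrightarrow N$ is a $\mathsf{CFLoc}$ morphism from $\mathscr{M}$ to $\mathscr{N}$ exactly when it is a $\mathsf{CLoc}$ morphism between the induced spacetimes $\mathcal{M}=(M,\,e^\ell\odot e^\arr,\,[e^\ell\wedge e^\arr],\,[e^\ell+e^\arr])$ and $\mathcal{N}=(N,\,\tilde e^\ell\odot\tilde e^\arr,\,[\tilde e^\ell\wedge\tilde e^\arr],\,[\tilde e^\ell+\tilde e^\arr])$; writing $g=e^\ell\odot e^\arr$ and $\tilde g=\tilde e^\ell\odot\tilde e^\arr$, this means $\chi^*\tilde g=\Omega^2 g$ for some $\Omega\in\mathfrak{E}_{>0}(M)$ together with preservation of orientation and time-orientation. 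The preliminary fact I would record first is that, since $\{e^\ell_p,e^\arr_p\}$ is a basis of $T^*_pM$ and $g=e^\ell\odot e^\arr$, both $e^\ell,e^\arr$ are everywhere $g$-null and their spans are exactly the two null lines $\mathbb{R}e^\ell_p\cup\mathbb{R}e^\arr_p$ of $g$-null covectors; moreover a tangent vector $v\in T_pM$ is timelike and future-directed for $\mathcal{M}$ iff $e^\ell(v)>0$ and $e^\arr(v)>0$ (the cone containing $(e^\ell+e^\arr)^\sharp$), and likewise on $\mathcal{N}$.

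The ``if'' direction is then a direct computation. If $\chi^*\tilde e^\ell=\omega_\ell e^\ell$ and $\chi^*\tilde e^\arr=\omega_\arr e^\arr$ with $\omega_{\ell/\arr}>0$, then $\chi^*\tilde g=\omega_\ell\omega_\arr\,e^\ell\odot e^\arr$, so $\Omega^2:=\omega_\ell\omega_\arr$ is a valid conformal factor; $\chi^*(\tilde e^\ell\wedge\tilde e^\arr)=\omega_\ell\omega_\arr\,e^\ell\wedge e^\arr$ is a positive multiple of $e^\ell\wedge e^\arr$, so orientation is preserved; and for $v$ future-directed timelike for $\mathcal{M}$ one has $\tilde e^\ell(d\chi v)=(\chi^*\tilde e^\ell)(v)=\omega_\ell e^\ell(v)>0$ and similarly $\tilde e^\arr(d\chi v)>0$, so time-orientation is preserved. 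Hence $\chi$ is a $\mathsf{CLoc}$ morphism between the induced spacetimes, i.e. a $\mathsf{CFLoc}$ morphism.

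For the ``only if'' direction, assume $\chi$ is a $\mathsf{CLoc}$ morphism. Because $\chi^*\tilde g=\Omega^2 g$ and pullback along $\chi$ (a diffeomorphism onto its image) intertwines the inverse metrics up to the factor $\Omega^{-2}$, each $\chi^*\tilde e^{\ell/\arr}$ is $g$-null, being the pullback of the $\tilde g$-null covector $\tilde e^{\ell/\arr}$; it is nowhere zero, and the pair $(\chi^*\tilde e^\ell,\chi^*\tilde e^\arr)$ is linearly independent at each point, since $d\chi_p$ is a linear isomorphism of $2$-dimensional (co)tangent planes. By the preliminary fact, at each $p$ the pair $(\chi^*\tilde e^\ell,\chi^*\tilde e^\arr)_p$ consists, in some order, of one nonzero multiple of $e^\ell_p$ and one nonzero multiple of $e^\arr_p$. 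The loci where $\chi^*\tilde e^\ell$ is a multiple of $e^\ell$, respectively of $e^\arr$, are disjoint (the two null lines meet only at $0$, which is never attained), closed, and cover $M$; so on each connected component of $M$ exactly one of the two alternatives holds --- call them the \emph{aligned} and the \emph{swapped} case.

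The crux, and the only place the orientation and time-orientation data enter, is excluding the swapped case. On a component where $\chi^*\tilde e^\ell=\mu\,e^\arr$ and $\chi^*\tilde e^\arr=\nu\,e^\ell$ for smooth nowhere-vanishing $\mu,\nu$, one computes $\chi^*(\tilde e^\ell\wedge\tilde e^\arr)=\mu\nu\,(e^\arr\wedge e^\ell)=-\mu\nu\,(e^\ell\wedge e^\arr)$, so preservation of orientation forces $\mu\nu<0$ throughout; yet evaluating on the image of a future-directed timelike $v$ gives $\tilde e^\ell(d\chi v)=\mu\,e^\arr(v)$ and $\tilde e^\arr(d\chi v)=\nu\,e^\ell(v)$, both necessarily positive by preservation of time-orientation, forcing $\mu,\nu>0$ and hence $\mu\nu>0$ --- a contradiction. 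Thus the aligned case holds everywhere, giving $\chi^*\tilde e^\ell=\omega_\ell e^\ell$ and $\chi^*\tilde e^\arr=\omega_\arr e^\arr$ with $\omega_{\ell/\arr}$ smooth (obtained by contracting $\chi^*\tilde e^{\ell/\arr}$ against a smooth local vector field dual to $e^{\ell/\arr}$), nowhere vanishing, and positive by the same future-cone argument as in the ``if'' direction. I expect the organisation of this pointwise dichotomy into a global statement, and in particular the use of both the orientation and the time-orientation to rule out the swapped alternative, to be the only non-routine step; the rest is bookkeeping with conformal pullbacks.
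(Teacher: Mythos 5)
Your proof is correct and follows essentially the same route as the paper: a direct computation of the conformal factor, orientation and time-orientation for the ``if'' direction, and the observation that conformality forces each $\chi^*\tilde{e}^{\ell/\arr}$ to be colinear with one of $e^\ell, e^\arr$, with the swapped alternative excluded by the orientation data, for the ``only if'' direction. Your treatment of the pointwise dichotomy (closedness of the two loci and the explicit sign contradiction ruling out the swapped case) is somewhat more detailed than the paper's brief appeal to orientation preservation, but it is an elaboration of the same argument rather than a different one.
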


\begin{proof}
    Suppose first that the embedding $\chi$ satisfies
    \eqref{eq:conformally_admissible_frames},
    then it is clearly conformal, as
    \begin{equation}
        \chi^*(\tilde{e}^\ell \odot \tilde{e}^\arr)
            =
        \Omega^2 (e^\ell \odot e^\arr),
    \end{equation}
    where the conformal factor is $\Omega^2 = \omega_\ell \omega_\arr$.
    To show it is admissible, consider first
    \begin{equation}
        \chi^*[\tilde{e}^\ell \wedge \tilde{e}^\arr] := [\chi^*(\tilde{e}^\ell \wedge \tilde{e}^\arr)] = [\omega_\ell \omega_\arr (e^\ell \wedge e^\arr)] = [e^\ell \wedge e^\arr],
    \end{equation}
    where the final equality comes from the fact that the product
    $\omega_\ell \omega_\arr$ is everywhere positive.
    Hence, $\omega_\ell \omega_\arr (e^\ell \wedge e^\arr)$
    defines same orientation as $e^\ell \wedge e^\arr$,
    establishing that $\chi$ is orientation preserving.
    
    Next, to show $\chi$ preserves time orientation, consider
    \begin{equation}
        \chi^* (\tilde{e}^\ell + \tilde{e}^\arr)
            =
        \omega_\ell e^\ell + \omega_\arr e^\arr.
    \end{equation}
    For this $1$-form to define the same time orientation as $e^\ell + e^\arr$,
    first we need to prove it is timelike. Let $g = e^\ell \odot e^\arr$, then
    \begin{equation}
        g(\omega_\ell e^\ell + \omega_\arr e^\arr,
          \omega_\ell e^\ell + \omega_\arr e^\arr)
            =
        2 \omega_\ell \omega_\arr > 0,
    \end{equation}
    hence it is everywhere timelike.
    Next, we need to show it is compatible with the original orientation:
    \begin{equation}
        \label{eq:preserve_time_orientation}
        g(\omega_\ell e^\ell + \omega_\arr e^\arr,
          e^\ell + e^\arr)
            =
        \omega_\ell + \omega_\arr > 0.
    \end{equation}
    Thus \eqref{eq:conformally_admissible_frames} is a sufficient condition for
    $\chi$ to be a conformally admissible embedding.

    Conversely, let us now assume that $\chi$ is conformally admissible.
    Let $\tilde{e}^{\ell / \arr}|_{\chi(M)}$ denote the restriction of
    $\tilde{e}^{\ell / \arr}$ to the image of $M$ under $\chi$.
    As $\chi$ is conformal, the pull-back of each of these $1$-forms must be
    a null $1$-form on $M$ with respect to the induced metric.
    At every point $p \in M$, this tells us that
    $\chi^* \tilde{e}^{\ell}|_{\chi(M)} (p)$ must be colinear with either
    $e^\ell(p)$ or $e^\arr(p)$.
    That it must be colinear with $e^\ell(p)$ in particular is due to the fact that
    $\chi$ preserves orientation;
    a similar argument can then be made for $\tilde{e}^\arr$.
    Thus we have two functions
    $\omega_{\ell / \arr} \in \mathfrak{E}_{> 0}(M)$ such that
    $\chi^* \tilde{e}^{\ell}|_{\chi(M)} (p) = \omega_{\ell / \arr} e^{\ell / \arr}$.
    Their product is the conformal factor of $\chi$ and hence must be positive.
    Finally, for $\chi$ to preserve time orientation,
    $\omega_\ell$ and $\omega_\arr$ must satisfy
    \eqref{eq:preserve_time_orientation}, thus each function must be everywhere-positive.
\end{proof}

Using these frames, we can define a modified pushforward, similar to
\eqref{eq:weighted_pushforward},
except now with a \textit{pair} of weights
$(\lambda, \tilde{\lambda}) \in \mathbb{R}^2$
specified.
The weighted pushforward of a test function $f \in \mathfrak{D}(M)$
under a morphism $\chi: \mathscr{M} \to \mathscr{N}$
with left/right conformal factors $\omega_{\ell / \arr}$ is given by
\begin{equation}
    \chi_*^{(\lambda, \tilde{\lambda})} f
        =
    \chi_* \left(
            \omega_\ell^{- \lambda} \omega_\arr^{- \widetilde{\lambda}} f
        \right).
\end{equation}
We then construct the functor
$\mathfrak{D}^{(h, \widetilde{h})}: \mathsf{CFLoc} \to \mathsf{Vec}$,
for $(h, \widetilde{h}) \in \mathbb{R}^2$ as follows:
for an object $\mathscr{M} \in \mathsf{CFLoc}$,
define $\mathfrak{D}^{(h, \widetilde{h})}(\mathscr{M}) = \mathfrak{D}(M)$,
and for a morphism $\chi: \mathscr{M} \to \mathscr{N}$:
\begin{equation}
    \label{eq:two_weight_morphism}
    \mathfrak{D}^{(h, \widetilde{h})} \chi(f)
        =
    \chi_*^{(1 - h, 1 - \widetilde{h})} f.
\end{equation}

With this functor, we can finally define a \textit{primary field of weight}
$(h, \widetilde{h})$ to be a natural transformation
$\Phi: \mathfrak{D}^{(h, \widetilde{h})}
    \Rightarrow
\mathfrak{A}$,
where $\mathfrak{A}: \mathsf{CFLoc} \to \mathsf{Vec}$
is a locally covariant QFT, which may or may not
be the `pullback' $\widetilde{\mathfrak{A}}\circ \mathfrak{p}$
of some theory $\widetilde{\mathfrak{A}}: \mathsf{CLoc} \to \mathsf{Vec}$.
Explicitly, this means that, if $\mathcal{M}$ is
the spacetime constructed from $\mathscr{M} \in \mathsf{CFLoc}$ according to
\eqref{eq:induced_spacetime}, and likewise $\mathcal{N}$ arises from
$\mathscr{N} \in \mathsf{CFLoc}$, then
we have a pair of linear maps $\Phi_{\mathscr{M}/\mathscr{N}}$ such that,
for any $\chi \in \mathrm{Hom}_{\mathsf{CFLoc}}(\mathscr{M}; \mathscr{N})$,
the following diagram commutes
\begin{equation}
    \label{eq:primary}
    \begin{tikzcd}
        \mathfrak{D}(\mathcal{M})
            \rar["\mathfrak{D}^{(h, \widetilde{h})}\chi"]
            \dar["\Phi_\mathscr{M}"]
            &
        \mathfrak{D}(\mathcal{N})
            \dar["\Phi_\mathscr{N}"]
            \\
        \widetilde{\mathfrak{A}}(\mathcal{M})
            \rar["{\widetilde{\mathfrak{A}}}\chi"]
            &
        \widetilde{\mathfrak{A}}(\mathcal{N})
    \end{tikzcd}
\end{equation}

Heuristically, we can see how this definition relates to
\eqref{eq:Euclidian_primary}
by taking the `limit' of $\Phi_\mathscr{M}(f)$ as
$f \to \delta_x$,
the Dirac delta distribution localised at $x \in M$.
Whilst there is no guarantee that
$\Phi_\mathscr{M}(f)$ converges in this limit,
\eqref{eq:two_weight_morphism} \textit{does} converge
in the weak-$*$ topology to
$\omega_\ell(x)^h \omega_\arr(x)^{\widetilde{h}} \delta_{\chi(x)}$.
If we imagine for a moment that
$\Phi_\mathscr{M}(x) := \lim_{f \to \delta_x} \Phi_\mathscr{M}(f)$
is well-defined, the statement that $\Phi$ is primary with weights
$(h, \widetilde{h})$ implies
\begin{equation}
    \mathfrak{A}\chi \Phi_\mathscr{M}(x)
        =
    \lim_{f \to \delta_x}
    \Phi_\mathscr{N}\left( \mathfrak{D}^{(h, \widetilde{h})}\chi f \right)
        =
    \omega_\ell(x)^h \omega_\arr(x)^{\widetilde{h}} \Phi_\mathscr{N}(\chi(x)).
\end{equation}
Recalling that, if $\chi: \mathbb{M}_2 \to \mathbb{M}_2$
is expressed in null coordinates as
$\chi(u, v) = (\mu(u), \nu(v))$,
then $\omega_\ell = d\mu/du$ and
$\omega_\arr = d\nu/dv$,
we see that we have recovered a Lorentzian signature analogue of
\eqref{eq:Euclidian_primary} as deisred.

We can also recover the physical interpretations of the sum and difference of
$h$ and $\widetilde{h}$, referred to as the
\textit{scaling dimension} $\Delta$ and \textit{spin} $s$ of the field respectively.
For the scalar field, we have already encountered
the scaling dimension as the number $\Delta$ appearing in,
for example, \Cref{def:LCCCFT}.
If we consider a field with spin $s = 0$,
the action of the corresponding $\mathfrak{D}$ functor is
\begin{equation}
    \label{eq:spin_zero_morphism}
    \mathfrak{D}^{\left(\Delta / 2, \Delta / 2\right)} f
        =
    \chi_*^{(2 - \Delta)} f.
\end{equation}
The right hand side of which is precisely the action of the functor
$\mathfrak{D}^{(\Delta)}$ as defined in \cite{pinamontiConformalGenerallyCovariant2009}.
Hence, any primary field \textit{\`a la} Pinamonti's definition
$\Phi: \mathfrak{D}^{(\Delta)} \Rightarrow \mathfrak{A}$
defines a primary field of spin $0$ in our description:
$\widetilde{\Phi}: \mathfrak{D}^{\left( \Delta / 2, \Delta / 2\right)}
    \Rightarrow
\mathfrak{A} \circ \mathfrak{p}$
where $\widetilde{\Phi}_\mathscr{M} := \Phi_\mathcal{M}$.

Conversely, a choice of spin $0$ primary field
$
    \widetilde{\Phi}:
    \mathfrak{D}^{(\Delta/2, \Delta/2)}
        \Rightarrow
    \mathfrak{A} \circ \mathfrak{p}
$
unambiguously defines a natural transformation
$\Phi: \mathfrak{D}^{(\Delta)} \Rightarrow \mathfrak{A}$.
To see this, note that if $\mathscr{M}$ and $\widetilde{\mathscr{M}}$
represent different frames for the same spacetime
$\mathcal{M} = \mathfrak{p}(\mathscr{M}) = \mathfrak{p}(\widetilde{\mathscr{M}})$,
then the identity morphism of the underlying manifold
constitutes a $\mathsf{CFLoc}$ morphism $\mathscr{M} \to \widetilde{\mathscr{M}}$,
hence we can deduce from \eqref{eq:primary} that
$\widetilde{\Phi}_\mathscr{M} \equiv \widetilde{\Phi}_{\widetilde{\mathscr{M}}}$.
In other words, the spin of a primary field measures how it behaves under
a change of frame on a fixed spacetime.
Thus, if the spin vanishes, the primary field does not depend on the frame,
and can be defined in the same way as in
\cite{pinamontiConformalGenerallyCovariant2009}.

\begin{example}
    The null derivative of the scalar field defines a map
    $
        \partial \Phi_\mathscr{M}:
        \mathfrak{D}(\mathcal{M})
            \to
        \mathfrak{F}_{\mu c}(\mathcal{M})
    $
    \begin{equation*}
        \partial \Phi_\mathscr{M}(f)[\phi]
            =
        \int_\mathcal{M}
            f(x) (e_\ell \phi) e^\ell \wedge e^\arr,
    \end{equation*}
    where $e_\ell$ is the vector field dual to $e^\arr$.
    To see that this is a primary field consider the upper-right path through the diagram \eqref{eq:primary}:
    \begin{align*}
        \partial\Phi_\mathscr{N}\left(
                \mathfrak{D}^{(h, \widetilde{h})}\chi (f)
            \right)[\phi]
            &=
        \int_{\chi(M)}
            \left( \chi^{-1} \right)^*
                \left(
                    \omega_\ell^{h - 1}
                    \omega_\arr^{\widetilde{h} - 1}
                    f
                \right)
                \cdot
            (\tilde{e}_\ell \phi)
            \, \tilde{e}^\ell \wedge \tilde{e}^\arr,
            \\
            &=
        \int_M
            \left(
                \omega_\ell^{h - 1}
                \omega_\arr^{\widetilde{h} - 1}
                f
            \right)
                \cdot
            \chi^*(\tilde{e}_\ell \phi) \,
            \left(
                \omega_\ell \, \omega_\arr e^\ell \wedge e^\arr.
            \right)
    \end{align*}
    Next, using $\chi^*(\tilde{e}_\ell \phi) = (\chi^* \tilde{e}_\ell)(\chi^* \phi) = \omega_\ell^{-1} (e_\ell \chi^* \phi)$ we have
    \begin{equation*}
        \partial\Phi_\mathscr{N}\left(
                \mathfrak{D}^{(h, \widetilde{h})}\chi (f)
            \right)[\phi]
            =
        \int_M 
            \omega_\ell(x)^{h - 1}
            \omega_\arr(x)^{\widetilde{h}}
            f(x)
            (e_\ell(\chi^*\phi))
        e^\ell \wedge e^\arr
    \end{equation*}
    To compare this with the lower-left path,
    we first observe that the algebra isomorphisms
    $\alpha_{\chi^*H' - H}$ all act by identity on linear functionals,
    thus if $\mathcal{F}$ is linear,
    ${\mathfrak{A}}\chi(\mathcal{F})[\phi] = \mathcal{F}[\chi^*\phi]$.
    Hence the observable we obtain in this way is
    \begin{equation*}
        {\mathfrak{A}}\chi(\partial\Phi_\mathscr{M}(f))[\phi]
            =
        \int_M f(x) (e_\ell(\chi^* \phi)) e^\ell \wedge e^\arr.
    \end{equation*}
    By fixing $(h, \widetilde{h})$ such that the diagram commutes, we can therefore conclude that $\partial \Phi$ is a primary field of weight $(1, 0)$.
    Similarly, if we consider the field $\bar{\partial}\Phi$,
    obtained by acting with $e_\arr$ instead of $e_\ell$,
    we would obtain a primary field of weight $(0, 1)$.
\end{example}

The introduction of frames also allows us
to implement rigid transformations.
We define the \textit{boost and dilation} morphisms,
for $\alpha \in \mathbb{R} \setminus \{0\}$ as
\begin{align*}
    b_\alpha: (M, (e^\ell, e^\arr)) 
        &\mapsto
    (M, (\tfrac{1}{\alpha} e^\ell, \alpha e^\arr)),
\\
    d_\alpha: (M, (e^\ell, e^\arr)) 
        &\mapsto
    (M, (\alpha e^\ell, \alpha e^\arr)),
\end{align*}
where in each case,
the smooth embedding inducing the morphism is simply $\mathrm{Id}_M$.
If we denote the subcategory generated by these morphisms $\mathsf{CFLoc}_0$,
and the restrictions of
$\mathfrak{D}^{(h, \widetilde{h})}$ and $\mathfrak{A}$ to this subcategory
$\mathfrak{D}^{(h, \widetilde{h})}_0$ and $\mathfrak{A}_0$ respectively,
then a \textit{quasi-primary} field may be defined as a natural transformation
$\mathfrak{D}^{(h, \widetilde{h})}_0 \Rightarrow \mathfrak{A}_0$,
for some pair of weights $(h, \widetilde{h}) \in \mathbb{R}^2$.
In other words, a field is quasi-primary if it responds to
boosts and dilations in the same way a primary field would.

For the massless scalar field,
we identify several notable examples of primary and quasi-primary fields below:
\begin{enumerate}
    \item   As demonstrated in the above example,
            the derivative fields $\partial \Phi$ and
            $\bar{\partial}\Phi$ are both primary.
            Taking higher derivatives will produce
            quasi-primary fields of increasing weight.
            In general $\partial^n\bar{\partial}^m\Phi$ is
            quasi-primary with weight $(n, m)$,
            though note that if both $n$ and $m$ are non-zero,
            this field vanishes on-shell.
    \item   Higher powers of primary fields are again primary classically,
            but in the quantum case,
            they fail to be even quasi-primary in general.
            For instance, the specific case of
            $T = \tfrac{1}{2} (\partial \Phi)^2$
            shall be discussed in the next section.
    \item   The (smeared) vertex operator
            $e^{ia\Phi}_\mathcal{M}(f)$
            defined
            for $f \in \mathfrak{D}(\mathcal{M})$,
            $a \in \mathbb{R}$ by
    \begin{equation*}
        e^{ia\Phi}_\mathcal{M}(f)[\phi]
            :=
        \int_M f(x) e^{ia \phi(x)} \, \mathrm{dVol},
    \end{equation*}
    classically is neither primary nor quasi-primary.
    However, the covariantly normal-ordered field $\nord{e^{ia\Phi}}$
    is a quantum primary with spin $0$ and scaling dimension
    $\tfrac{\hbar a^2}{2 \pi}$

    To see this, consider the lower-left path of \eqref{eq:primary}.
    For
    $f    \in \mathfrak{D}(\mathcal{M})$,
    $\phi \in \mathfrak{E}(\mathcal{N})$,
    $H    \in \mathrm{Had}(\mathcal{M})$, and
    $H'   \in \mathrm{Had}(\mathcal{N})$,
    we have
    \begin{equation}
        \label{eq:vertex_primary_quantum_series}
        \mathfrak{A}\chi \left( \nord{e^{ia\Phi}(f)}_\mathcal{M} \right)_{H'} [\phi]
            =
        \sum_{n = 0}^\infty \left(
            \frac{\hbar}{2} \right)^n \frac{1}{n!}
            \left\langle 
                \left( \chi^* H' - H^\mathrm{sing}_\mathcal{M} \right)^{\otimes n},
                {e^{ia\Phi}_\mathcal{M}}(f)^{(2n)}[\chi^* \phi]
            \right\rangle.
    \end{equation}
    The functional derivatives of
    $e^{ia \Phi}_\mathcal{M}$ can be calculated straightforwardly,
    and yield, for any $n \in \mathbb{N}$
    \begin{align}
        \left\langle 
            \left( \chi^* H' - H^\mathrm{sing}_\mathcal{M} \right)^{\otimes n},
            {e^{ia\Phi}_\mathcal{M}}(f)^{(2n)}[\chi^* \phi]
        \right\rangle
           &= \nonumber \\
        (-a^2)^n \int_M e^{ia\chi^*\phi} f(x)
            \Big(
           &    \lim_{y \to x} \chi^*H'(x; y) - H^\mathrm{sing}_\mathcal{M}(x; y)
            \Big)^n
        \, \mathrm{dVol}
        \nonumber \\
        \label{eq:vertex_primary_order_n}
            =
        \int_M e^{ia\chi^*\phi} f(x)
            \Big(
           &    \!-a^2 \chi^* h'(x; x) +
                 \frac{a^2}{4\pi} \log(\Omega(x))
            \Big)^n
        \, \mathrm{dVol}.
    \end{align}
    Here, $h'$ is the smooth part of $H'$,
    and the $\log(\Omega(x))$ term arises from the difference
    in the local Hadamard form \eqref{eq:local_Hadamard_form} of
    $\chi^* H'$ and $H_\mathcal{M}^\mathrm{sing}$.
    We can then express the action of the morphism $\mathfrak{A}\chi$ as
    \begin{equation}
        \label{eq:vertex_primary_quantum}
        \mathfrak{A}\chi \left( \nord{e^{ia\Phi}(f)}_\mathcal{M} \right)_{H'} [\phi]
            =
        e^{ia\Phi}_\mathcal{M}\left(
            f
            e^{
                \left(
                    -\hbar
                    \frac{a^2}{2}
                    (\iota_\Delta \circ \chi)^* h'
                \right)
                }
            \Omega^{\hbar \frac{a^2}{4 \pi} }
        \right)[\chi^*\phi],
    \end{equation}
    where $\iota_\Delta(x) = (x, x)$,
    and we are using the linearity of $e^{ia\Phi}_\mathcal{M}$ in
    the test function to extend it
    \footnote{
        In doing so, we avoid any necessity to prove summation and integration
        may be interchanged, or that
        $
            \mathrm{Exp}(\hbar( A + B \log C ))
                =
            \mathrm{Exp}(\hbar A) C^{\hbar B}.
        $
        If one is not comfortable with such manipulations of formal series,
        reassurance may be found in the fact that,
        if the field configuration $\phi$ is held fixed,
        and $\hbar$ is chosen to be any positive number, then the series
        \eqref{eq:vertex_primary_quantum_series} converges absolutely,
        as a series of complex numbers,
        to the right hand side of \eqref{eq:vertex_primary_quantum}.
    }
    to a map
    $
        \mathfrak{D}(\mathcal{M})[[\hbar]]
            \to
        \mathfrak{F}_{\mu c}(\mathcal{M})[[\hbar]].
    $%
    
    We can compare this to $\nord{e^{ia\Phi}_\mathcal{N}}$,
    where we have, for $g \in \mathfrak{D}(\mathcal{N})$
    \begin{align*}
        \nord{e^{ia\Phi}_\mathcal{N}(g)}_{H'} [\phi]
            &=
        \sum_{n = 0}^\infty \left(
            \frac{\hbar}{2} \right)^n \frac{1}{n!}
            \left\langle 
                {h'}^{\otimes n},
                {e^{ia\Phi}_\mathcal{N}}(g)^{(2n)}[\phi]
            \right\rangle,
            \\
            &=
        \left\langle 
            e^{-\hbar \frac{a^2}{2}h'_\Delta}, e^{ia\Phi}_\mathcal{N}(g)
        \right\rangle,
            \\
            &=
        e^{ia\Phi}_\mathcal{N}\left( g e^{-\hbar \frac{a^2}{2} h'_\Delta} \right)[\phi].
    \end{align*}
    As $e^{ia\Phi}$ is a classical primary field of scaling dimension $0$, we have $e^{ia\Phi}_\mathcal{M}(f)[\chi^* \phi] = e^{ia \Phi}_\mathcal{N}(\chi_* \Omega^{-d} f)[\phi]$, hence
    \begin{equation*}
        \mathfrak{A}\chi \left( \nord{e^{ia\Phi}_\mathcal{M}(f)} \right)_{H'} [\phi]
            =
        \nord{e^{ia\Phi}_\mathcal{N}\left( \mathfrak{D}^{\left( \frac{\hbar a^2}{4 \pi} \right)} (f) \right)}_{H'}[\phi]
    \end{equation*}
    as required.
\end{enumerate}

\subsection{The Stress-Energy Tensor of the Massless Scalar Field}

A well known feature of chiral CFTs is the transformation law for the stress-energy tensor,
constrained by the famous L\"uscher-Mack theorem
\cite{luscherGlobalConformalInvariance1975}.
Here we shall show explicitly that,
for the free scalar field in $2$D Minkowski space,
the stress-energy tensor satisfies precisely this transformation law.
And, moreover, that there exist analogous transformation laws on
arbitrary globally-hyperbolic spacetimes.

The $uu$ component of the stress-energy tensor%
\footnote{
    We may also refer to $T_{uu}$ as the \textit{chiral} component of $T$,
    in which case $T_{vv}$ would be the \textit{anti-chiral} component.
    For ease of notation, we consider only the chiral component,
    dropping the subscript.
}
on a framed spacetime $\mathscr{M} = (M, (e^\ell, e^\arr)) \in \mathsf{CFLoc}$,
is a distribution valued in $\mathfrak{F}_\mathrm{loc}(M)$
defined, for $f \in \mathfrak{D}(M)$, $\phi \in \mathfrak{E}(M)$ by
\begin{equation}
    T_\mathscr{M}(f)[\phi]
        :=
    \frac{1}{2} \int_M f \cdot
        (e_\ell \phi)^2 e^\ell \wedge e^\arr.
\end{equation}
Note that we can replace the test function $f$ with a compactly supported distribution,
so long as its singularity structure is compatible with the constraint
that $T_\mathscr{M}(f)$ is a microcausal distribution.
In particular, the generators of the Virasoro algebra $B_n$
from \cref{sec:Virasoro} can be expressed as
$T_{\mathscr{M}_\mathrm{cyl}}(f_n)$,
where the integral kernel of $f_n$ is $e^{inu} \delta(u + v)$
in the null-coordinates for the cylinder.

Clasically, $T$ is a primary field with conformal weight $(2, 0)$,
i.e. $T: \mathfrak{D}^{(2, 0)} \Rightarrow \mathfrak{P} \circ \mathfrak{p}$,
where $\mathfrak{P}$ is the classical theory for
the massless scalar field, as given in \cref{def:LCCCFT}.
We shall now study how its quantisation $\nord{T}$ fails to be even quasi-primary.

In order to make our analysis more concrete,
we restrict our attention to the subcategory of $\mathsf{CFLoc}$
containing the single object $\mathbb{M}_2$.
Here, the locally covariant normal ordering prescription
$\nord{-}_{\mathbb{M}_2}$ is simply
$\altnord{-}_{H_{\mathbb{M}}}$,
where $H_{\mathbb{M}}$ is the symmetric part of the Minkowski vacuum.
Hence, if we work in the concrete algebra $\mathfrak{A}^{H_\mathbb{M}}(\mathbb{M})$
we can identify $T_{\mathbb{M}_2}(f)$ directly with its quantum counterpart
with no modification.

Given a $\mathsf{CFLoc}$ morphism $\chi: \mathbb{M}_2 \to \mathbb{M}_2$,
if the covariantly ordered field $\nord{T}$ was primary,
we would expect in particular that
$
    \mathfrak{A}\chi\left( \nord{T}_{\mathbb{M}_2}(f) \right)
        -
    \nord{T}_{\mathbb{M}_2}\left( \mathfrak{D}^{(2,0)}\chi f \right)
$
would vanish.
Upon making the identification
$\mathfrak{A}(\mathbb{M}_2) \simeq \mathfrak{A}^{H_\mathbb{M}}(\mathbb{M}_2)$
this term becomes
\begin{equation}
    \alpha_{\chi^*H_\mathbb{M} - H_\mathbb{M}}\left( T_{\mathbb{M}_2}(f) \right)
        -
    T_{\mathbb{M}_2}\left( \mathfrak{D}^{(2,0)}\chi f \right).
\end{equation}

We already know that this vanishes in the classical limit $\hbar \to 0$,
hence we only need to compute the $\mathcal{O}(\hbar)$ term.
Recall that in null coordinates we can express a $\mathsf{CFLoc}$
morphism $\mathbb{M}_2 \to \mathbb{M}_2$ using a pair of functions
$\mu, \nu \in \mathrm{Diff}_+(\mathbb{R})$ by
$\chi(u, v) = (\mu(u), \nu(v))$.
Upon doing so we see
\begin{align}
    \begin{split}
    \left\langle 
        (\chi^*H_{\mathbb{M}_2} - H_{\mathbb{M}_2}), T_{\mathbb{M}_2}(f)^{(2)}
    \right\rangle
        &= \\
    \int_{\mathbb{R}^2} \partial_u \partial_{u'} \big[ 
        &H_{\mathbb{M}_2}(\mu(u); \mu(u')) -
        H_{\mathbb{M}_2}(u; u')
    \big]
    f(u) \delta(u - u') \, \mathrm{d}u  \mathrm{d}u',
    \end{split}
\end{align}
where we have integrated out $v$ and $v'$ and defined
$f(u) := \int_\mathbb{R}f(u, v) \, \mathrm{d}v$.
It only remains to determine
\begin{align}
    \begin{split}
            \lim_{u' \to u} \big[ \mu'(u)\mu'(u')
                (H_{\mathbb{M}_2})_{uu'}(\mu(u); \mu(u')) -
                (H_{\mathbb{M}_2})_{uu'}(u; u')
            \big]
                = \\
            \lim_{u' \to u} \left[ 
                \frac{\mu(u)\mu(u')}{(\mu(u) - \mu(u'))^2} - 
                \frac{1}{(u - u')^2}
            \right].
    \end{split}
\end{align}
By Taylor expanding $\mu(u')$ around $u$, one eventually finds that the limit exists and is equal to
\begin{equation}
    \frac{1}{6}\left( 
        \frac{\mu'''(u)}{\mu'(u)} - \frac{3}{2}\left( \frac{\mu''(u)}{\mu'(u)} \right)^2
    \right)
        =:
    \frac{1}{6} S(\mu)(u),
\end{equation}
where $S(\mu)$ denotes the Schwarzian derivative of the function $\mu$.
From this it is clear that $\nord{T}$ is not primary, as
\begin{equation}
    \mathfrak{A}\chi\left( 
        \nord{T}_{\mathbb{M}_2}(f) \right)
        =
    \nord{T}_{\mathbb{M}_2}\left( \mathfrak{D}^{(2, 0)}\chi(f) \right)
        - \frac{1}{4 \pi} \frac{\hbar}{12} \left\langle S(\mu), f \right\rangle.
\end{equation}

Thus we recover the well-known result that, on Minkowski spacetime,
the quantum stress-energy tensor transforms almost as a primary of weight $(2, 0)$,
but is obstructed by an $\mathcal{O}(\hbar)$ correction proportional to
the Schwarzian derivative of the transformation.
We can now use our framework to generalise this result to any globally hyperbolic spacetime.
The failure for \eqref{eq:primary} to commute for
$\chi \in \mathrm{Hom}_{\mathsf{CFLoc}}(\mathscr{M}; \mathscr{N})$ is
\begin{equation}
    \label{eq:generic_schwarzian}
    \left\langle \widetilde{S}(\chi), f \right\rangle
        =
    \mathfrak{A} \chi \left( \nord{T}_\mathscr{M}(f) \right)
        -
    \nord{T}_\mathscr{N}\left( \mathfrak{D}^{(2, 0)}\chi(f) \right).
\end{equation}
Whilst the right hand side of this equation requires an arbitrary choice of
$H'  \in \mathrm{Had}(\mathcal{N})$ and
$\phi \in \mathfrak{E}(\mathcal{N})$,
$\widetilde{S}$ is actually independent of both of these choices.
As in Minkowski space, the classical term cancels and we are left to compute
\begin{equation*}
    \left\langle \widetilde{S}(\chi), f \right\rangle
        =
    \frac{\hbar}{2}\left[ 
        \left\langle
            \chi^*H' - H^\mathrm{sing}_\mathcal{M},
            T_\mathscr{M}(f)^{(2)}
        \right\rangle
            -
        \left\langle
            H' - H^\mathrm{sing}_\mathcal{N},
            T_\mathscr{N}\left( \mathfrak{D}^{(2, 0)}\chi (f) \right)^{(2)}
        \right\rangle
        \right],
\end{equation*}
where the choice of configuration $\phi$ has been suppressed as no remaining terms depend on it.
If we define $h' = H' - H^\mathrm{sing}_\mathcal{N}$,
then one can show that
$
    \big\langle
        h',
        T_\mathscr{N}\left( \mathfrak{D}^{(2, 0)}\chi (f) \right)^{(2)}
    \big\rangle
        =
    \left\langle
        \chi^*h', T_\mathscr{M}(f)^{(2)}
    \right\rangle,
$
which cancels with the smooth part of $\chi^*H'$,
and hence
\begin{equation}
    \label{eq:2D_Schwarzian}
    \widetilde{S}(\chi)
        =
    \frac{\hbar}{2} \iota_{\Delta}^*
    \left(
        (e_\ell \otimes e_\ell)
        \left( 
            \chi^* H^\mathrm{sing}_\mathcal{N} -
                   H^\mathrm{sing}_\mathcal{M}
        \right)
    \right),
\end{equation}
where we are again using the embedding
$\iota_\Delta: x \mapsto (x, x) \in \mathcal{M}^2$.
If we take $\chi: {\mathbb{M}_2} \to {\mathbb{M}_2}$ to be as above, we then see that $\widetilde{S}(\chi) = S(\mu)$, hence the original Schwarzian derivative is recovered.

Note that the right-hand side of \eqref{eq:generic_schwarzian}
can be defined for \textit{any} confomally covariant QFT.
A \textit{L\"uscher-Mack theorem} for pAQFT would then imply that,
as a distribution, this is equal to \eqref{eq:2D_Schwarzian}
up to multiplication by some constant,
which we could then interpret as the central charge of the theory.
We stress that such a result has not yet been found,
however we intend to return to this issue in future work.
\todo[color=todogreen, ]{
    Yeah, I didn't know how to end that last sentence.
    Is it even a good idea to say that an analoge LM theorem doesn't exist yet?
}

\section{Conclusion and Outlook}
In this paper we have shown how CFT fits into the framework of pAQFT. As an example application, we have proposed a fully Lorentzian treatment of 1+1 massless scalar field on the Minkowski cylinder and we have shown how the covariant choice of normal ordering of observables leads to correct commutation relations for Virasoro generators. We have also shown that a change of normal ordering leads to the appearance of an extra term $\zeta(-1)$, which is usually explained using the zeta regularisation trick. Here we derive this result completely rigorously, using the pAQFT framework.

In our future work we aim to study further how chiral algebras emerge naturally in our framework and how our approach relates to the standard AQFT treatment (local conformal nets) and the factorisation algebras approach \cite{costelloFactorizationAlgebrasQuantum2016}. We also plan to study OPEs and interacting theories.

\section{Acknowledgements}
We would like to thank Sebastiano Carpi, Chris Fewster and Robin Hillier for very inspiring discussions.

\appendix

\section{Method of Images}
\label{sec:images}

It is well-known that if a space $Y$ can be expressed as
the quotient of some other space $X$ under the action of some group
(satisfying certain properties),
then we can use this relation in order to build Green's functions on $Y$
out of Green's functions.
Here we give a coordinate-free account of some of the necessary results,
then explain how this method may be used to construct the
retarded/advanced propagators of the cylinder from those of Minkowski space.

\begin{lemma}
    \label{lem:extension_of_propagator}
    Let $P$ be a differential operator on a smooth manifold $\mathcal{M}$ and
    let $G: \mathfrak{D}(\mathcal{M}) \to \mathfrak{E}(\mathcal{M})$ be
    a fundamental solution to $P$, i.e. $P G f = G P f = f$ for all
    $f \in \mathfrak{D}(\mathcal{M})$.
    For $U \subset \mathcal{M}$ open, define
    \begin{equation}
        \mathcal{M} \setminus \mathrm{supp}_U \, G
            =
        \bigcup
        \left\{
            V \subset \mathcal{M} \text{ open}
                \, | \,
            \mathrm{supp} \, f \subset V
                \Rightarrow
            (Gf)|_U \equiv 0
        \right\}.
    \end{equation}
    Let $\phi \in \mathfrak{E}(\mathcal{M})$,
    if there exists an open cover
    $\bigcup_{\alpha \in \mathcal{A}} U_\alpha = \mathcal{M}$ such that
    $\mathrm{supp} \, \phi \cap \mathrm{supp}_{U_\alpha} \, G$ is compact,
    then one can define a function $G \phi \in \mathfrak{E}(\mathcal{M})$ such that
    $P G \phi = G P \phi = \phi$.
\end{lemma}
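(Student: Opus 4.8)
The plan is to extend $G$ to $\phi$ by a local construction glued across the cover $\{U_\alpha\}$; everything reduces to bookkeeping of distributional supports against the ``invisible from $U$'' region $\mathrm{supp}_U\,G$. First I would record three elementary facts. (i) $\mathcal{M}\setminus\mathrm{supp}_U\,G$ is itself good, i.e. $\mathrm{supp}\,f\subset\mathcal{M}\setminus\mathrm{supp}_U\,G$ implies $(Gf)|_U\equiv 0$: cover the compact set $\mathrm{supp}\,f$ by finitely many of the good opens in the defining union, split $f$ by a subordinate partition of unity, and use linearity of $G$. (ii) $U\subseteq\mathrm{supp}_U\,G$: if a good open $V$ met $U$, a nonzero $f\in\mathfrak{D}(\mathcal{M})$ supported in $V\cap U$ would satisfy $f=f|_U=(PGf)|_U=P\big((Gf)|_U\big)=0$, a contradiction. (iii) $U'\subseteq U\Rightarrow\mathrm{supp}_{U'}\,G\subseteq\mathrm{supp}_U\,G$, since a good open for $U$ is a fortiori good for $U'$; in particular $\mathrm{supp}_{U_\alpha\cap U_\beta}\,G\subseteq\mathrm{supp}_{U_\alpha}\,G\cap\mathrm{supp}_{U_\beta}\,G$.

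Next I would define $G\phi$. For each $\alpha$ the hypothesis makes $K_\alpha:=\mathrm{supp}\,\phi\cap\mathrm{supp}_{U_\alpha}\,G$ compact, so pick $\chi_\alpha\in\mathfrak{D}(\mathcal{M})$ with $\chi_\alpha\equiv 1$ on a neighbourhood of $K_\alpha$ and set $(G\phi)|_{U_\alpha}:=\big(G(\chi_\alpha\phi)\big)|_{U_\alpha}$, which is legitimate since $\chi_\alpha\phi\in\mathfrak{D}(\mathcal{M})$. Independence of the cutoff follows because $(\chi_\alpha-\chi'_\alpha)\phi$ has compact support contained in $\mathrm{supp}\,\phi$ and disjoint from $K_\alpha$, hence disjoint from $\mathrm{supp}_{U_\alpha}\,G$, so by (i) its $G$-image vanishes on $U_\alpha$. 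For gluing on $U_\alpha\cap U_\beta$, the function $(\chi_\alpha-\chi_\beta)\phi$ vanishes near $\mathrm{supp}\,\phi\cap\mathrm{supp}_{U_\alpha}\,G\cap\mathrm{supp}_{U_\beta}\,G$, so by (iii) its support misses $\mathrm{supp}_{U_\alpha\cap U_\beta}\,G$ and $G$ of it vanishes on $U_\alpha\cap U_\beta$; the local pieces therefore agree on overlaps and assemble into a well-defined $G\phi\in\mathfrak{E}(\mathcal{M})$.

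Finally the two identities. As $P$ is local, $(PG\phi)|_{U_\alpha}=\big(PG(\chi_\alpha\phi)\big)|_{U_\alpha}=(\chi_\alpha\phi)|_{U_\alpha}$, and by (ii) $\chi_\alpha\equiv 1$ on $U_\alpha\cap\mathrm{supp}\,\phi$, so $\chi_\alpha\phi=\phi$ on $U_\alpha$; hence $PG\phi=\phi$. For $GP\phi=\phi$, note $\mathrm{supp}(P\phi)\subseteq\mathrm{supp}\,\phi$, so $P\phi$ satisfies the same hypothesis and $\chi_\alpha$ is still admissible for it; then $(G(P\phi))|_{U_\alpha}=\big(G(\chi_\alpha P\phi)\big)|_{U_\alpha}$, and the correction $\chi_\alpha P\phi-P(\chi_\alpha\phi)=-[P,\chi_\alpha]\phi$ is compactly supported with support in $\mathrm{supp}(d\chi_\alpha)\cap\mathrm{supp}\,\phi$, which misses $\mathrm{supp}_{U_\alpha}\,G$; so by (i) its $G$-image vanishes on $U_\alpha$ and $(G(P\phi))|_{U_\alpha}=\big(GP(\chi_\alpha\phi)\big)|_{U_\alpha}=(\chi_\alpha\phi)|_{U_\alpha}=\phi|_{U_\alpha}$.

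No analysis beyond partitions of unity and locality of $P$ enters, so the only real work is arranging the cutoffs $\chi_\alpha$ to be simultaneously admissible for $\phi$, for $P\phi$, and for the commutator correction, and then tracking which supports miss which region; the one ingredient that makes the scheme reproduce $\phi$ rather than $\chi_\alpha\phi$ is fact (ii), $U\subseteq\mathrm{supp}_U\,G$, which I expect to be the step easiest to overlook.
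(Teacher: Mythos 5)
Your proof is correct and follows essentially the same route as the paper's: define $G\phi$ locally via cutoffs equal to $1$ near $\mathrm{supp}\,\phi\cap\mathrm{supp}_{U_\alpha}G$, glue using the monotonicity $\mathrm{supp}_{U_{\alpha\beta}}G\subseteq\mathrm{supp}_{U_\alpha}G\cap\mathrm{supp}_{U_\beta}G$, and recover $\phi$ from $\chi_\alpha\phi$ via $U_\alpha\subseteq\mathrm{supp}_{U_\alpha}G$. You are in fact slightly more careful than the paper in two places it leaves implicit — the partition-of-unity argument showing the union of good opens is itself good (your fact (i)), and the explicit treatment of the commutator $[P,\chi_\alpha]\phi$ in the $GP\phi=\phi$ direction.
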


\begin{proof}
    We claim that the local definitions
    \begin{equation*}
        G\phi|_{U_\alpha} := G(\rho_\alpha \phi)|_{U_\alpha},
    \end{equation*}
    where $\rho_\alpha \in \mathfrak{D}(\mathcal{M})$ such that
    $\rho_\alpha \equiv 1$ on $\mathrm{supp} \, \phi \cap \mathrm{supp}_{U_\alpha} \, G$
    can be glued together to form the desired map.
    Suppose $\alpha, \beta \in \mathcal{A}$ such that
    $U_{\alpha\beta} = U_\alpha \cap U_\beta \neq \emptyset$.
    One can quickly verify that
    $
        \mathrm{supp}_{U_{\alpha\beta}} \, G
            \subseteq
        \mathrm{supp}_{U_{\alpha}} \, G
            \cap
        \mathrm{supp}_{U_{\beta}} \, G,
    $
    hence $\rho_\alpha \phi |_{U_{\alpha\beta}} = \rho_\beta \phi |_{U_{\alpha\beta}}$.
    In particular this means that
    $
        \mathrm{supp} \, ((\rho_\alpha - \rho_\beta) \phi)
            \subset
        \mathcal{M} \setminus \mathrm{supp}_{U_{\alpha\beta}} \, G
    $
    and hence
    $G(\rho_\alpha \phi)|_{U_{\alpha\beta}} = G(\rho_\beta \phi)|_{U_{\alpha\beta}}$,
    thus $G\phi$ is a well-defined function.

    Next, to show that $PG\phi = GP\phi = \phi$,
    note that for every $x \in \mathcal{M}$ there must be a neighbourhood
    $U' \ni x$ such that $U' \subset \mathrm{supp}_{U'} \, G$,
    otherwise we could not have that $PGf = GPf = f$ even for
    $f \in \mathfrak{D}(\mathcal{M})$.
    As such, we may assume that the cover
    $\left\{ U_\alpha \right\}_{\alpha \in \mathcal{A}}$
    satisfies $U_\alpha \subset \mathrm{supp}_{U_\alpha} \, G$ for every $\alpha$.
    We then use the locality of differential operators, namely that
    $(P \psi)|_U = P|_{\mathfrak{E}(U)} \psi|_U$
    for any $\psi \in \mathfrak{E}(\mathcal{M})$,
    to see that
    $(PG\phi)|_{U_\alpha} = (\rho_\alpha \phi)|_{U_\alpha}$.
    As we have assumed $U_\alpha \subset \mathrm{supp}_{U_\alpha} \, G $,
    for any $x \in U_\alpha$ we must either have
    $x \in \mathrm{supp} \, \phi \cap \mathrm{supp}_{U_\alpha} \, G$,
    in which case $\rho_\alpha(x) = 1$
    or $\phi(x) = 0$.
    In both cases, we have $\rho_\alpha(x) \phi(x) = \phi(x)$,
    hence $(PG\phi)|_{U_\alpha} = \phi|_{U_\alpha}$.
    For the same reasons, we have that
    $(\rho_\alpha(P\phi))|_{U_\alpha} = (P(\rho_\alpha \phi))|_{U_\alpha}$
    and hence
    $(GP\phi)|_{U_\alpha} = \phi|_{U_\alpha}$ concluding the proof.
\end{proof}

\begin{theorem}[The Method of Images]
    \label{thm:method_of_images}
    Let $\pi: \widetilde{\mathcal{M}} \to \mathcal{M}$ be
    a regular covering of $\mathcal{M}$ by $\widetilde{\mathcal{M}}$.
    Further, let $P$ and $\widetilde{P}$ be a pair of differential operators for
    $\mathcal{M}$ and $\widetilde{\mathcal{M}}$ respectively, such that
    $\pi^* P = \widetilde{P} \pi^*$.
    Further, let $\widetilde{G}$ be a fundamental solution to $\widetilde{P}$ such that
    \begin{enumerate}
        \item   \label{itm:finite_intersection}
                There exists a covering
                $\bigcup_{\alpha \in \mathcal{A}} U_\alpha = \widetilde{\mathcal{M}}$
                such that, $\forall K \subset \mathcal{M}$ compact,
                $\pi^{-1}(K) \cap \mathrm{supp}_{U_\alpha} \, \widetilde{G}$ is compact,
        \item   \label{itm:deck_equivariance}
                $\forall \rho \in \mathrm{Aut}(\pi)$,
                $\rho^* \widetilde{G} = \widetilde{G} \rho^*$.
    \end{enumerate}
    Then there exists a fundamental solution $G$ for $P$ such that
    $\pi^* G = \widetilde{G} \pi^*$
\end{theorem}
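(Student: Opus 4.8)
The plan is to define $G$ by transporting a test function up the covering, applying the (extended) propagator $\widetilde G$ upstairs, and bringing the result back down; the content is that this is possible and that the descent is legitimate. So, given $f\in\mathfrak{D}(\mathcal{M})$, set $\phi_f:=\pi^*f\in\mathfrak{E}(\widetilde{\mathcal{M}})$. Its support is $\pi^{-1}(\mathrm{supp}\,f)$ with $\mathrm{supp}\,f$ compact, so hypothesis~\ref{itm:finite_intersection} furnishes exactly an open cover $\{U_\alpha\}$ of $\widetilde{\mathcal{M}}$ with $\mathrm{supp}\,\phi_f\cap\mathrm{supp}_{U_\alpha}\widetilde G$ compact for every $\alpha$, which is the hypothesis of \Cref{lem:extension_of_propagator}. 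Hence $\widetilde G\phi_f\in\mathfrak{E}(\widetilde{\mathcal{M}})$ is well defined and $\widetilde P\,\widetilde G\phi_f=\widetilde G\,\widetilde P\phi_f=\phi_f$.

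Next I would show $\widetilde G\phi_f$ is $\mathrm{Aut}(\pi)$-invariant, so that it descends along $\pi$. Fix $\sigma\in\mathrm{Aut}(\pi)$; since $\pi\circ\sigma=\pi$ we have $\sigma^*\phi_f=(\pi\circ\sigma)^*f=\phi_f$. From hypothesis~\ref{itm:deck_equivariance} one gets, first, $\sigma^*\widetilde G=\widetilde G\sigma^*$ on compactly supported functions, and, second --- unwinding the definition of $\mathrm{supp}_U$ in \Cref{lem:extension_of_propagator} --- that $\mathrm{supp}_{\sigma^{-1}(U)}\widetilde G=\sigma^{-1}\!\big(\mathrm{supp}_{U}\widetilde G\big)$ for every open $U$. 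Using also $\pi^{-1}(K)=\sigma^{-1}(\pi^{-1}(K))$, it follows that $\{\sigma^{-1}(U_\alpha)\}$ is again an admissible cover for $\phi_f$ and $\{\sigma^*\rho_\alpha\}$ an admissible family of cutoffs for it. Since the construction in \Cref{lem:extension_of_propagator} is independent of the choice of admissible cover and cutoffs (any two choices agree on a common refinement, by the gluing argument in that proof), applying $\sigma^*$ to the local pieces $\widetilde G(\rho_\alpha\phi_f)|_{U_\alpha}$ reproduces the pieces built from $\{\sigma^{-1}(U_\alpha)\}$, $\{\sigma^*\rho_\alpha\}$ and $\sigma^*\phi_f=\phi_f$, whence $\sigma^*(\widetilde G\phi_f)=\widetilde G\phi_f$. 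I expect this bookkeeping --- checking that the extension commutes with the deck action even though $\widetilde G$ is only assumed equivariant on $\mathfrak{D}$ --- to be the main obstacle.

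The remainder is formal. As $\pi$ is a regular covering, it realizes $\mathcal{M}\cong\widetilde{\mathcal{M}}/\mathrm{Aut}(\pi)$, and $\pi^*$ is an isomorphism of $\mathfrak{E}(\mathcal{M})$ onto the $\mathrm{Aut}(\pi)$-invariant subspace of $\mathfrak{E}(\widetilde{\mathcal{M}})$; in particular $\pi^*$ is injective. Hence there is a unique $Gf\in\mathfrak{E}(\mathcal{M})$ with $\pi^*Gf=\widetilde G\,\pi^*f$, and $f\mapsto Gf$ is linear with $\pi^*G=\widetilde G\pi^*$ by construction (note $Pf\in\mathfrak{D}(\mathcal{M})$, so $GPf$ is likewise defined by the same recipe). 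Finally, using $\pi^*P=\widetilde P\pi^*$ together with the conclusion of \Cref{lem:extension_of_propagator} for $\phi_f$,
\begin{gather*}
    \pi^*(PGf)=\widetilde P\,\pi^*Gf=\widetilde P\,\widetilde G\,\pi^*f=\pi^*f,\\
    \pi^*(GPf)=\widetilde G\,\pi^*(Pf)=\widetilde G\,\widetilde P\,\pi^*f=\pi^*f,
\end{gather*}
and injectivity of $\pi^*$ gives $PGf=GPf=f$. Thus $G$ is a fundamental solution for $P$ with $\pi^*G=\widetilde G\pi^*$, as claimed; specializing to $\widetilde{\mathcal{M}}=\mathbb{M}_2$, $\mathcal{M}=\mathcal{M}_{\mathrm{cyl}}$ recovers the formula \eqref{eq:loose_cyl_prop_def} for $E^{R/A}_{\mathrm{cyl}}$.
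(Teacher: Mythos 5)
Your proposal is correct and follows essentially the same route as the paper's proof: extend $\widetilde G$ to $\pi^*f$ via \Cref{lem:extension_of_propagator}, show the result is $\mathrm{Aut}(\pi)$-invariant so it descends to some $Gf$, and then use injectivity of $\pi^*$ together with $\pi^*P=\widetilde P\pi^*$ to conclude $PGf=GPf=f$. The one place you go beyond the paper is the equivariance step: the paper simply applies hypothesis~\ref{itm:deck_equivariance} to the non-compactly-supported function $\pi^*f$, whereas you correctly observe that the hypothesis is only given on $\mathfrak{D}(\widetilde{\mathcal{M}})$ and verify that the gluing construction of \Cref{lem:extension_of_propagator} commutes with the deck action (via $\mathrm{supp}_{\sigma^{-1}(U)}\widetilde G=\sigma^{-1}(\mathrm{supp}_U\widetilde G)$ and independence of the admissible cover), which is a legitimate gap-filling refinement rather than a different argument.
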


\begin{proof}
    Because $\mathrm{supp} \,  \pi^* f = \pi^{-1}(\mathrm{supp} \, f)$,
    condition \ref{itm:finite_intersection} tells us that
    $\widetilde{G} \pi^* f$ is well defined and satisfies
    $
        \widetilde{P} \widetilde{G} \pi^* f
            =
        \widetilde{G} \widetilde{P} \pi^* f
            = 
        \pi^* f
    $

    Next, \ref{itm:deck_equivariance} ensures that for any $\rho \in \mathrm{Aut}(\pi)$
    \begin{equation}
        \rho^* \widetilde{G} \pi^* f
            =
        \widetilde{G} \rho^* \pi^* f
            =
        \widetilde{G} (\pi \circ \rho)^* f
            =
        \widetilde{G} \pi^* f,
    \end{equation}
    i.e. $\widetilde{G} \pi^* f$ is a $\mathrm{Aut}(\pi)$ invariant,
    and hence can be expressed as $\pi^* F$ for some $F \in \mathfrak{E}(\mathcal{M})$.
    As our choice of $f$ was arbitrary, this defines a map $f \mapsto F$,
    which is clearly linear.
    As such we denote it $G: \mathfrak{D}(\mathcal{M}) \to \mathfrak{E}(\mathcal{M})$.

    To show that $G$ is then a fundamental solution for $P$ is a
    fairly mechanical process:
    \begin{equation}
        \pi^* P G f
            =
        \widetilde{P} \pi^* G f
            =
        \widetilde{P} \widetilde{G} \pi^* f
            =
        \pi^* f.
    \end{equation}
    From the injectivity of $\pi^*$, we may then conclude $P G f = f$.
    Next, using the same trick
    \begin{equation}
        \pi^* G P f
            =
        \widetilde{G} \pi^* P f
            =
        \widetilde{G} \widetilde{P} \pi^* f
            =
        \pi^* f,
    \end{equation}
    which again shows $G P f = f$.
\end{proof}

The following lemma shows how this applies to the equations of motion
of a locally covariant (classical) field theory.

\begin{lemma}
    Let $\mathcal{L}: \mathfrak{D} \Rightarrow \mathfrak{F}_\mathrm{loc}$
    be a natural Lagrangian such that,
    for any $\mathcal{M} \in \mathsf{Loc}$, $\phi \in \mathfrak{E}(\mathcal{M})$,
    $
        \left\langle S_\mathcal{M}''[\phi], h \otimes g \right\rangle
            =
        \left\langle P_\mathcal{M}[\phi] h, g \right\rangle
    $
    where $P_\mathcal{M}[\phi]$ is some differential operator.
    If $\widetilde{\mathcal{M}}, \mathcal{M} \in \mathsf{Loc}$ and
    $\pi: \widetilde{\mathcal{M}} \to \mathcal{M}$ is such that for every
    $x \in \widetilde{\mathcal{M}}$,
    there exists a subspacetime\footnote{
        i.e. the inclusion $\mathcal{N} \hookrightarrow \widetilde{\mathcal{M}}$
        is an admissible embedding of spacetimes
    }
    $\mathcal{N} \ni x$ such that
    $\pi|_\mathcal{N}$ is an admissible embedding, then
    \begin{equation}
        \pi^* P_\mathcal{M}[\phi] = P_{\widetilde{\mathcal{M}}}[\pi^* \phi] \pi^*.
    \end{equation}
\end{lemma}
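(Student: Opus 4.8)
The plan is to reduce everything to the intertwining relation between Euler--Lagrange operators under genuine admissible embeddings, which follows from the naturality of $\mathcal{L}$, and then to glue this relation along the cover of $\widetilde{\mathcal{M}}$ by the subspacetimes supplied in the hypothesis.

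First I would establish that for any admissible embedding $\chi \colon \mathcal{A} \to \mathcal{B}$ of spacetimes and any $\psi \in \mathfrak{E}(\mathcal{B})$ one has $\chi^* P_\mathcal{B}[\psi] = P_\mathcal{A}[\chi^*\psi]\,\chi^*$ as linear maps $\mathfrak{E}(\mathcal{B}) \to \mathfrak{E}(\mathcal{A})$. To see this, I would differentiate the Euler--Lagrange covariance identity \eqref{eq:Euler_Lagrange_covariance}, which reads $\langle S'_\mathcal{A}[\chi^*\psi], g \rangle = \langle S'_\mathcal{B}[\psi], \chi_* g \rangle$ for $g \in \mathfrak{D}(\mathcal{A})$, along $\psi \mapsto \psi + \epsilon k$ with $k \in \mathfrak{E}(\mathcal{B})$. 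The left-hand side produces $\langle S''_\mathcal{A}[\chi^*\psi], (\chi^* k) \otimes g \rangle = \langle P_\mathcal{A}[\chi^*\psi]\,\chi^* k, g \rangle_\mathcal{A}$, while the right-hand side produces $\langle S''_\mathcal{B}[\psi], k \otimes \chi_* g \rangle = \langle P_\mathcal{B}[\psi]\,k, \chi_* g \rangle_\mathcal{B}$, which in turn equals $\langle \chi^*(P_\mathcal{B}[\psi]\,k), g \rangle_\mathcal{A}$ because $\mathrm{supp}(\chi_* g) \subseteq \chi(\mathcal{A})$ and $\chi^*\mathrm{dVol}_\mathcal{B} = \mathrm{dVol}_\mathcal{A}$. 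As $g$ is arbitrary the two operators coincide. As is already flagged just after \eqref{eq:conformally_invariant_proof_1}, the only delicate point is that this differentiation under the pairing is legitimate; it is, by the same argument used there, a consequence of the locality of $\mathcal{L}_\mathcal{B}$ rather than of a naive exchange of limits. Applied to the inclusion $\iota_\mathcal{N} \colon \mathcal{N} \hookrightarrow \widetilde{\mathcal{M}}$ of any subspacetime, which is admissible, this yields in particular the ``locality of $P$'' statement $\iota_\mathcal{N}^* P_{\widetilde{\mathcal{M}}}[\psi] = P_\mathcal{N}[\iota_\mathcal{N}^*\psi]\,\iota_\mathcal{N}^*$.

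Next I would fix $x \in \widetilde{\mathcal{M}}$, choose a subspacetime $\mathcal{N} \ni x$ with $\pi|_\mathcal{N} = \pi \circ \iota_\mathcal{N} \colon \mathcal{N} \to \mathcal{M}$ admissible, and apply the intertwining relation twice: once to $\chi = \pi|_\mathcal{N}$, giving
\[
(\pi|_\mathcal{N})^* P_\mathcal{M}[\phi] = P_\mathcal{N}\big[(\pi|_\mathcal{N})^*\phi\big]\,(\pi|_\mathcal{N})^*,
\]
and once to $\chi = \iota_\mathcal{N}$ with $\psi = \pi^*\phi$, giving $\iota_\mathcal{N}^* P_{\widetilde{\mathcal{M}}}[\pi^*\phi] = P_\mathcal{N}[\iota_\mathcal{N}^*\pi^*\phi]\,\iota_\mathcal{N}^*$. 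Since $(\pi|_\mathcal{N})^* = \iota_\mathcal{N}^*\circ\pi^*$, and hence $(\pi|_\mathcal{N})^*\phi = \iota_\mathcal{N}^*\pi^*\phi$, combining the two identities gives $\iota_\mathcal{N}^*\big(\pi^* P_\mathcal{M}[\phi]\big) = \iota_\mathcal{N}^*\big(P_{\widetilde{\mathcal{M}}}[\pi^*\phi]\,\pi^*\big)$; that is, the two operators $\mathfrak{E}(\mathcal{M}) \to \mathfrak{E}(\widetilde{\mathcal{M}})$ agree after restriction to the open set $\mathcal{N}$.

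Finally, as $x$ ranges over $\widetilde{\mathcal{M}}$ the corresponding sets $\mathcal{N}$ form an open cover, and two smooth functions agreeing on every member of an open cover are equal, so $\pi^* P_\mathcal{M}[\phi] = P_{\widetilde{\mathcal{M}}}[\pi^*\phi]\,\pi^*$ on all of $\widetilde{\mathcal{M}}$, as claimed. The main obstacle is the first step: the algebra is routine, but the justification that one may commute the $\epsilon$-derivative with the distributional pairing — exactly as in the conformal computation culminating in \eqref{eq:conformally_invariant_proof_2} — must be carried out by reducing to the locality of the generalised Lagrangian, not by differentiating naively under the integral.
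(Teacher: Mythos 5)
Your proof is correct and follows essentially the same route as the paper's: derive the intertwining relation $\chi^* P_\mathcal{B}[\psi] = P_\mathcal{A}[\chi^*\psi]\,\chi^*$ for admissible embeddings from naturality, apply it both to $\pi|_\mathcal{N} = \pi\circ\iota_\mathcal{N}$ and to the inclusion $\iota_\mathcal{N}$, and glue the resulting local identities over the cover of $\widetilde{\mathcal{M}}$. The only difference is that you spell out the derivation of the intertwining relation by differentiating \eqref{eq:Euler_Lagrange_covariance}, whereas the paper simply recalls it as a consequence of naturality; that added detail is sound and consistent with the computation in \eqref{eq:conformally_invariant_proof_2}.
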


\begin{proof}
    Recall that the naturality of $\mathcal{L}$ implies that,
    for every admissible embedding $\chi: \mathcal{M} \hookrightarrow \mathcal{N}$,
    $\chi^* P_\mathcal{N}[\phi] = P_{\mathcal{M}}[\chi^* \phi] \chi^*$.
    Applying this to
    and the composed map
    $\pi|_\mathcal{N} = \pi \circ \iota$
    and then to the inclusion
    $\iota: \mathcal{N} \hookrightarrow \mathcal{M}$, we have,
    for $\phi \in \mathfrak{E}(\mathcal{M})$ and $g \in \mathfrak{D}(\mathcal{M})$
    \begin{align*}
        (\pi^* (P_\mathcal{M}[\phi] g))|_\mathcal{N}
            &=
        P_\mathcal{N}[(\pi^* \phi)|_\mathcal{N}] (\pi^* g)|_\mathcal{N}\\
            &=
        (P_{\widetilde{\mathcal{M}}}[\pi^* \phi] \, \pi^* g)|_{\mathcal{N}}.
    \end{align*}
    Given that $\widetilde{\mathcal{M}}$ is covered by
    $\mathcal{N} \subseteq \widetilde{\mathcal{M}}$
    for which this holds,
    we may conclude
    $\pi^*(P_\mathcal{M}[\phi] \, g) = P_{\widetilde{\mathcal{M}}}[\pi^* \phi] \pi^* g$
    as desired.
\end{proof}

Given that the equations of motion are related in this way,
we can now show that the propagators are as well:
For any $f \in \mathfrak{D}(\mathcal{M}_\mathrm{cyl})$,
$\mathrm{supp} \, \pi^* f$ is clearly timelike compact,
i.e. there exists a pair of Cauchy surfaces $\Sigma_\pm \in {\mathbb{M}_2}$ such that
$
    \mathrm{supp} \, \pi^* f
        \subseteq
    \mathscr{J}^+(\Sigma_-) \cap \mathscr{J}^-(\Sigma_+).
$
From this it follows that $\mathrm{supp} \, \pi^* f$
is both past-compact and future-compact.
The support properties \eqref{eq:propagator_supports} of $E^{R/A}$ imply that
$\mathrm{supp}_U \, E^{R/A} = \mathcal{J}^{\mp}(\overline{U})$,
where $\overline{U}$ is the closure of $U$.

Next, as the symmetries of the covering map
$(x, t) \mapsto (x + 2\pi n, t)$ are translations,
and $E^{R/A}$ are both equivariant under translations,
we have also satisfied condition \ref{itm:deck_equivariance}.
Applying \Cref{thm:method_of_images}, we thus have a pair of propagators
$
    E^{R/A}_\mathrm{cyl}:
        \mathfrak{D}(\mathcal{M}_{\mathrm{cyl}})
            \to
        \mathfrak{E}(\mathcal{M}_{\mathrm{cyl}})
$
which satisfy
\begin{equation}
    \pi^* E^{R/A}_\mathrm{cyl} = E^{R/A} \pi^*.
\end{equation}
It is straightforward to verify that these satisfy the support criteria
\eqref{eq:propagator_supports},
hence they are \textit{the} retarded/advanced propagators for the cylinder.

\section{Closure Proofs for Microcausal Functionals}
\label{sec:closure_proofs}

\begin{prop}
    \label{prop:peierls_closure}
    Let $\mathcal{M}$ be a globally hyperbolic spacetime,
    let $S$ be a quadratic action on $\mathcal{M}$,
    then
    $
        \{ \cdot, \cdot \}_S:
        \mathfrak{F}_{\mu c}(\mathcal{M}) \times 
        \mathfrak{F}_{\mu c}(\mathcal{M}) \to
        \mathfrak{F}_{\mu c}(\mathcal{M})
    $.
\end{prop}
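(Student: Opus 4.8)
The plan is to establish the three things needed for $\{\mathcal{F},\mathcal{G}\}_S$ to lie in $\mathfrak{F}_{\mu c}(\mathcal{M})$: that the formula \eqref{eq:peierls_bracket} is defined for every $\phi\in\mathfrak{E}(\mathcal{M})$; that the resulting functional is Bastiani smooth with compactly supported functional derivatives; and that each such derivative satisfies the spectral condition \eqref{eq:spectral_condition}. Since $S$ is quadratic, $E$ is the Pauli--Jordan function of a fixed normally hyperbolic operator $P$, and throughout I will use the standard description of its singularities: every $(x,y;\xi,\eta)\in\mathrm{WF}(E)$ has $\xi$ and $\eta$ both null and non-zero, with $\xi$ and $-\eta$ cotangent to a common null geodesic, so that $\xi\in\overline{V}_\pm$ if and only if $-\eta\in\overline{V}_\pm$.

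For well-definedness at a fixed $\phi$, I would write $\langle\mathcal{F}^{(1)}[\phi],E\mathcal{G}^{(1)}[\phi]\rangle=\langle E,\mathcal{F}^{(1)}[\phi]\otimes\mathcal{G}^{(1)}[\phi]\rangle$ and check the H\"ormander pairing criterion: a failure would supply $(x,y;\xi,\eta)\in\mathrm{WF}(E)$ with $(x,y;-\xi,-\eta)$ in the wavefront set of the tensor product, hence, since $\xi,\eta\neq0$, with $(x;-\xi)\in\mathrm{WF}(\mathcal{F}^{(1)}[\phi])$ and $(y;-\eta)\in\mathrm{WF}(\mathcal{G}^{(1)}[\phi])$; but $-\xi$ and $-\eta$ are non-zero null covectors, hence lie in $\overline{V}_+\cup\overline{V}_-$, contradicting the $n=1$ case of \eqref{eq:spectral_condition} for $\mathcal{F}$ and $\mathcal{G}$. (If an earlier lemma in this appendix already records this pairing, I would simply cite it.) Differentiating \eqref{eq:peierls_bracket} $n$ times and using the Leibniz rule gives
\begin{equation*}
    \{\mathcal{F},\mathcal{G}\}_S^{(n)}[\phi]
        =
    \sum_{k=0}^{n}\binom{n}{k}\,C_E\big(\mathcal{F}^{(k+1)}[\phi],\,\mathcal{G}^{(n-k+1)}[\phi]\big),
\end{equation*}
where $C_E(u,v)$ denotes the distribution on $\mathcal{M}^n$ obtained by contracting one argument of $u$ and one of $v$ against the two arguments of $E$ and leaving the remaining $n$ arguments free. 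Bastiani smoothness follows from that of $\mathcal{F}$ and $\mathcal{G}$ together with sequential continuity of the pairing with $E$ on the relevant wavefront-constrained spaces, and $\{\mathcal{F},\mathcal{G}\}_S^{(n)}[\phi]$ is compactly supported because its free arguments are confined to the compact projections of $\mathrm{supp}\,\mathcal{F}^{(k+1)}[\phi]$ and $\mathrm{supp}\,\mathcal{G}^{(n-k+1)}[\phi]$.

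The substantive step is bounding $\mathrm{WF}\big(C_E(\mathcal{F}^{(k+1)}[\phi],\mathcal{G}^{(n-k+1)}[\phi])\big)$. I would realise $C_E$ by forming the tensor product $E\otimes\mathcal{F}^{(k+1)}[\phi]\otimes\mathcal{G}^{(n-k+1)}[\phi]$, pulling it back to the partial diagonal that identifies the first argument of $E$ with the contracted argument of $\mathcal{F}^{(k+1)}[\phi]$ and the second with the contracted argument of $\mathcal{G}^{(n-k+1)}[\phi]$, and then integrating out those two variables. H\"ormander's theorems on tensor products, pullbacks and pushforwards apply: the pullback is admissible because a conormal covector to the partial diagonal would force a non-zero null covector into $\mathrm{WF}(\mathcal{F}^{(k+1)}[\phi])$ or $\mathrm{WF}(\mathcal{G}^{(n-k+1)}[\phi])$ with all other entries zero, which the spectral condition forbids (using $0\in\overline{V}_+\cap\overline{V}_-$), and the pushforward is legitimate since all supports are compact. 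Tracing covectors through these operations, a wavefront point of $C_E(\dots)$ at $(x_1,\dots,x_k,y_1,\dots,y_{n-k})$ carries momenta $(\zeta_1,\dots,\zeta_k,\omega_1,\dots,\omega_{n-k})\neq0$ for which there exist $z,z'$ and covectors $\xi,\xi'$ with $(z,z';\xi,\xi')\in\mathrm{WF}(E)$, $(z,x_1,\dots,x_k;-\xi,\zeta_1,\dots,\zeta_k)\in\mathrm{WF}(\mathcal{F}^{(k+1)}[\phi])$ and $(z',y_1,\dots,y_{n-k};-\xi',\omega_1,\dots,\omega_{n-k})\in\mathrm{WF}(\mathcal{G}^{(n-k+1)}[\phi])$; the only degenerate alternative is $\xi=\xi'=0$, which by the spectral condition on $\mathcal{F}$ or on $\mathcal{G}$ alone already places $(\zeta_*,\omega_*)$ outside $\overline{V}_+^n\cup\overline{V}_-^n$. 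In the main case, suppose $(\zeta_*,\omega_*)\in\overline{V}_+^n$, so every $\zeta_i$ and every $\omega_j$ lies in $\overline{V}_+$. Since $\xi$ is null and non-zero it lies in exactly one of $\overline{V}_+,\overline{V}_-$. If $\xi\in\overline{V}_-$ then $-\xi\in\overline{V}_+$ and $(-\xi,\zeta_1,\dots,\zeta_k)\in\overline{V}_+^{k+1}$, contradicting microcausality of $\mathcal{F}$; if $\xi\in\overline{V}_+$ then $-\xi'\in\overline{V}_+$ by the bicharacteristic property of $\mathrm{WF}(E)$, so $(-\xi',\omega_1,\dots,\omega_{n-k})\in\overline{V}_+^{n-k+1}$, contradicting microcausality of $\mathcal{G}$. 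The mirror argument excludes $(\zeta_*,\omega_*)\in\overline{V}_-^n$. Hence every summand, and thus $\{\mathcal{F},\mathcal{G}\}_S^{(n)}[\phi]$, satisfies \eqref{eq:spectral_condition}.

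The main obstacle is precisely this last paragraph: carrying the wavefront bookkeeping faithfully through the tensor-product/pullback/pushforward chain, treating the degenerate stratum where a contracted momentum vanishes, and combining the geometry of $\mathrm{WF}(E)$ — that the two momenta of a wavefront element of $E$ occupy opposite halves of the light cone — with the microcausality of the two factors so as to conclude that the bracket's wavefront set is again mixed in the sense of \eqref{eq:spectral_condition}.
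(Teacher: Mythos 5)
Your argument is correct and follows essentially the same route as the paper's proof: decompose $\{\mathcal{F},\mathcal{G}\}_S^{(n)}$ via the Leibniz rule into contractions of $\mathcal{F}^{(k+1)}\otimes\mathcal{G}^{(n-k+1)}$ against $E$, verify the H\"ormander criterion for the contraction, and exclude $\overline{V}_\pm^n$ by combining the fact that the two momenta of any wavefront element of $E$ lie in opposite halves of the light cone with the microcausality of the two factors. The only difference is packaging --- you realise the contraction as tensor product, pullback to a partial diagonal and pushforward, whereas the paper invokes H\"ormander's Theorem 8.2.13 directly --- but these yield the same wavefront estimate and the same case analysis.
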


\begin{proof}
    We shall only prove this fact for $\mathcal{M} \subseteq \mathbb{R}^d$,
    but it is possible to `patch together'
    the results over an atlas for a more general $\mathcal{M}$.
    We begin by rephrasing Theorem 8.2.13 of
    \cite{hormanderAnalysisLinearPartial2015}:

    Suppose that
    $X \subseteq \mathbb{R}^n$, and
    $Y \subseteq \mathbb{R}^m$.
    Let $K \in \mathfrak{D}'(X \times Y)$
    and $u \in \mathfrak{E}'(Y)$.
    Theorem 8.2.13 allows us to define a new distribution $K \circ u$,
    with integral kernel
    \begin{equation}
        (K \circ u)(x)
            =
        \int_Y
            K(x, y) u(y)
        \, \mathrm{d}y,
    \end{equation}
    and estimate its wavefront set.
    Namely, $K \circ u$ exists whenever
    $\mathrm{WF}'(K)_Y \cap \mathrm{WF}(u) = \emptyset$, where
    \begin{equation*}
        \mathrm{WF}'(K)_Y
            :=
        \left\{
            (y; \eta) \in T^*Y \setminus \underline{0}_Y
                \, | \,
            \exists x \in X,
            (x, y; 0, -\eta) \in \mathrm{WF}(K)
        \right\},
    \end{equation*}
    is the wavefront set of $K$ \textit{twisted w.r.t.} $Y$
    (and $\underline{0}_Y$ denotes the zero section of $T^*Y$).

    Moreover, whenever $K \circ u$ does exist, we have
    \begin{equation}
        \label{eq:hormander_composition_estimate}
        \mathrm{WF}(K \circ u)
            \subseteq
        \left\{
            (x, \xi) \in T^*X
                \, | \,
            \exists (y, \eta) \in \mathrm{WF}(u) \cup \underline{0}_Y,
            (x, y; \xi, \eta) \in \mathrm{WF}(K)
        \right\}
    \end{equation}

    Let $\mathcal{F}, \mathcal{G} \in \mathfrak{F}_{\mu c}(\mathcal{M})$,
    the $m^\text{th}$ functional derivative of their Peierls bracket can be written,
    ommitting the dependence on a field configuration
    $\phi \in \mathfrak{E}(\mathcal{M})$,
    as follows:
    \begin{equation}
        \label{eq:peierls_bracket_derivative}
        \left( \left\{ \mathcal{F}, \mathcal{G}\right\}_S \right)^{(m)}
            =
        \sum_{\left\{ J_1, J_2 \right\} \in P_m}
            \left[
                \left(
                    \mathcal{F}^{(|J_1| + 1)} \otimes
                    \mathcal{G}^{(|J_2| + 1)}
                \right)
                \circ E
            \right]
                s_{J_1, J_2},
    \end{equation}
    where the sum runs over partitions $J_1 \sqcup J_2 = \{1, \ldots, m\}$,
    $\circ$ is the operation described above, and
    $
        s_{J_1, J_2}:
            \mathfrak{D}(\mathcal{M}^m) \to
            \mathfrak{D}(\mathcal{M}^m)
    $
    is an operation permuting the variables of a given test function
    according to a permutation $\sigma_{J_1, J_2} \in S_m$ such that
    $i \in J_1 \Rightarrow \sigma_{J_1, J_2}(i) \leq |J_1|$.
    (As $\mathcal{F}^{(m)}$ is permutation invariant as a distribution,
    this is a sufficient characterisation of $\sigma_{J_1, J_2}$.)
    In fact, as we are only testing for microcausality,
    the only property we need of these distributions is that, for $0 \leq k \leq m$,
    the wavefront set of
    $
        \left( \mathcal{F}^{(k + 1)} \otimes \mathcal{G}^{(m - k + 1)} \right)
            \circ
        E
    $
    is disjoint from
    the cones $\overline{V}_\pm^m$, defined by
    \begin{equation}
        \label{eq:lightcones}
        \overline{V}_+^m
            =
        \left\{ 
            (x_1, \ldots, x_m; \xi_1, \ldots, \xi_m) \in T^*\mathcal{M}
                \, | \,
            \xi_i \in \overline{V}_+(x_i) \forall i \leq m
        \right\},
    \end{equation}
    where $\overline{V}_+(x)$ denotes
    the closed future/past lightcone in $T^*_x \mathcal{M}$,
    and similar for $\overline{V}_-^m$.

    We set $X = \mathcal{M}^n$, $Y = \mathcal{M}^2$,
    $K = \mathcal{F}^{(k + 1)} \otimes \mathcal{G}^{(m - k + 1)}$, and $u = E$.
    Using \cite[Theorem 8.2.9]{hormanderAnalysisLinearPartial2015},
    we can estimate
    $\mathrm{WF}(\mathcal{F}^{(k + 1)} \otimes \mathcal{G}^{(m - k + 1)})$ by
    \begin{equation}
        \label{eq:tensor_prod_wavefront_estimate}
        \mathrm{WF}\left(
            \mathcal{F}^{(k + 1)} \otimes
            \mathcal{G}^{(m - k + 1)}
        \right)
            \subseteq
        \left( 
            \mathrm{WF}(\mathcal{F}^{(k + 1)})
                \cup
            0_{\mathcal{M}^{k + 1}}
        \right) \times
        \left(
            \mathrm{WF}(\mathcal{G}^{(m - k + 1)})
                \cup
            0_{\mathcal{M}^{m - k + 1}}
        \right),
    \end{equation}
    where $0_\mathcal{M} = \mathcal{M} \times \{0\} \subseteq T^*\mathcal{M}$
    denotes the zero section of $T^*\mathcal{M}$ \textit{etc}.
    Let
    $
        (y_\mathcal{F}, y_\mathcal{G};
        \eta_\mathcal{F}, \eta_\mathcal{G}) \in
        T^*Y \setminus 0_Y.
    $

    The wavefront set of the causal propagator,
    as may be found in \cite[\S4.4.1]{rejznerPerturbativeAlgebraicQuantum2016},
    can be written as
    \begin{equation}
        \mathrm{WF}(E)
            =
        \left\{
            (x, y; \xi, \eta) \in T^* \mathcal{M}^2
                \,|\,
            (x, \xi) \in \overline{V}_+ \cup \overline{V}_-,
            (x, \xi) \sim (y, -\eta)
        \right\},
    \end{equation}
    where the relation $(x, \xi) \sim (y, \eta)$ means there exists a null geodesic
    $\gamma: [0, 1] \to \mathcal{M}$
    connecting $x$ to $y$ and such that the parallel transport of $\xi$
    along $\gamma$ is $\eta$.
    However, for our purposes, we can use the much simpler estimate
    \begin{equation}
        \mathrm{WF}(E)
            \subset
        (V_+ \times V_-)
            \cup
        (V_- \times V_+),
    \end{equation}
    i.e. if $(x, y; \xi, \eta) \in \mathrm{WF}(E)$ then either
    $(x, \xi) \in V_+$ and $(y, \eta) \in V_-$,
    \textit{or}
    $(x, \xi) \in V_-$ and $(y, \eta) \in V_+$.

    Suppose there exists
    $\underline{x}_\mathcal{F} \in \mathcal{M}^k$ and
    $\underline{x}_\mathcal{G} \in \mathcal{M}^{n - k}$ such that
    \begin{equation*}
        (
            \underline{x}_\mathcal{F},     y_\mathcal{F},
            \underline{x}_\mathcal{G},     y_\mathcal{G};
            0            , -\eta_\mathcal{F},
            0            , -\eta_\mathcal{G}
        ) \in
        \mathrm{WF}'\left(
            \mathcal{F}^{(k + 1)} \otimes
            \mathcal{G}^{(m - k + 1)}
        \right)_Y,
    \end{equation*}
    then this estimate indicates that either $\eta_\mathcal{F} = 0$,
    or $(y_\mathcal{F}; \eta_\mathcal{F}) \notin \overline{V}_\pm$.
    The same is also true of $(y_\mathcal{G}; \eta_\mathcal{G})$,
    though at least one of $\eta_\mathcal{F}$ and $\eta_\mathcal{G}$
    must be non-zero.
    Thus we see that the intersection of
    $\mathrm{WF}(\mathcal{F}^{(k + 1)} \otimes \mathcal{G}^{(m - k + 1)})$
    with $\mathrm{WF}(E)$ must be trivial, as
    $
        (y_\mathcal{F}, y_\mathcal{G}; \eta_\mathcal{F}, \eta_\mathcal{G}) \in 
        \mathrm{WF}(E) \Rightarrow
        (y_\mathcal{F}; \eta_\mathcal{F}),
        (y_\mathcal{G}; \eta_\mathcal{G}) \in
        (\overline{V}_+ \cup \overline{V}_-) \setminus 0_\mathcal{M}.
    $

    Thus we can apply theorem 8.2.13 and conclude not only that
    $(\mathcal{F}^{(k + 1)} \otimes \mathcal{G}^{(m - k + 1)}) \circ E$
    is well defined, but also that its wavefront set has trivial intersection with both
    $\overline{V}_+^m$ and $\overline{V}_-^m$.
    To see this, let
    $
        (
                \underline{x}_\mathcal{F}, \underline{x}_\mathcal{G};
                \underline{\xi}_\mathcal{F}, \underline{\xi}_\mathcal{G}
        ) \in
        \overline{V}_+^m.
    $
    Any
    $
        (y_\mathcal{F}, y_\mathcal{G}; \eta_\mathcal{F}, \eta_\mathcal{G}) \in
        \mathrm{WF}(E) \cup 0_Y
    $
    necessarily belongs also to either $ \overline{V}_+ \times \overline{V}_- $
    or $ \overline{V}_- \times \overline{V}_+ $.
    Suppose it is the former, then, by microcausality,
    $
        (\underline{x}_\mathcal{G}, y_\mathcal{G};
         \underline{\xi}_\mathcal{G}, -\eta_\mathcal{G}) \notin
        \mathrm{WF}(\mathcal{G}^{(m - k + 1)}).
    $
    Recalling \eqref{eq:tensor_prod_wavefront_estimate},
    this means there is only a chance that
    $
        (
            \underline{x}_\mathcal{F}, y_\mathcal{F},
            \underline{x}_\mathcal{G}, y_\mathcal{G};
            \underline{\xi}_\mathcal{F}, \eta_\mathcal{F},
            \underline{\xi}_\mathcal{G}, \eta_\mathcal{G}
        ) \in
        \mathrm{WF}(\mathcal{F}^{(k + 1)} \otimes \mathcal{G}^{(m - k + 1)})
    $
    if $\underline{\xi}_\mathcal{G}$ and $\eta_\mathcal{G}$ are both zero.
    However, this still fails, as $\eta_\mathcal{G} = 0 \Rightarrow \eta_\mathcal{F} = 0$,
    which in turn implies that
    $
        (\underline{x}_\mathcal{F}, y_\mathcal{F};
         \underline{\xi}_\mathcal{F}, -\eta_\mathcal{F}) \notin
        \mathrm{WF}(\mathcal{F}^{(k + 1)}).
    $
    The wavefront set estimate from 8.2.13 then allows us to conclude that
    $
        (
                \underline{x}_\mathcal{F}, \underline{x}_\mathcal{G};
                \underline{\xi}_\mathcal{F}, \underline{\xi}_\mathcal{G}
        ) \notin
        \mathrm{WF}\left( 
            (\mathcal{F}^{(k + 1)} \otimes \mathcal{G}^{(m - k + 1)})
            \circ E
        \right).
    $
    Applying the corresponding argument to $\Gamma_-^m$,
    we see that all derivatives of $\left\{ \mathcal{F}, \mathcal{G}\right\}_S$
    satisfy the requisite wavefront set condition to be declared microcausal.
\end{proof}

\begin{prop}
    \label{prop:quantum_closure}
    Let $\mathcal{M}$ be a globally hyperbolic spacetime,
    $P$ a normally hyperbolic operator on $\mathcal{M}$,
    and $W = \tfrac{i}{2}E + H$ a Hadamard distribution for $P$,
    then $\mathfrak{F}_{\mu c}(\mathcal{M})[[\hbar]]$ is closed under $\star_H$.
\end{prop}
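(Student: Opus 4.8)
The plan is to reduce the claim entirely to the wavefront-set bookkeeping already carried out for the Peierls bracket in \Cref{prop:peierls_closure}, using that the one genuinely new object — the Hadamard distribution $W$ — has a wavefront set that is not merely contained in, but of the same qualitative shape as, that of $E$. Since $\star_H$ takes values in $\mathbb{C}[[\hbar]]$, convergence is a non-issue: expanding \eqref{eq:star_H_prod_defn} gives
\begin{equation*}
    (\mathcal{F} \star_H \mathcal{G})[\phi]
        =
    \sum_{n \geq 0}
        \frac{\hbar^n}{n!}
        \left\langle
            W^{\otimes n},
            \mathcal{F}^{(n)}[\phi] \otimes \mathcal{G}^{(n)}[\phi]
        \right\rangle,
\end{equation*}
where the pairing contracts the $n$ ``left'' arguments of $W^{\otimes n}$ against those of $\mathcal{F}^{(n)}[\phi]$ and the $n$ ``right'' ones against those of $\mathcal{G}^{(n)}[\phi]$. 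The coefficient of each power of $\hbar$ is a single such term, and $\mathfrak{F}_{\mu c}(\mathcal{M})$ is already closed under the pointwise product (which disposes of the $n = 0$ summand), so it suffices to show that for every fixed $n$ the functional $\phi \mapsto \langle W^{\otimes n}, \mathcal{F}^{(n)}[\phi] \otimes \mathcal{G}^{(n)}[\phi]\rangle$ is well defined and microcausal. As in \Cref{prop:peierls_closure} I would prove this for $\mathcal{M} \subseteq \mathbb{R}^d$ and then patch over an atlas.

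Next I would compute the $m$-th functional derivative of this functional. Since $W^{\otimes n}$ does not depend on $\phi$, the Leibniz rule presents it as a \emph{finite} sum, indexed by partitions $J_1 \sqcup J_2 = \{1, \dots, m\}$, of suitably symmetrised compositions $\big( \mathcal{F}^{(n+|J_1|)}[\phi] \otimes \mathcal{G}^{(n+|J_2|)}[\phi] \big) \circ W^{\otimes n}$, in exact analogy with \eqref{eq:peierls_bracket_derivative}, except that the composition $\circ$ now contracts all $2n$ arguments of $W^{\otimes n}$. So the task reduces to proving that each such composition exists — by H\"ormander's Theorem 8.2.13 in the form recalled in the proof of \Cref{prop:peierls_closure} — and that its wavefront set avoids $\overline{V}_+^m \cup \overline{V}_-^m$ (notation \eqref{eq:lightcones}). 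The two estimates I would feed in are the tensor-product bound \eqref{eq:tensor_prod_wavefront_estimate} applied to $\mathcal{F}^{(n+|J_1|)}[\phi] \otimes \mathcal{G}^{(n+|J_2|)}[\phi]$, and the inclusion $\mathrm{WF}(W^{\otimes n}) \subseteq \prod_{i=1}^n \big( \mathrm{WF}(W) \cup \underline{0} \big)$ together with property \textbf{H0}, which gives $\mathrm{WF}(W) \subseteq \overline{V}_+ \times \overline{V}_-$.

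From here the argument should run line for line as in \Cref{prop:peierls_closure}, with a single simplification: because $W$ contributes only covector pairs in $\overline{V}_+ \times \overline{V}_-$, one of the two cases distinguished there for $E$ never occurs. For well-definedness, any element of $\mathrm{WF}(W^{\otimes n})$ has all its $\mathcal{F}$-side components in $\overline{V}_+$ and all its $\mathcal{G}$-side components in $\overline{V}_-$; were such an element (with the sign-flip on the contracted slots) to lie in the wavefront set of $\mathcal{F}^{(n+|J_1|)}[\phi] \otimes \mathcal{G}^{(n+|J_2|)}[\phi]$, the tensor-product bound would force every covector of one of the two factors into a single cone $\overline{V}_\pm$, contradicting its microcausality — so the twisted wavefront set of the tensor product and $\mathrm{WF}(W^{\otimes n})$ are disjoint and the composition exists. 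For the wavefront bound, a hypothetical point of $\overline{V}_+^m$ in the wavefront set of the composition would, by \eqref{eq:hormander_composition_estimate}, arrive with contracted covectors from $\mathrm{WF}(W^{\otimes n}) \cup \underline{0}$; after the sign-flip these place every covector of $\mathcal{G}^{(n+|J_2|)}[\phi]$ into $\overline{V}_+$ (the zero section included, since $0 \in \overline{V}_+$), so microcausality of $\mathcal{G}$ forces them all to vanish, which throws the entire nonzero covector onto $\mathcal{F}^{(n+|J_1|)}[\phi]$, all of whose covectors then lie in $\overline{V}_+$ — contradicting microcausality of $\mathcal{F}$; the case of $\overline{V}_-^m$ is the mirror image. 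Joint continuity in $(\phi, h_1 \otimes \cdots \otimes h_m)$ is inherited from that of the $\mathcal{F}^{(k)}$, $\mathcal{G}^{(l)}$ and of the tensoring and composition maps, so each $\hbar$-coefficient is Bastiani smooth and microcausal, which is what was needed. The one thing I expect to demand care is the clean bookkeeping of the symmetrisations and of the $2n$-fold contraction for general $n$; conceptually nothing beyond the $n = 1$ situation of \Cref{prop:peierls_closure} arises, but the indexing is fiddlier, so I would nail down the notation before proceeding.
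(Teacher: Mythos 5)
Your proposal is correct and follows essentially the same route as the paper's proof: Leibniz expansion of each $\hbar$-coefficient into partition-indexed compositions with $W^{\otimes n}$, H\"ormander's Theorem 8.2.13 together with the tensor-product wavefront estimate, and the one-sided wavefront set $\mathrm{WF}(W) \subseteq \overline{V}_+ \times \overline{V}_-$ from \textbf{H0} to secure both existence of the compositions and microcausality of the result. The only nuance worth flagging is that the paper stresses (in a footnote) that this one-sidedness is not merely a simplification of the $E$-based argument of \Cref{prop:peierls_closure} but is indispensable once $n \geq 2$: with the two-sided wavefront set of $E$, different tensor factors could place covectors in opposite cones on the $\mathcal{F}$-side, the twisted wavefront sets would no longer be disjoint, and the argument would break down.
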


\begin{proof}
    Let $\mathcal{F}, \mathcal{G} \in \mathfrak{F}_{\mu c}(\mathcal{M})$,
    the $m^\text{th}$ derivative of the $\mathcal{O}(\hbar^n)$ term of
    $\mathcal{F} \star_H \mathcal{G}$ is,
    \begin{equation}
        \label{eq:star_prod_derivative}
        \left(
            \tfrac{d^n}{d \hbar^n}
                (\mathcal{F} \star_H \mathcal{G})
            |_{\hbar = 0}
        \right)^{(m)}
            =
        \sum_{\left\{ J_1, J_2 \right\} \in P_m}
            \left[
                \left(
                    \mathcal{F}^{(|J_1| + n)} \otimes
                    \mathcal{G}^{(|J_2| + n)}
                \right)
                \circ W^{\otimes n}
            \right]
                s_{J_1, J_2},
    \end{equation}
    where all notation is the same as in the previous proof,
    and the contraction $\circ$ is computed in the expected way, namely
    \begin{multline*}
        \left[
            \left( 
                \mathcal{F}^{(|J_1| + n)} \otimes
                \mathcal{G}^{(|J_2| + n)}
            \right)
            \circ W^{\otimes n}
        \right](x_1, \ldots x_m)
            \\ =
        \int_{\mathcal{M}^{2n}} \Big[
            \mathcal{F}^{(|J_1| + n)}(x_1, \ldots x_{|J_1|}, y_1, \ldots y_n)
            \mathcal{G}^{(|J_2| + n)}(
                x_{|J_1| + 1}, \ldots x_{m}, y_{n + 1}, \ldots y_{2n}
            )
            \\
        W(y_1, y_{n + 1}) \cdots W(y_n, y_{2n})
        \Big] \, \mathrm{d}y_1 \cdots \, \mathrm{d}y_{2n}.
    \end{multline*}
    In order to apply theorem 8.2.13 to
    $
        \left( \mathcal{F}^{(k + n)} \otimes \mathcal{G}^{(m - k + n)} \right) \circ
        (\chi W)^{\otimes n}
    $
    for $0 \leq k \leq m$, we must show that
    \begin{equation*}
        \mathrm{WF}'\left( 
            \mathcal{F}^{(k + n)} \otimes \mathcal{G}^{(m - k + n)}
        \right)_Y
            \cap
        \mathrm{WF}((\chi W)^{\otimes n})
            =
        \emptyset,
    \end{equation*}
    where $Y = \mathcal{M}^{2n}$ comprises the $y_i$ variables in the above integral.
    The justification of this proceeds similarly to before.
    Firstly, we note the following estimate, obtained by repeated application of
    8.2.9 from \cite{hormanderAnalysisLinearPartial2015}
    \begin{equation*}
        \mathrm{WF}(W^{\otimes n})
            \subseteq
        \left( 
            \mathrm{WF}(W) \cup 0_{\mathcal{M}^2}
        \right)^n
            \setminus
        0_{\mathcal{M}^{2n}}.
    \end{equation*}
    Hence, if
    $
        (y_1, \ldots, y_{2n}; \eta_1, \ldots, \eta_{2n})
            \in
        \mathrm{WF}((\chi W)^{\otimes n}),
    $
    then for each $i \in \{1, \ldots, n\}$, either
    $\eta_i$ and $\eta_{n + i}$ are both zero,
    or $(y_i; \eta_i) \in \overline{V}_+$ and $(y_{n + i}; \eta_{n + i}) \in \overline{V}_-$,
    moreover, $\eta_i$ must be non-zero for at least one $i$.
    Denote
    $\underline{y}_\mathcal{F} = (y_i)_{i = 1}^n$ and
    $\underline{y}_\mathcal{G} = (y_i)_{i = n + 1}^{2n}$,
    and similarly
    $\underline{\eta}_\mathcal{F}$ and
    $\underline{\eta}_\mathcal{G}$.
    Then we have that
    $(\underline{y}_\mathcal{F}; \underline{\eta}_\mathcal{F}) \in \overline{V}_+^n$, and
    $(\underline{y}_\mathcal{G}; \underline{\eta}_\mathcal{G}) \in \overline{V}_-^n$,
    hence neither can
    $
        (
            \underline{x}_\mathcal{F},  \underline{y   }_\mathcal{F};
            0                        , -\underline{\eta}_\mathcal{F}
        )
    $
    belong to $\mathrm{WF}(\mathcal{F}^{(k + n)})$,
    for any $\underline{x}_\mathcal{F} \in \mathcal{M}^k$, nor
    $
        (
            \underline{x}_\mathcal{G},  \underline{y   }_\mathcal{G};
            0                        , -\underline{\eta}_\mathcal{G}
        )
    $
    belong to $\mathrm{WF}(\mathcal{G}^{(m - k + n)})$,
    for any $\underline{x}_\mathcal{F} \in \mathcal{M}^{m - k}$.
    \footnote{
        Note that here we required the tighter restriction on $\mathrm{WF}(W)$
        relative to $E$:
        if we had covectors $(y_i; \eta_i) \in \overline{V}_+$ and $(y_j; \eta_j) \in \overline{V}_-$,
        for $i, j \in \{1, \ldots, n\}$,
        then it might be possible to find
        $
            (
                \underline{x}_\mathcal{F},  \underline{y   }_\mathcal{F};
                0                        , -\underline{\eta}_\mathcal{F}
            )
                \in
            \mathrm{WF}(\mathcal{F}^{(k + n)}),
        $
        hence the above intersection would in general be non-empty,
        preventing us from proceeding any further.
    }

    Now we must show that 8.2.13 precludes $\overline{V}_\pm^m$ from
    $
        \mathrm{WF}(
        ( \mathcal{F}^{(k + n)} \otimes \mathcal{G}^{(m - k + n)} )
        \circ (\chi W)^{\otimes n})
    $.
    Let
    $
        (
            \underline{x}_\mathcal{F},  \underline{x}_\mathcal{G};
            \underline{\xi}_\mathcal{F},  \underline{\xi}_\mathcal{G}
        )
            \in
        \overline{V}_+^{m}
    $
    and
    $
        (
            \underline{y   }_\mathcal{F}, \underline{y   }_\mathcal{G};
            \underline{\eta}_\mathcal{F}, \underline{\eta}_\mathcal{G}
        )        
            \in
        \left( 
            \mathrm{WF}((\chi W)^{\otimes n})
                \cup
            0_Y
        \right).
    $
    Then, just as before
    $
        (
            \underline{y}_\mathcal{F}, \underline{y}_\mathcal{G};
            \underline{\eta}_\mathcal{F}, \underline{\eta}_\mathcal{G}
        )
            \in
        \overline{V}_+^n \times \overline{V}_-^n
            \Rightarrow
        (
            \underline{x}_\mathcal{G},  \underline{y}_\mathcal{G};
            \underline{\xi}_\mathcal{G},  -\underline{\eta}_\mathcal{G}
        )
            \in
        \overline{V}_+^{m - k + n}.
    $
    Similarly to the final part of the proof of \Cref{prop:peierls_closure},
    one can then show $\underline{\xi}_\mathcal{F}$ cannot be zero,
    hence
    \begin{equation*}
        (
            \underline{x  }_\mathcal{F},   \underline{y   }_\mathcal{F},
            \underline{x  }_\mathcal{G},   \underline{y   }_\mathcal{G};
            \underline{\xi}_\mathcal{F},  -\underline{\eta}_\mathcal{F},
            \underline{\xi}_\mathcal{G},  -\underline{\eta}_\mathcal{G}
        )
            \notin
        \mathrm{WF}(\mathcal{F}^{(k + n)} \otimes \mathcal{G}^{(m - k + n)}),
    \end{equation*}
    whence \eqref{eq:hormander_composition_estimate} allows us to conclude
    \begin{equation*}
        (
            \underline{x}_\mathcal{F},  \underline{x}_\mathcal{G};
            \underline{\xi}_\mathcal{F},  \underline{\xi}_\mathcal{G}
        )
            \notin
        \mathrm{WF}\left( 
            ( \mathcal{F}^{(k + n)} \otimes \mathcal{G}^{(m - k + n)} )
            \circ (\chi W)^{\otimes n}
        \right).
    \end{equation*}
    To carry out the analogous argument for $\overline{V}_-^m$,
    one instead starts with the observation that
    \begin{align*}
        (
            \underline{x}_\mathcal{F},  \underline{x}_\mathcal{G};
            \underline{\xi}_\mathcal{F},  \underline{\xi}_\mathcal{G}
        )
            \in
        \overline{V}_-^{m} 
            &\text{ and }
        (
            \underline{y   }_\mathcal{F}, \underline{y   }_\mathcal{G};
            \underline{\eta}_\mathcal{F}, \underline{\eta}_\mathcal{G}
        )        
            \in
        \left( 
            \mathrm{WF}((\chi W)^{\otimes n})
                \cup
            0_Y
        \right)
            \\
            &\Rightarrow
        (
            \underline{x}_\mathcal{F},  \underline{y}_\mathcal{F};
            \underline{\xi}_\mathcal{F},  -\underline{\eta}_\mathcal{F}
        )
            \in
        \overline{V}_-^{k + n}
    \end{align*}
    and proceeds accordingly.

    This proves
    \begin{equation*}
        \mathrm{WF}\left(
            \left(
                \tfrac{d^n}{d\hbar^n}(
                    \mathcal{F} \star_H \mathcal{G}
                )|_{\hbar = 0}
            \right)^{(m)}
        \right)
            \cap
        \overline{V}_\pm^m
            =
        \emptyset,
    \end{equation*}
    thus each coefficient of $\mathcal{F} \star_H \mathcal{G}$ is a microcausal functional.
\end{proof}

\section{Squaring the Propagator}
\label{sec:cauchy_prod_prop}

In this section, we explain in detail why the expression \eqref{eq:squared_propagator}
for $[(\partial_u \otimes \partial_u) W_\mathrm{cyl}]^2$ is valid.
To simplify notation, we shall write
$(\partial_u \otimes \partial_u) W_\mathrm{cyl} =: w$,
and denote by $w_N$ the truncation of the series defining
$w$ to the first $N$ terms.

Theorem 8.2.4 of \cite{hormanderAnalysisLinearPartial2015} gives
the necessary conditions for the square of a distribution to exist.
However, it does not provide a convenient integral kernel with which to
evaluate such products on test functions.
A good starting point to this end may be found on page 526 of
\cite{chazarainIntroductionTheoryLinear1982},
where it is stated that for any pair of cones
$\Gamma_a, \Gamma_b \subseteq \dot{T}^*\mathcal{M}$
such that $\Gamma_a \cap - \Gamma_b = \emptyset$,
the multiplication of distributions, considered as a map
$
    \mathfrak{D}'_{\Gamma_a}(\mathcal{M}) \times
    \mathfrak{D}'_{\Gamma_b}(\mathcal{M}) \to
    \mathfrak{D}'(\mathcal{M})
$
is continuous in each of its arguments.
In other words, if we take some fixed
$u \in \mathfrak{D}'_{\Gamma_a}(\mathcal{M})$,
and a sequence $v_n$ converging to $v$ in the sense of
$\mathfrak{D}'_{\Gamma_b}(\mathcal{M})$,
then $u \cdot v_n$ weakly converges to $u \cdot v$,
and \textit{vice versa} for a sequence in $\mathfrak{D}'_{\Gamma_a}(\mathcal{M})$.

Let $\Gamma \subseteq \dot{T}^*\mathcal{M}_\mathrm{cyl}^2$ be a cone
which both contains $\mathrm{WF}(w)$ and satisfies
$\Gamma \cap - \Gamma = \emptyset$.
We can show that the smooth distributions $w_N$ obtained by
truncating the sum appearing in \eqref{eq:cylinder_2_point} converge
to $w$ in $\mathfrak{D}'_\Gamma$.

Firstly, we shall pick an open subset
$U \subset \mathcal{M}_{\mathrm{cyl}}^2$
which can be identified with an open subset of $\mathbb{R}^4$.
We shall only prove convergence for the restriction of $w_N$ to $U$,
though the full result follows from this with little trouble.
Following \cite[Definition 8.2.2]{hormanderAnalysisLinearPartial2015}
for sequential convergence, we must show that, for all $\chi \in \mathfrak{D}(U)$
and conic $V \subseteq \mathbb{R}^4$ such that
$\mathrm{supp} \, \chi \times V \cap \Gamma = \emptyset$,
\begin{equation*}
    \sup_{\xi \in V}
        \left|
            (1 + |\xi|)^k
            \left(
                \widehat{\chi w}(\xi) - \widehat{\chi w_N}(\xi)
            \right)
        \right|
    \to 0
    \text{ as } N \to \infty.
\end{equation*}

If we choose our coordinates for $U$ appropriately,
we can express this Fourier transform as
\begin{equation}
    \label{eq:convergence_integral}
    \widehat{\chi w}(\xi) - \widehat{\chi w_N}(\xi)
        =
    \sum_{n = N + 1}^\infty
        n \int_U \chi(x) e^{-in (\underline{u}, x)} e^{-i(\xi, x)} \, \mathrm{d}x,
\end{equation}
where $\underline{u} = (1, 0, -1, 0)$ is a constant vector.
If we set $F(x) := -(\underline{u}, x)$,
then each integral appearing in \eqref{eq:convergence_integral} can be expressed as
$T_\chi(n, \xi)$ using the notation in 
\cite[\S 4.3.2]{barQuantumFieldTheory2009}.
One can then show that the conditions are met for the stronger estimate
of corollary 2 from the same source to apply, i.e. for any $k \in \mathbb{N}$
\begin{equation*}
    |T_\chi(n, \xi)|
    \leq C_{\chi, V, k} (1 + n + |\xi|)^{-2k}
    \leq C'_{\chi, V, k} (1 + n)^{-k} (1 + |\xi|)^{-k},
\end{equation*}
for some appropriate choice of positive constants.
This allows us to uniformly bound the original expression in $\xi$ as
\begin{equation*}
    \sup_{\xi \in V}
        \left|
            (1 + |\xi|)^k
            \left(
                \widehat{\chi w}(\xi) - \widehat{\chi w_N}(\xi)
            \right)
        \right|
    \leq
    C'_{\chi, V, k} \sum_{n = N + 1}^\infty
    (1 + n)^{1 - k}.
\end{equation*}
For $k \geq 3$, this establishes the convergence desired.
For $k = 1, 2$, we simply pick a stronger bound for $T_\chi$.

Thus we can write, for $f \in \mathfrak{D}(\mathcal{M}_\mathrm{cyl}^2)$
\begin{equation*}
    \left\langle
        w^2,
        f
    \right\rangle
        =
    \lim_{N \to \infty}
    \left\langle
        w_N \cdot w,
        f
    \right\rangle,
\end{equation*}
which allows us to bring all summation outside of
the integrals arising from the duality pairing.
Noting that $w_N$ is a smooth function for all finite $N$,
we can hence evaluate this pairing directly as
\begin{align*}
    \left\langle
        w^2,
        f
    \right\rangle
        &=
    \lim_{N \to \infty}
    \sum_{m = 0}^\infty m
    \int_{\mathcal{M}_\mathrm{cyl}^2}
        e^{-im(u - u')}
        \left[
            \sum_{n = 0}^N n
            e^{-in(u - u')}
            f(u, v, u', v')
        \right]
    \, \mathrm{dVol}^2 \\
        &=
    \sum_{n = 0}^\infty
    \sum_{m = 0}^\infty nm
    \int_{\mathcal{M}_\mathrm{cyl}^2}
        e^{-i(n + m)(u - u')}
        f(u, v, u', v')
    \, \mathrm{dVol}^2,
\end{align*}
where, \textit{a priori}, the sum over $m$ must be performed first.

As $f$ is smooth, the integral is rapidly decaying as a function of $n + m$,
hence the sum is absolutely convergent.
Rearranging the double sum accordingly,
it is then clear that the sequence of partial sums
\begin{equation}
    w^2_N(u, v, u', v')
        :=
    \sum_{k = 0}^N \sum_{l = 0}^k
    l(k - l) e^{-ik(u - u')}
\end{equation}
converges to $w^2$ in the weak topology of $\mathfrak{D}'(\mathcal{M}_\mathrm{cyl}^2)$.

\bibliographystyle{alpha}
\bibliography{VOAs_pAQFT}

\end{document}